\newsavebox{\accentbox}
\theoremstyle{definition}
\newtheorem{definition}{Definition}[section]
\theoremstyle{plain}
\newtheorem{proposition}[definition]{Proposition}
\newtheorem{corollary}[definition]{Corollary}
\newtheorem{lemma}{Lemma}[section]
\newtheorem{remark}{Remark}[section]
\newtheorem{assumption}{Assumption}
\def\EE{{\mathbb E}}
\def\RR{{\mathbb R}}
\def\II{{\mathds 1}}
\def\dd{{\mathrm d}}
\newcommand{\Cov}{\operatorname{cov}}
\DeclareMathOperator{\diag}{diag}
\DeclareMathOperator{\Tr}{tr}
\newcommand{\iu}{\mathrm{i}}
\newcommand{\ie}{i.e.\xspace}
\begin{document}
    \title{Spectral analysis for the inference of noisy Hawkes processes}

    \author{
      Anna Bonnet\thanks{Sorbonne Université and Université Paris Cité, CNRS, Laboratoire de Probabilités, Statistique et Modélisation, F-75005 Paris, France},
      Félix Cheysson\thanks{Gustave Eiffel University, CNRS, Laboratoire d'Analyse et de Mathématiques Appliquées, 77420 Champs-sur-Marne, France},
      Miguel Martinez Herrera\footnotemark[1],
      Maxime Sangnier\footnotemark[1]
    }
    
    \maketitle

    \begin{abstract}

        Classic estimation methods for Hawkes processes rely on the assumption that observed
        event times are indeed a realisation of a Hawkes process, without considering any potential
        perturbation of the model. However, in practice, observations are often altered by some noise,
        the form of which depends on the context. It is then required to model the alteration mechanism in order to infer accurately such a noisy Hawkes process. While several models exist, we
        consider, in this work, the observations to be the indistinguishable union of event times coming
        from a Hawkes process and from an independent Poisson process. Since standard inference
        methods (such as maximum likelihood or Expectation-Maximisation) are either unworkable
        or numerically prohibitive in this context, we propose an estimation procedure based on the
        spectral analysis of second order properties of the noisy Hawkes process. Novel results include sufficient conditions for identifiability of the ensuing statistical model with exponential
        interaction functions for both univariate and bivariate processes, along with consistency and asymptotic normality guarantees of our estimator in the univariate case. Although we mainly focus
        on the exponential scenario, other types of kernels are investigated and discussed. A new
        estimator based on maximising the spectral log-likelihood is then described, and its behaviour
        is numerically illustrated on both synthetic data and neuronal data. Besides being free from knowing the source of
        each observed time (Hawkes or Poisson process), the proposed estimator is shown to perform
        accurately in estimating both processes.

      \textit{Keywords:} Hawkes process, point process, spectral analysis, parametric estimation, superposition, identifiability.
    \end{abstract}

\section{Introduction}
     
       Hawkes processes, introduced in \cite{Hawkes1971}, are a class of point processes that have been originally used to model self-exciting phenomena and more recently other types of past-dependent behaviours.
   Their fields of applications are wide and include for instance seismology \citep{Ogata1988, Ogata1998}, neuroscience \citep{Chornoboy1988, Lambert2018}, criminology \citep{Olinde2020}, finance \citep{Embrechts2011, Bacry2015} and biology \citep{Gupta2018}, to mention a few.
         Consequently, there has been a deep focus on estimation techniques for Hawkes processes. Among them, let us mention maximum likelihood approaches \citep{Ogata1978, Ozaki1979, Guo2018}, methods of moments \citep{DaFonseca2013}, least-squares contrast minimisation \citep{Reynaud2014,Bacry2020}, Expectation-Maximisation (EM) procedures \citep{Lewis2011}, and methods using approximations through autoregressive models \citep{Kirchner2017}.
        
        All of these methods are based on the assumption that the history of the process has been accurately observed, although this information is partial or noised in many contexts, in particular due to measurement or detection errors.
        Different models, described notably in \cite{Lund2000}, have been proposed to handle such errors for spatial point processes, with inference methods based on approximating the likelihood depending on each noise scenario.
            The first scenario, called \textit{displacement}, is when the event times are observed with a shift. If deconvolution methods to recover the unnoised process are standard approaches for simpler point processes such as Poisson processes (see for instance \cite{Antoniadis2006,Bonnet2022} or the review by \cite{Hohage2016}), the literature for Hawkes processes remains scarce and consists of the work of \cite{Trouleau2019} where the event times are observed with a delay and the work of \cite{Deutsch2020} where the shift follows a Gaussian distribution. The latter also explores the framework where some event times are undetected, which is similar to that studied in \cite{mei2019}. This setting can either be referred to as \textit{thinning} when the observations are randomly missing, or \textit{censoring} when complete regions are unobserved. The last scenario, called \textit{Superposition of ghost points} by \cite{Lund2000}, is the focus of this paper and describes situations where additional points are coming from an external point process, in our case a Poisson process.   A real-world application that motivated this work comes from spike trains analysis in neuroscience: the membrane potential, which is a continuous signal, is recorded and when it exceeds a certain threshold, one considers that an event time, called a spike, occurred. However, since the original signal is noised, it is possible to detect spikes that do not correspond to real events.
          Regarding inference of such a noisy process, let us highlight that exact likelihood approaches are intractable due to the unknown origin of each occurrence (Hawkes process or noise process) while methods based on inferring this missing information, for instance Expectation-Maximisation algorithms, are computationally too demanding.
               
    Inspired by the work of \cite{Cheysson2022} when the event times of a Hawkes process are aggregated, we turn to spectral analysis (Section~\ref{sec:spectral_analysis}) to propose a novel estimation procedure that allows to infer the noisy Hawkes process.
    On the way, we present a general characterisation of the Bartlett spectrum of the superposition of two independent processes (Section~\ref{sec:superposition})
    and identifiability results of the statistical model in the univariate (Section~\ref{sec:univariate_noisy_hawkes})
    -- which, combined with a recent result by \cite{Yang2024}, guarantees consistency and asymptotic normality of the proposed estimator --
    and the bivariate (Section~\ref{sec:dim2}) settings.
    Inference in these two settings is numerically illustrated in Section~\ref{sec:numerical_results}.
    
    Let us highlight that our analysis reveals situations of identifiability or non-identifiability of the models.
    In particular, identifiability conditions require to have prior information about the parameters, which may appear unrealistic in practice.
    For instance, for a univariate noisy process, one of the four parameters must be known for the model to be identifiable.
    This appears to be possible in specific applications, such as detecting people who are tested positive for a disease with a given detection method, for which the number of false positive occurrences could be known.
    In the bivariate setting, one must have some information on the support of the interaction matrix.
    Let us highlight that, except if one knows the support of the interaction matrix, there is no theoretical argument to ensure that an observed process lies in an identifiable model.
    However, we provide in Section~\ref{sec:numerical_results} numerical investigations regarding support identification.
    Although they are not theoretically grounded, we show that their relevance is supported by empirical evidence.

    Before starting, we present the mathematical setting (Section~\ref{sec:setting})
    and review the references related to spectral approaches (Section~\ref{sec:related_works}).

      \subsection{Mathematical setting}
      \label{sec:setting}
        For a given point process $N$ on $\mathbb R$ with natural filtration $(\mathcal F_t)_{t \in \mathbb R}$, we say that a nonnegative $\mathcal F_t$-progressively measurable process $(\lambda_t)_{t \in \mathbb R}$ is a $\mathcal F_t$-intensity of $N$ if, for all intervals $(a,b]$, 
        \begin{equation*}
            \mathbb E \bigl[ N\bigl((a,b]\bigr) \,\big|\, \mathcal F_a \bigr] = \mathbb E \left[ \int_a^b \lambda_t \,\dd t \:\middle|\: \mathcal F_a \right] \quad \text{a.s.}
        \end{equation*}
      
        Let $H = (H_1, \ldots, H_d)$ be a stationary multivariate Hawkes process on $\RR$,
        \((\mathcal H_t)_{t \in \RR}\) its natural filtration,
        such that \(H\) is defined by the following $\mathcal H_t$-
        intensity functions $\lambda_i^H$ ($i \in \{1, \dots, d\}$):
        for all $t \in \RR$,
        \begin{equation}\label{eq:hawkes_intensity}
          \lambda_i^H(t) = \mu_i + \sum_{j=1}^{d}\int_{-\infty}^{t}{h_{ij}(t-s) \, H_j(\dd s)} = \mu_i + \sum_{j=1}^{d}\sum_{T^{H_j}_k \leq t}{h_{ij}(t  - T_k^{H_j})}\,,
        \end{equation}
        where $\mu_i > 0$ is the baseline intensity of process $H_i$,
        $h_{ij} : \RR \to \RR_{\ge 0}$ is the interaction or kernel function describing the effect of process $H_j$ on process $H_i$,
        and $(T_k^{H_j})_{k\ge1}$ denotes the event times of $H_j$.
        
        By defining the matrix $S = (\|h_{ij}\|_1)_{1 \le i, j \le d}$,
        where
        \[\|h_{ij}\|_1 = \int_{-\infty}^{+\infty}{h_{ij}(t) \, \dd t}\,,\]
        the stationarity condition of $H$ reduces to controlling the spectral radius of $S$: $\rho(S)<1$ \citep{Bremaud1996}. 
        
        The goal of this paper is to study a noisy version of the Hawkes process where the sequences of event times $(T_k^{H_1})_{k\ge1}$, \dots, $(T_k^{H_d})_{k\ge1}$ of $H$ are contaminated by the event times from another process $P$, with associated filtration $(\mathcal P_t)_{t \in \mathbb R}$.
        Since the latter process is aimed at modeling an external noise mechanism due to errors in the detection of event times,
        it is naturally assumed that each subprocess of $H$ is perturbed uniformly with the same level of noise,
        such that $P$ is chosen to be a multivariate Poisson process with same intensity across subprocesses.
        Formally, let $P = (P_1, \ldots, P_d)$ be a collection of \(d\) independent homogeneous Poisson processes with same intensity \(\lambda_0 > 0\),
        independent from \(H\),
        the event times of which are noted
        $(T_k^{P_i})_{k\ge1}$ ($i \in \{1, \dots, d\}$).
        We then consider the point process $N = (N_1, \ldots, N_d)$, with associated filtration $(\mathcal F_t)_{t \in \RR}$, defined as the superposition of $H$ and $P$ (Definition~\ref{def:superposition}):
        the sequence of event times $(T_k^{N_i})_{k\ge1}$ of $N_i$ ($i \in \{1, \dots, d\}$) is the ordered union of $(T_k^{H_i})_{k\ge1}$ and $(T_k^{P_i})_{k\ge1}$.
        
        Throughout this paper we will refer to $N$ as the noisy Hawkes process,
        and it will be assumed that event times of $N$ are observed without knowing their origin (Hawkes or Poisson process).
        Our goal is to estimate both processes (\ie\ the baselines $\mu_i$, the kernels $h_{ij}$ and the shared Poisson intensity $\lambda_0$) from the sole observation of $(T_k^{N_i})_{k\geq1}$, $i \in \{1, \dots, d\}$.
        
        Inference procedures for point processes often leverage the intensity functions in order to devise maximum likelihood and method of moment estimators \citep{Ogata1988, Ozaki1979, DaFonseca2013}.
        Here, the process of interest $N$ being a superposition of two independent point processes,
        the $(\mathcal H_t \vee \mathcal P_t)$-intensity of each subprocess $N_i$, where $\mathcal H_t \vee \mathcal P_t$ stands for the $\sigma$-algebra generated by $\mathcal H_t$ and $\mathcal P_t$, reads (for any integer $i \in \{1, \dots, d\}$):
        for all $t \in \RR$,
        \begin{equation}
          \lambda_i^N(t) = \lambda_0 + \mu_i + \sum_{j=1}^{d}\int_{-\infty}^{t}{h_{ij}(t-s) \, H_j(\dd s)}\,.
          \label{eq:intensity_noisy}
        \end{equation}
        However, it appears from Equation~\eqref{eq:intensity_noisy} that usual estimators cannot be designed from intensity functions $\lambda_1^N, \dots, \lambda_d^N$ since they are based on $H$,
        which is indistinguishable from $P$ in our setting (in other words, we can only consider estimation with respect to the filtration $\mathcal F_t$ rather than the larger filtration $\mathcal H_t \vee \mathcal P_t$ needed to compute the conditional intensity function).
        
        To dive into details, one may observe that, for each \(i \in \{1, \dots, d\}\), \(t \mapsto \mathbb E [\lambda_i^N(t) \,\big|\, \mathcal F_t]\) is an $\mathcal F_t$-intensity of $N$.
        Indeed, since $\mathcal F_t \subset (\mathcal H_t \vee \mathcal P_t)$ for all $t$, then for all intervals $(a,b]$:
        \begin{equation*}
            \mathbb E \bigl[ N_i\bigl((a,b]\bigr) \,\big|\, \mathcal F_a \bigr] = \mathbb E \Bigl[  \mathbb E \bigl[ N_i\bigl((a,b]\bigr) \,\big|\, \mathcal H_a \vee \mathcal P_a \bigr] \,\Big|\, \mathcal F_a \Bigr] = \mathbb E \left[ \mathbb E \left[ \int_a^b \lambda_i^N(t) \,\dd t \:\middle|\: \mathcal H_a \vee \mathcal P_a \right] \:\middle|\: \mathcal F_a \right],
        \end{equation*}
        where the last equality comes from $\lambda_i^N$ being a $(\mathcal H_t \vee \mathcal P_t)$-intensity of $N_i$. 
        Then,
        \begin{equation*}
            \mathbb E \left[ \mathbb E \left[ \int_a^b \lambda_i^N(t) \,\dd t \:\middle|\: \mathcal H_a \vee \mathcal P_a \right] \:\middle|\: \mathcal F_a \right] = \mathbb E \left[ \int_a^b \lambda_i^N(t) \,\dd t \:\middle|\: \mathcal F_a \right] = \mathbb E \left[ \int_a^b \mathbb E \bigl[\lambda_i^N(t) \,\big|\, \mathcal F_t \bigr] \,\dd t \:\middle|\: \mathcal F_a \right],
        \end{equation*}
        since $\mathcal F_a \subset \mathcal F_t$.
        Given that \(t \mapsto \mathbb E \bigl[\lambda_i^N(t) \,\big|\, \mathcal F_t \bigr]\) is $\mathcal F_t$-progressively measurable (or admits a $\mathcal F_t$-progressively measurable modification), it is an $\mathcal F_t$-intensity of $N_i$.
        Then, by 
        Equation \eqref{eq:intensity_noisy}, 
        for each integer $i \in \{1, \dots, d\}$) and all $t \in \RR$:
        \begin{align*}
            \mathbb E \bigl[ \lambda_i^N(t) \,\big|\, \mathcal F_t \bigr] 
            &= \lambda_0 + \mu_i + \sum_{j=1}^d \mathbb E \left[ \int_{-\infty}^t h_{ij}(t-s) H_j(\dd s) \:\middle|\: \mathcal F_t \right]\\
            &= \lambda_0 + \mu_i + \sum_{j=1}^d \mathbb \int_{-\infty}^t h_{ij}(t-s) \mathbb E \bigl[ \II_{H_j(\{s\}) = 1} \,\big|\, \mathcal F_t \bigr] N_j(\dd s) \,.
        \end{align*}
        However, $\mathbb E \bigl[ \II_{H_j(\{s\}) = 1} \,\big|\, \mathcal F_t \bigr]$ is not tractable in practice, so that standard inference approaches based on likelihood are not workable.
        
        This motivates the need for alternative approaches, such as the spectral method proposed in this paper.
        Let us remark that EM procedures could be considered, while being challenging to devise since the distribution of $H_j(\{s\})$ given $\mathcal F_t$ is unknown.
        Regarding this point, \citep{staerman2024}
        addressed inference of
        noisy Hawkes processes in which each event is associated with a continuous mark dependent from its origin (from the Hawkes or from the Poisson process).
        This additional information, which does not exist in our setting, enables them to develop an EM algorithm.

        In order to estimate the Hawkes and the Poisson components of $N$,
        we propose to leverage the spectral analysis of point processes,
        recently advocated by \cite{Cheysson2022} for inference of aggregated Hawkes processes.        
        It consists in considering, for a multivariate point process, its matrix-valued spectral density function, denoted $\mathbf f : \RR \to \mathbb C^{d \times d}$,
        which is related to second-order measures \citep{Bartlett1963}.
        Given some observed
        times $(T_k^{N_1})_{k\ge1}$, \dots, $(T_k^{N_d})_{k\ge1}$,
        (in a prescribed time window $[0, T]$),
        the spectral density is linked to cross-periodograms,
        defined for all pairs $(i, j) \in \{1, \dots, d\}^2$ and all $\nu \in \RR$ by:
        \begin{equation}\label{eq:multivariate_periodogram}
          I_{ij}^T(\nu) = \frac{1}{T}\sum_{k=1}^{N_i(T)}\sum_{l=1}^{N_j(T)}\mathrm{e}^{-2 \pi \iu \nu (T_k^{N_i} - T_l^{N_j})}\,,
        \end{equation}
        where $N_i(t) = N_i([0, t))$.
        Indeed, considering the matrix-valued function $\mathbf I^T : \nu \in \RR \mapsto (I_{ij}^T(\nu))_{1 \le i, j \le d}$,
        the aforementioned link is to be understood as $\mathbf I^T(\nu)$ (for all $\nu \in\RR$) being asymptotically distributed according to a complex Wishart distribution with one degree of freedom and scale matrix $\mathbf f(\nu)$ \citep{Tuan1981, Villani2022}.
        In particular, this implies that \(\mathbb E [\mathbf I^T(\nu)] = \mathbf f(\nu)\).
        Moreover, it is noteworthy that the periodogram $\mathbf I^T(\nu)$ can be computed regardless of knowing the source of the event times.
        This paves the way for estimation.
        
        As it happens, in the scope of statistical inference, a parametric model for the matrix-valued spectral density function is considered:
        \[
            \mathcal{P} = \left\{
              \mathbf{f}_\theta^N : \RR \to \mathbb C^{d \times d},
              \theta = (\mu, \gamma, \lambda_0)\in \Theta
            \right\} \,,
        \]
        where $\gamma$ is a parameter that characterises the interaction functions.
        Then, for $\nu_k = k/T$ ($k \in \{1, 2, \dots\}$), it can be shown that $(\mathbf I^T(\nu_k))_{k \geq 1}$ are asymptotically independent,
        leading to the approximate spectral log-likelihood \citep{Brillinger2012, Duker2019, Villani2022}:
        \begin{equation}\label{eq:spectral_log_likelihood}
	        \ell_T(\theta) = -\frac{1}{T} \sum_{k=1}^{M}
	        \Big\{
	          \log\left(\det\left(\mathbf{f}_\theta^N(\nu_k)\right)\right) +
	          \Tr\left(\mathbf{f}_\theta^N(\nu_k)^{-1}\mathbf{I}^T(\nu_k)\right)
	        \Big\} \,,
        \end{equation}
        where $\det$ and $\Tr$ are respectively the determinant and trace of matrices.
        Then, the so-called Whittle (or spectral) estimator $\mathbf f_{\hat \theta}^N$ of $\mathbf f$ \citep{Whittle1952} is obtained for \(\hat \theta \in \Theta\) such that
        \[
          \hat \theta \in \arg\max_{\theta \in \Theta}~ \ell_T (\theta) \,.
        \]

      \subsection{Related works}
      \label{sec:related_works}
        
        The spectral analysis of point processes was introduced in \cite{Bartlett1963} and extended to 2-dimensional point processes in \cite{Bartlett1964}.
        Subsequent research works focusing on the theoretical properties of the Bartlett spectrum include \cite{Daley1971, DaleyV1, Tuan1981} for temporal settings and \cite{Mugglestone1996, Mugglestone2001, Rajala2023, Yang2024} for spatial contexts.
        
        Spectral results concerning the Hawkes model are built on the linearity of the intensity function along with its branching properties.
        The first spectral analysis of linear Hawkes processes appears in the original paper \cite{Hawkes1971} and was then developed in \cite{DaleyV1} in both univariate and multivariate contexts.
        A recent advance regarding the theoretical analysis of spectral methods for Hawkes processes is
        the work by \cite{Pinkney2024}, who study a penalised version of the spectral log-likelihood with multi-taper periodograms
        and establish estimation error bounds.
        Another advance to the Fourier analysis of point processes is due to \cite{Yang2024} who establish asymptotic results for the estimation of parameters from the spectral log-likelihood.
        
        Explicit expressions of the spectral density of a Hawkes process are available as long as the Fourier transform of the kernel functions are known, which allows to work with a wide array of parametrisations.
        Despite this, practical applications remain scarcer in the literature.
        \cite{Adamopoulos1976} studies earthquake arrivals through the analysis of Hawkes processes and \cite{Karavasilis2007} analyses the cross-correlation of bivariate point processes in the context of muscular stimulation.
        \cite{Pinkney2024} addresses estimation of the inverse spectral density matrix for neuronal activity analysis.
        
        In a recent contribution by \cite{Cheysson2022}, the authors employ the spectral analysis of point processes to infer an aggregated Hawkes process. More precisely, the observations are assumed to come from a standard Hawkes process but only the counts of events on fixed intervals are available.
        By leveraging the properties of the Bartlett spectrum, they propose an estimator obtained by means of maximisation of the spectral log-likelihood.

        The main advantage of the spectral approach is its obvious ability to handle different kinds of partial observations.
        That is why we propose an inference procedure for noisy Hawkes processes (altered by a homogeneous Poisson process) based on Bartlett's spectral density.
        This approach relies on a parametric model, the identifiability of which is studied mainly for classic exponential kernels (some insights are given about identifiability with other kernels).
        Asymptotic properties of the spectral estimator are also addressed based on the recent results by \cite{Yang2024}.

    \section{Spectral analysis}\label{sec:spectral_estimation}
      \subsection{The Bartlett spectrum}\label{sec:spectral_analysis}
        In this section, we formally introduce the concept of matrix-valued spectral measure
        $\boldsymbol{\Gamma}^N : \RR \to \mathbb C^{d \times d}$
        for a multivariate stationary point process $N = (N_1, \ldots, N_d)$.
        This is an extension of the Bartlett spectrum introduced by \cite{Bartlett1963} for the analysis of univariate point processes.
        Let $\mathcal S$ be
        the space of real functions on $\RR$ with rapid decay \cite[Chapter 8.6.1]{DaleyV1}:
        \[
          \mathcal S = \left\{ f \in C^\infty, \forall k \in \{1, 2, \dots\}, \forall r \in \{1, 2, \dots\},
          \sup_{x \in \RR} \left| x^r
          f^{(k)}(x)
          \right| < \infty \right\},
        \]
        where $C^\infty$ is the set of smooth functions from $\RR$ to $\RR$
        and $f^{(k)}$ is the \(k^\text{th}\) derivative of \(f \in C^\infty\).

        Then, the Bartlett spectrum of $N$ is the
        matrix-valued function $\boldsymbol{\Gamma}^N : \nu \in \RR \mapsto (\Gamma_{ij}^N(\nu))_{1 \le i, j \le d} \in \mathbb C^{d \times d}$, such that for all $1 \le i,j \le d$, $\Gamma_{ij}^N$ is a measure on $\RR$ verifying \cite[Equation 8.4.13]{DaleyV1}:
        \[
          \forall (\varphi, \psi) \in \mathcal S \times \mathcal S:
          \quad
          \Cov \left(
            \int_{\RR}{\varphi(x) \, N_i(\mathrm{d}x)} , \int_{\RR}{\psi(x) \, N_j(\mathrm{d}x)}
          \right)
          = \int_{\RR}{ \tilde{\varphi} (\nu) \tilde{\psi}(-\nu) \, \Gamma_{ij}^N (\mathrm{d}\nu)}\,,
        \]        
        where for all $f \in \mathcal S$, $\tilde f : \RR \to \mathbb C$ denotes the Fourier transform of $f$:
        \[
          \forall \nu \in \RR: \quad
          \tilde f(\nu) = \int_{\RR}{ f(x) \mathrm{e}^{-2\pi \iu x \nu} \, \dd x} \,.
        \]
        
        If, for all $1 \le i, j \le d$, the measure $\Gamma_{ij}^N$ is absolutely continuous,
        we can define the matrix-valued spectral density function of $N$, denoted $\mathbf f^N : \RR \to \mathbb C^{d \times d}$,
        such that for all $\nu \in \RR$, $\mathbf f^N(\nu) = (f^N_{ij}(\nu))_{1 \le i, j \le d}$
        with \(\Gamma_{ij}^N(\dd \nu) = f_{ij}^N(\nu) \, \dd \nu\).
      	
      	From a practical point of view, the spectral density $\mathbf f^N$ can be derived from the reduced covariance densities. 
        Let $\mathcal{B}_{\RR}^c$ be the collection of all bounded Borel sets on \(\RR\) and $\ell_{\RR} : \mathcal B_{\RR}^c \to \RR_{\ge0}$ the Lebesgue measure on \(\RR\).
        For all $i \in \{1, \dots, d\}$, the first moment measure of process $N_i$ is defined as $A \in \mathcal B_\RR^c \mapsto \EE[N_i(A)]$ and, by stationarity of the process,
        it comes:
        \[\forall A\in\mathcal{B}_{\RR}^c: \quad \EE[N_i(A)] = m_i^N \ell_{\RR}(A)\,,\]
        where $m_i^N = \EE[N_i([0, 1))]$ is the mean intensity of process $N_i$.
        Then, for all \((i, j) \in \{1, \dots, d\}^2\) the second order moment measure $M_{ij}^N : \mathcal B_\RR^c \times \mathcal B_\RR^c \to \RR_{\ge0}$ is defined by \cite[Section~5.4]{DaleyV1}:
        \[
          \forall (A, B) \in \mathcal{B}_{\RR}^c \times \mathcal{B}_{\RR}^c: \quad
    	    M_{ij}^N(A, B) = \EE[N_i(A)N_j(B)] = \int_{A\times B}{M_{ij}^N(\dd x, \dd y)} \,. 
        \]
        
        Now, as the process $N$ is stationary,
        $M_{ij}^N$ can be decomposed in a product of $\ell_{\RR}$ and a so-called reduced measure $\breve M_{ij}^N : \mathcal B_\RR^c \to \RR_{\ge0}$,
        such that for any bounded measurable function $g : \RR^2 \to \RR$ with bounded support \cite[Equation 8.1.1a]{DaleyV1}:
        \begin{equation}\label{eq:reduced_moment_measure_property}
            \int_{\RR^2}{g(x,y) \, M_{ij}^N(\dd x, \dd y)} = \int_\RR \int_\RR {g(x, x+u) \,\ell_{\RR}(\dd x) \, \breve M_{ij}^N(\dd u)}\,,
        \end{equation}
        which leads to the definition of the reduced covariance measure \(\breve C_{ij}^N : \mathcal B_\RR^c \to \RR_{\ge0}\):
        \begin{equation}\label{eq:breve_c_definition}
        		\forall B \in \mathcal B_\RR^c: \quad
        		\breve C_{ij}^N(B) = \breve M_{ij}^N(B) - m_i^N m_j^N\ell_{\RR}(B)\,.
        \end{equation}
        Since the right-hand side is the difference of two positive, positive-definite measures \cite[Section~8.6]{DaleyV1}, 
        we can define the Fourier transform of $\breve C_{ij}$ as the difference of their Fourier transforms (see for example \cite[Equation 5.2.1]{Pinsky2008} for the Fourier transform of a measure).
        The resulting quantity comes out to correspond exactly to the spectral density function \(f^N_{ij}\):
        \begin{equation}\label{equ:spectral_reduced}
          \forall \nu\in\RR: \quad
          f^N_{ij}(\nu) = \int_{\RR}{\mathrm{e}^{-2\pi \iu x \nu} \,\breve M_{ij}^N(\mathrm{d}x)} - m_i^N m_j^N \delta(\nu)\,,
        \end{equation}
        where \(\delta\) is the Dirac delta function.

      \subsection{Superposition of processes and noisy Hawkes process}\label{sec:superposition}
        The model we study considers the superposition of two point processes that we define as follows.
        \begin{definition}[Superposition of processes]\label{def:superposition}
          Let $X$ and $Y$ be two independent and stationary multivariate point processes with same dimension $d$.
          The \textit{superposition} of $X$ and $Y$, denoted $N = X + Y$, is the stationary multivariate point process defined for any integer $1 \le i \le d$ as:
          \[
            \forall A \in \mathcal B_\RR^c:
            \quad
            N_i(A) = X_i(A) + Y_i(A).
          \]
        \end{definition}
        
        It comes from the definition that if $X$ and $Y$ have respectively event times $(T_k^{X_1})_{k\ge1}$, \dots, $(T_k^{X_d})_{k\ge1}$ and $(T_k^{Y_1})_{k\ge1}$, \dots, $(T_k^{Y_d})_{k\ge1}$, the sequences of event times of $N = X + Y$ are the ordered unions of $(T_k^{X_1})_{k\geq1}$ and $(T_k^{Y_1})_{k\geq1}$ up to $(T_k^{X_d})_{k\geq1}$ and $(T_k^{Y_d})_{k\geq1}$.
        In addition, Proposition~\ref{PROP:SUM_OF_SPECTRAL_DENSITIES} below states that the spectral density of $N$ can be obtained easily from 
        the sum of the spectral densities of $X$ and $Y$ (see also \cite[Exercise~8.2.2]{DaleyV1} for univariate point processes).

        \begin{proposition}\label{PROP:SUM_OF_SPECTRAL_DENSITIES}
	        Let $X$ and $Y$ be two independent and stationary multivariate point processes with same dimension $d$, admitting respective spectral densities $\mathbf{f}^X$ and $\mathbf{f}^Y$.
	        
	        Then $N = X + Y$ admits a matrix-valued spectral density function $\mathbf f^N$ and
	        \begin{equation} \label{eq:sum_spectral_densities}
		        \mathbf{f}^N = \mathbf{f}^X + \mathbf{f}^Y\,.
	        \end{equation}
        \end{proposition}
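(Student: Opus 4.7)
The plan is to verify the defining relation of the Bartlett spectrum for $N = X+Y$ and match it against the sum $\mathbf{f}^X + \mathbf{f}^Y$. The cornerstone is bilinearity of covariance combined with the independence assumption between $X$ and $Y$: for any $i,j \in \{1,\dots,d\}$ and any test functions $\varphi, \psi \in \mathcal{S}$, the random variables $\int \varphi \, dX_i$ and $\int \psi \, dY_j$ are independent (and similarly $\int \varphi \, dY_i$ is independent of $\int \psi \, dX_j$), so their covariance vanishes. Hence
\[
\Cov\!\left(\int_\RR \varphi(x)\, N_i(\dd x),\, \int_\RR \psi(x)\, N_j(\dd x)\right) = \Cov\!\left(\int \varphi\, dX_i,\, \int \psi\, dX_j\right) + \Cov\!\left(\int \varphi\, dY_i,\, \int \psi\, dY_j\right).
\]

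Next, I would invoke the definition of the Bartlett spectra of $X$ and $Y$ on each term to rewrite the right-hand side as
\[
\int_\RR \tilde\varphi(\nu)\tilde\psi(-\nu)\, \Gamma_{ij}^X(\dd\nu) + \int_\RR \tilde\varphi(\nu)\tilde\psi(-\nu)\, \Gamma_{ij}^Y(\dd\nu) = \int_\RR \tilde\varphi(\nu)\tilde\psi(-\nu)\, (\Gamma_{ij}^X + \Gamma_{ij}^Y)(\dd\nu).
\]
Since this identity holds for every pair $(\varphi,\psi) \in \mathcal{S}\times\mathcal{S}$, and since the Bartlett spectrum of a stationary point process is characterised by this covariance functional (uniqueness follows because $\{\tilde\varphi(\nu)\tilde\psi(-\nu) : \varphi,\psi \in \mathcal S\}$ separates tempered measures on $\RR$), one concludes that $\Gamma_{ij}^N = \Gamma_{ij}^X + \Gamma_{ij}^Y$ as measures. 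Absolute continuity of $\Gamma_{ij}^X$ and $\Gamma_{ij}^Y$ with respect to Lebesgue measure transfers directly to the sum, yielding the spectral density identity $f_{ij}^N = f_{ij}^X + f_{ij}^Y$ for every $i,j$, hence $\mathbf{f}^N = \mathbf{f}^X + \mathbf{f}^Y$.

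The main subtlety I would anticipate is not the bilinearity step itself, but justifying that the covariances $\Cov(\int \varphi\, dX_i, \int \psi\, dY_j)$ indeed vanish: this requires that both integrals are square-integrable so the covariance is well defined, and that the independence of the processes $X$ and $Y$ propagates to the corresponding stochastic integrals against Schwartz functions. The former follows from $\varphi,\psi \in \mathcal S$ together with existence of second moments (implicit in the existence of the spectral densities $\mathbf{f}^X$ and $\mathbf{f}^Y$), and the latter is a standard measurability argument: $\int \varphi\, dX_i$ is $\sigma(X)$-measurable and $\int \psi\, dY_j$ is $\sigma(Y)$-measurable, so independence of $X$ and $Y$ as processes implies independence of the two integrals. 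Once that point is cleanly handled, the remainder of the argument is essentially a one-line consequence of the definition.
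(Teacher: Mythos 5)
Your argument is correct, and it reaches the conclusion by a genuinely different route from the paper. You work directly with the defining covariance functional of the Bartlett spectrum: bilinearity of the covariance plus independence of $X$ and $Y$ kills the cross terms, so the measure $\Gamma_{ij}^X + \Gamma_{ij}^Y$ satisfies the defining relation for $N$, and uniqueness of the Bartlett spectrum (the products $\tilde\varphi(\nu)\tilde\psi(-\nu)$ span a separating class) identifies it as $\Gamma_{ij}^N$; absolute continuity then passes to the sum. The paper instead works at the level of second-order moment measures: it first proves a lemma expanding $M_{ij}^N(A,B) = \EE[(X_i(A)+Y_i(A))(X_j(B)+Y_j(B))]$, which produces the extra term $(m_i^X m_j^Y + m_j^X m_i^Y)\ell_\RR(A)\ell_\RR(B)$ from the cross products, transfers this to the reduced measures $\breve M_{ij}^N$, and then takes Fourier transforms via Equation~\eqref{equ:spectral_reduced}, where the surplus term cancels against the corresponding surplus in $m_i^N m_j^N\delta(\nu)$. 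The underlying mechanism is the same (independence makes cross terms factorise), but the bookkeeping differs: your version is shorter and sidesteps the distributional identity $\int_\RR \mathrm e^{-2\pi\iu x\nu}\,\dd x = \delta(\nu)$ that the paper manipulates explicitly, at the cost of leaning on the (standard but uncited) uniqueness statement for the Bartlett spectrum; the paper's version is more computational but also establishes, as a by-product, that $N$ admits second-order moment and reduced measures and gives their explicit form, which is the representation actually used to compute $\mathbf f^N$ in the rest of the paper. Your attention to the two genuine subtleties --- square-integrability of the integrals and propagation of independence to the stochastic integrals --- is well placed and handled adequately.
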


        \begin{proof}
          The proof is given in Appendix~\ref{appendix:proof_sum_spectra}.
        \end{proof}

	      We are now ready to define the noisy Hawkes model, which is the superposition of a Hawkes process and a homogeneous Poisson process.
	      The latter is formally defined as
    	   a collection of \(d\) independent homogeneous Poisson processes on \(\RR\).

	      \begin{definition}[Noisy Hawkes process]
		      Let $H$ be a multivariate Hawkes process
		      and $P=(P_1, \ldots P_d)$ be a multivariate homogenous Poisson process
		      independent from $H$ and
		      with common intensity $\lambda_0 > 0$ (\ie for any integer $1 \le i \le d$, $P_i$ is a univariate Poisson process with constant intensity $\lambda_0$).
		      The superposition $N = H + P$ is called a \textit{noisy Hawkes process}.
	      \end{definition}
	      
	      Let us remark that, since the process $P$ is aimed at modeling
        some kind of background noise,
        it is naturally assumed that subprocesses of $P$ share the same intensity $\lambda_0$.
        However, if identifiability results from the forthcoming sections are specific to this assumption,
        the presentation done in this section can be trivially extended to a multivariate Poisson process with different intensities.
	      
	      Now, let us recall that we aim at analysing the process $N = H + P$ through an observation of event times $(T_k^{N_1})_{k\geq1}$, \dots, $(T_k^{N_d})_{k\geq1}$ with the inability of distinguishing between the times of $H$ and $P$.
	      To do so from a statistical estimation perspective, we leverage the spectral log-likelihood (Equation~\eqref{eq:spectral_log_likelihood}), which makes use of the spectral density of $N$.
	      The latter can be computed thanks to Proposition~\ref{PROP:SUM_OF_SPECTRAL_DENSITIES} and Example~8.3(c) from \cite{DaleyV1}.
	      Indeed, let $\tilde{\mathbf h} : \RR \to \mathbb C^{d \times d}$ be the matrix-valued Fourier transform of the interaction functions:
	      \[
	        \forall \nu \in \RR: \quad
	        \tilde{\mathbf h}(\nu) = \left( \tilde h_{ij}(\nu) \right)_{1 \le i, j \le d} \,.
	      \]
	      Under stationarity conditions, 
	      the spectral density $\mathbf{f}^H$ of the Hawkes process $H$ (defined as in Equation~\eqref{eq:hawkes_intensity}) is \cite[Equation (13)]{Hawkes1971b}:
	      \begin{equation}\label{eq:multivariate_spectral_matrix}
	        \forall \nu \in \RR: \quad
          \mathbf{f}^H(\nu) = \left( I_d - \tilde{\mathbf h}(\nu) \right)^{-1} \diag \left(m^H \right) \left(I_d - \tilde{\mathbf h}(-\nu)^T \right)^{-1}\,,
        \end{equation}
        where $I_d$ is the identity matrix of dimension $d$ and $\diag \left(m^H \right)$ is the diagonal matrix formed by the vector of the mean intensities:
        \[
          \begin{pmatrix}
            m_1^H\\
            \vdots\\
            m_d^H
          \end{pmatrix}
          = \left( I_d - \tilde{\mathbf h}(0) \right)^{-1} 
          \begin{pmatrix}
            \mu_1\\
            \vdots\\
            \mu_d
          \end{pmatrix}\,.
        \]
        
        Moreover, since the homogeneous Poisson process $P$ (with common intensity \(\lambda_0\)) is a Hawkes process with null interactions,
        its spectral density \(\mathbf f^P\) results easily from Equation~\eqref{eq:multivariate_spectral_matrix}:
        \[
          \forall \nu \in \RR: \quad
          \mathbf{f}^P(\nu) = \lambda_0 I_d \,,
        \]
        which leads to the spectral density of $N$:        
        \begin{equation}\label{eq:noisy_spectral_matrix}
          \forall \nu \in \RR: \quad
          \mathbf{f}^N(\nu)
          = \mathbf{f}^H(\nu) + \lambda_0 I_d \, ,
        \end{equation}
        with $\mathbf{f}^H$ expressed in Equation~\eqref{eq:multivariate_spectral_matrix}.
        Given this result, the inference procedure consists in maximising the spectral log-likelihood described in Equation~\eqref{eq:spectral_log_likelihood}.
        
        In the forthcoming sections, we focus on this statistical estimation problem in low-dimensional settings ($d=1$ and $d=2$) and provide identifiability results when interactions are exponential.

      \section{The univariate noisy Hawkes process}\label{sec:univariate_noisy_hawkes}
        \subsection{General setting}
	        Let us start with univariate processes ($d=1$).
	        In this case, Equation~\eqref{eq:noisy_spectral_matrix} simplifies, as stated in Corollary~\ref{corollary:noisy_spectral_density}.

          \begin{corollary}\label{corollary:noisy_spectral_density}
	          Let $N$ be a noisy Hawkes process defined by the superposition of a stationary Hawkes process $H$
	          (with baseline intensity $\mu > 0$ and kernel function $h : \RR \to \RR_{\ge 0}$)
	          and an independent homogeneous Poisson process $P$ (with constant intensity $\lambda_0 > 0$).
	          Then the spectral density $f^N$ of $N$ reads:
	          \begin{equation*}
		          \forall \nu \in \RR, \quad
		          f^N(\nu) = \frac{\mu}{\left (1 - \|h\|_1 \right) \left \lvert 1 - \tilde h(\nu) \right \rvert^2} + \lambda_0\,.
	          \end{equation*}
          \end{corollary}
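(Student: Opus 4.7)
The plan is to specialize the general formula \eqref{eq:noisy_spectral_matrix} to dimension $d=1$, so that the corollary becomes an immediate instantiation rather than a new computation. First, I would invoke Proposition~\ref{PROP:SUM_OF_SPECTRAL_DENSITIES} together with \eqref{eq:multivariate_spectral_matrix} and the fact that $\mathbf{f}^P(\nu) = \lambda_0 I_d$, which in the scalar setting $d=1$ yields
\[
f^N(\nu) = \frac{m^H}{(1-\tilde h(\nu))(1-\tilde h(-\nu))} + \lambda_0, \qquad m^H = \frac{\mu}{1-\tilde h(0)}.
\]

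Second, I would simplify the two factors appearing in the denominator. Because the kernel $h$ is real-valued, its Fourier transform satisfies $\tilde h(-\nu) = \overline{\tilde h(\nu)}$, and therefore
\[
\bigl(1-\tilde h(\nu)\bigr)\bigl(1-\tilde h(-\nu)\bigr) = \bigl\lvert 1-\tilde h(\nu) \bigr\rvert^{2}.
\]
Moreover, since $h \ge 0$, $\tilde h(0) = \int_{\RR} h(t)\,\dd t = \|h\|_1$, and the stationarity hypothesis on $H$ (which in dimension one reduces to $\|h\|_1 < 1$) ensures that $1 - \|h\|_1 > 0$, so the resulting denominator does not vanish and $m^H$ is well defined.

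Substituting these two simplifications into the previous display produces exactly the announced formula. There is no genuine obstacle here: the content of the corollary is entirely contained in \eqref{eq:noisy_spectral_matrix} and \eqref{eq:multivariate_spectral_matrix}, and the proof amounts to an algebraic unpacking combined with the standard Hermitian-symmetry property of the Fourier transform of a real kernel. The only point deserving a brief mention is why $\|h\|_1 < 1$, which follows from the stationarity assumption on $H$ via the univariate specialisation of the spectral-radius condition $\rho(S) < 1$ recalled in Section~\ref{sec:setting}.
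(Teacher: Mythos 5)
Your proposal is correct and follows essentially the same route as the paper, which simply declares the corollary ``straightforward from Equations~\eqref{eq:multivariate_spectral_matrix} and \eqref{eq:noisy_spectral_matrix}''; you have merely spelled out the scalar specialisation, the Hermitian symmetry $\tilde h(-\nu) = \overline{\tilde h(\nu)}$, and the identification $\tilde h(0) = \|h\|_1 < 1$ that the authors leave implicit. No gap.
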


          \begin{proof}
            This is straightforward from Equations~\eqref{eq:multivariate_spectral_matrix} and \eqref{eq:noisy_spectral_matrix}.
            See also \cite[Example 8.2(e)]{DaleyV1} for the spectral density of a univariate exciting Hawkes process.
          \end{proof}
          
          The estimation procedure also simplifies since
          the periodogram of $N$ and the spectral log-likelihood (see Equations~\eqref{eq:multivariate_periodogram} and \eqref{eq:spectral_log_likelihood}) respectively read:
          \[
            \forall \nu \in \RR, \quad
            I^T(\nu) = \frac{1}{T}\left\lvert \sum_{k=1}^{N(T)}{\mathrm{e}^{-2\pi \iu \nu T_k^N}}\right\rvert^2\,,
          \]
          where $(T_k^{N})_{k\geq1}$ is the sequence of event times of the noisy Hawkes process $N$,
          and:
          \begin{equation}\label{eq:univariate_log_likelihood}
            \forall \theta \in \Theta, \quad
          	\ell_T(\theta) = -\frac{1}{T}\sum_{k=1}^{M}{\left(\log\left(f_\theta^N(\nu_k)\right) + \frac{I^T(\nu_k)}{f_\theta^N(\nu_k)}\right)}\,.
		      \end{equation}
          As explained in Section~\ref{sec:setting}, the Whittle estimator $\hat \theta$ is then obtained by maximising the function $\ell_T$.

     \subsection{Asymptotic results} \label{sec:consistency}
     
    \cite{Yang2024} have considered the problem of parameter estimation in the frequency domain, using the spectral log-likelihood given by
    \begin{equation*}
        L_T(\theta) = - \int_D \left( \log f_\theta^N(\nu) + \frac{I^T(\nu)}{f_\theta^N(\nu)} \right) \dd\nu \,,
    \end{equation*}
    with $D$ a prespecified compact and $\theta \in \Theta \subset \mathbb R^p$, and the estimator given by
    \begin{equation*}
        \hat\theta_T \in \arg\,\max_{\theta\in\Theta} L_T(\theta) \,.
    \end{equation*}
    The spectral log-likelihood $\ell_T(\theta)$, defined in Equation~\eqref{eq:univariate_log_likelihood}, should then be seen as a Riemann sum approximation of $L_T(\theta)$.
    Define the best fitting parameter $\theta_0$ as 
    \begin{equation*}
        \theta_0 \in \arg\,\max_{\theta\in\Theta} \mathcal L(\theta), \qquad \text{where} \qquad \mathcal L(\theta) = - \int_D \left(\log f_\theta^N(\nu) + \frac{f^N(\nu)}{f_\theta^N(\nu)} \right) \dd\nu \,.
    \end{equation*}
    
    Here we recall the assumptions and asymptotic results on $\hat \theta_T$ that \cite{Yang2024} have established, written in the context of our noisy Hawkes process.
    
    \begin{assumption}
    \label{ass:theta}
    \cite[Assumption 6.1]{Yang2024}
    The parameter space $\Theta$ is a compact subset of $\mathbb R^p$, $p \in \mathbb N$.
    The parametric family of spectral density functions $\bigl\{ f_\theta^N \bigr\}$ is uniformly bounded above and bounded below from zero.
    For all $\nu$, $f_\theta^N(\nu)$ is twice differentiable with respect to $\theta$ and its first and second derivatives are continuous on $\Theta \times D$.
    $\theta_0$ is uniquely determined and lies in the interior of $\Theta$.
    Lastly, $\hat\theta_T$ exists for all $T > 0$ and lies in the interior of $\Theta$.
    \end{assumption}
    
    Note that, in the case where the mapping $\theta \mapsto f_\theta^N$ is injective and the model is correctly specified, then $\theta_0$ is uniquely specified and satisfies $f^N = f_{\theta_0}^N$.
    For the asymptotic variance of $\hat \theta_T$, we introduce the following quantities:
    \begin{align*}
        \Gamma(\theta) &= \frac{1}{4\pi} \int_D \Bigl[ \bigl( f^N(\nu) - f_\theta^N(\nu) \bigr) \nabla^2 (f_\theta^N)^{-1}(\nu) + \bigl( \nabla \log f_\theta^N(\nu) \bigr) \bigl(\nabla \log f_\theta^N(\nu) \bigr)^T \Bigr] \dd\nu\,,\\
        S_1(\theta) &= \frac{1}{4\pi} \int_D f^N(\nu)^2 \bigl( \nabla (f_\theta^N)^{-1}(\nu) \bigr) \bigl( \nabla (f_\theta^N)^{-1}(\nu) \bigr)^T \dd\nu\,,\\
        S_2(\theta) &= \frac{1}{8\pi} \int_{D^2} f_4^N(\nu_1, -\nu_1, \nu_2) \bigl( \nabla (f_\theta^N)^{-1}(\nu_1) \bigr) \bigl( \nabla (f_\theta^N)^{-1}(\nu_2) \bigr)^T\,,
    \end{align*}
    where $\nabla f_\theta$ and $\nabla^2 f_\theta$ are the first and second-order derivatives of $f_\theta$ with respect to $\theta$, and $f_4^N$ the fourth-order cumulant spectral density of the process.
    
    \begin{proposition}
    \label{prop:hawkes_asymptotic}
        Let $N$ be a noisy Hawkes process as in Corollary \ref{corollary:noisy_spectral_density}.
        Suppose that Assumption \ref{ass:theta} holds and that: \textit{(i)} $f^N(\nu) - m^N \in L^1(\mathbb R)$ and \textit{(ii)} $\partial^2 f^N(\nu)$ exists for $\nu \in \mathbb R$ and $\sup_\nu \lvert \partial^2 f^N(\nu) \rvert < \infty$. 
        Then,
        \begin{equation*}
            \hat \theta_T \xrightarrow[]{p} \theta_0, \quad T \to \infty.
        \end{equation*}
        Further assume that $\Gamma(\theta_0)$ is invertible, that $h(\cdot)$ has a $(4+\delta)$th-moment for some $\delta > 0$, that is $\int_0^\infty u^{4+\delta} h(u) \dd u < \infty$, and that: \textit{(i)} $f_4^N - \lambda \in L^1(\mathbb R^3)$ and \textit{(ii)} $f_4^N(\nu_1, \nu_2, \nu_3)$ is twice differentiable with respect to $\nu_2$ and the second-order partial derivative is bounded above.
        Then,
        \begin{equation*}
            \sqrt T \bigl( \hat \theta_T - \theta_0 \bigr) \xrightarrow[]{d} \mathcal N \Bigl( 0, \Gamma(\theta_0)^{-1} \bigl( S_1(\theta_0) + S_2(\theta_0) \bigr) \Gamma(\theta_0)^{-1} \Bigr), \quad T \to \infty.
        \end{equation*}
    \end{proposition}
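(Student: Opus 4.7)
The overall strategy is to apply Yang (2024)'s consistency and asymptotic normality theorems directly, since the proposition is essentially a restatement of those results in our setting. The task reduces to verifying each hypothesis of Yang (2024) for the explicit spectral density obtained in Corollary~\ref{corollary:noisy_spectral_density}.

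First, for Assumption~\ref{ass:theta} and the conditions needed for consistency, I would observe that
\[
f_\theta^N(\nu) = \frac{\mu}{(1-\|h\|_1)\,\lvert 1 - \tilde h(\nu)\rvert^2} + \lambda_0
\]
is bounded below by $\lambda_0 > 0$ and bounded above, since the stationarity condition $\|h\|_1 < 1$ combined with $h \ge 0$ yields $\lvert 1-\tilde h(\nu)\rvert \geq 1 - \lvert \tilde h(\nu)\rvert \geq 1 - \|h\|_1 > 0$; compactness of $\Theta$ upgrades these to uniform bounds. Twice differentiability of $f_\theta^N$ in $\theta$ with continuous derivatives on $\Theta \times D$ follows from its rational form whenever $\theta \mapsto \tilde h$ is smooth, which covers the exponential kernels studied in the sequel. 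Uniqueness of $\theta_0$ and its interior position are provided by the identifiability results of the following subsection. The condition $f^N - m^N \in L^1(\mathbb R)$ follows from $\tilde h(\nu) \to 0$ at infinity (Riemann--Lebesgue) with a rate controlled by the moments of $h$, and the existence and boundedness of $\partial_\nu^2 f^N$ reduces to the same property for $\partial_\nu^2 \tilde h$, which is ensured by the $(4+\delta)$-moment hypothesis imposed for the second part of the proposition.

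For asymptotic normality, the key additional ingredient is control of the fourth-order cumulant spectral density $f_4^N$. Using $N = H + P$ together with independence of $H$ and $P$, and the fact that all joint cumulants of the homogeneous Poisson process beyond the mean are concentrated on the diagonal with an elementary expression, one can show that $f_4^N$ decomposes as $f_4^H$ plus contributions depending only on $\lambda_0$. Hence conditions (i) and (ii) reduce to corresponding properties of $f_4^H$, which I would derive from the Poisson-cluster (branching) representation of the Hawkes process: each fourth-order joint cumulant is a sum over all ways the four points can be connected through common ancestors, of integrals of products of kernels and their convolutions. The $(4+\delta)$-moment assumption on $h$ guarantees absolute convergence of these integrals and of their Fourier transforms, yielding the $L^1$ property, while differentiability in $\nu_2$ with bounded second derivative follows from differentiating under the Fourier integral using the same moment control. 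Invertibility of $\Gamma(\theta_0)$ is the standard non-degeneracy condition at the true parameter.

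The main obstacle is the treatment of $f_4^N$: making the decomposition via the cluster representation fully explicit, checking $L^1$ integrability after subtraction of the appropriate constant, and establishing the bounded second partial derivative in $\nu_2$. All remaining verifications are routine manipulations of the rational expression for $f_\theta^N$ combined with standard Fourier-analytic arguments, after which the consistency and asymptotic normality statements follow verbatim from Yang (2024).
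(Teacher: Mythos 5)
There is a genuine gap. Yang (2024)'s Theorem 6.1 rests not only on the spectral-density regularity conditions that you verify (and several of which — the $L^1$ conditions on $f^N - m^N$ and $f_4^N$, the boundedness of $\partial^2 f^N$, the invertibility of $\Gamma(\theta_0)$ — are already \emph{hypotheses} of the proposition, so re-deriving them from the cluster representation is effort spent on the wrong target), but also on a probabilistic condition on the process itself: a prescribed decay rate of its $\alpha$-mixing coefficients. Yang verified this mixing condition for a Hawkes process, but the noisy process $N = H + P$ is a different process, and its mixing coefficients do not automatically inherit the required decay. Your proposal never addresses this, yet it is precisely the one assumption the paper singles out as requiring a new argument; everything else is, as the paper notes, straightforward.

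The paper closes this gap by comparing $\sigma$-algebras: since $\mathcal F_s^t \subset \mathcal H_s^t \vee \mathcal P_s^t$, one has
\begin{equation*}
\alpha_N(t) \le \alpha\bigl(\mathcal H_{-\infty}^0 \vee \mathcal P_{-\infty}^0,\; \mathcal H_t^\infty \vee \mathcal P_t^\infty\bigr) \le \alpha_H(t) + \alpha_P(t) = \alpha_H(t),
\end{equation*}
where the subadditivity step uses Bradley's Theorem 5.1 together with the independence of $H$ and $P$, and $\alpha_P(t) = 0$ by the independence property of the Poisson process. Hence the noisy process mixes at least as fast as the underlying Hawkes process, and Yang's rate condition transfers. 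Without an argument of this kind your proof is incomplete: the asymptotic normality (and the CLT machinery behind it) cannot be invoked for $N$ merely because it holds for $H$. I would also flag that your claim that $\partial_\nu^2 f^N$ bounded ``reduces to the same property for $\partial_\nu^2 \tilde h$ ensured by the $(4+\delta)$-moment hypothesis'' conflates the two parts of the proposition — the moment hypothesis is only assumed for the normality statement, while condition (ii) on $\partial^2 f^N$ is a standing hypothesis for consistency — but this is minor compared to the missing mixing argument.
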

    
    Proposition \ref{prop:hawkes_asymptotic} is a direct consequence of \cite[Theorem 6.1]{Yang2024}, where the authors verified that the original Theorem's assumptions hold for the Hawkes process \cite[Section 5.1]{Yang2024}.
    It is straightforward to show that these assumptions still hold for the noisy Hawkes process, except for the one about the decay rate of the $\alpha$-mixing coefficient of the process, for which we give a detailed proof below.
    
    \begin{proof}
        For any $\sigma$-algebras $\mathcal U$ and $\mathcal V$, denote $\mathcal U \vee \mathcal V$ the $\sigma$-algebra generated by $\mathcal U$ and $\mathcal V$.
        For any $s, t \in \mathbb R$, $s < t$, define $\mathcal F_s^t$, $\mathcal P_s^t$ and $\mathcal H_s^t$ the $\sigma$-algebras generated on the intervals $(s, t]$ by the processes $N$, $P$ and $H$ respectively, and $\mathcal F_s^\infty = \bigvee_{t > s} \mathcal F_s^t$ and $\mathcal F_{-\infty}^t = \bigvee_{s < t} \mathcal F_s^t$ (and similarly for $\mathcal H$ and $\mathcal P$).
        Then the $\alpha$-mixing coefficient for a stationary point process $N$ is defined as \citep{Poinas2019, Cheysson2022}
        \begin{equation*}
            \alpha_N(t) = \alpha(\mathcal F_{- \infty}^0, \mathcal F_t^\infty) \coloneqq \sup \bigl\{ \lvert \mathbb P(A \cap B) - \mathbb P(A) \mathbb P(B) \rvert : A \in \mathcal F_{- \infty}^0, B \in \mathcal F_t^\infty \bigr\}.
        \end{equation*}
        Since the $\alpha$-mixing coefficient is defined as a supremum over $\sigma$-algebras, and $\mathcal F_s^t \subset \mathcal H_s^t \vee \mathcal P_s^t$ for any $s, t \in \mathbb R \cup \{ - \infty, \infty \}$, 
        \begin{equation*}
            \alpha_N \, (t) 
            = \alpha \, (\mathcal F_{- \infty}^0, \mathcal F_t^\infty)
            \le \alpha \, (\mathcal H_{- \infty}^0 \vee \mathcal P_{- \infty}^0, \mathcal H_t^\infty \vee \mathcal P_t^\infty)
            \le \alpha \, (\mathcal H_{- \infty}^0, \mathcal H_t^\infty) + \alpha \, (\mathcal P_{- \infty}^0, \mathcal P_t^\infty)
            = \alpha_H \, (t) + \alpha_P \, (t)
            = \alpha_H \, (t),
        \end{equation*}
        where the last inequality comes from \cite[Theorem 5.1]{Bradley2005} and the independence between the Hawkes and the Poisson processes, and $\alpha_P \, (t) = 0$ from the Poisson property of independence.
    \end{proof}
    
    It remains that one must show in Assumption \ref{ass:theta} that $\theta_0$ is uniquely determined, begging the question of the model's identifiability when the model is correctly specified.
    In the next section, the exponential parametric model $\mathcal Q$ for $f^N$ is described and analysed in this context.

      \subsection{Exponential model}\label{sec:expon_1d}
        Let us consider the classic exponential kernel for the Hawkes process $H$:
          \begin{equation}\label{equ:exp_kernel_1d}
            \forall t \in \RR: \quad
            h(t) = \alpha\beta \mathrm{e}^{-\beta t} \II_{t \ge 0}\,,
          \end{equation}
          with $0 < \alpha < 1$ and $\beta > 0$.
          The kernel parameter is thus $\gamma = (\alpha, \beta)$ and the statistical model for a univariate noisy Hawkes process becomes:
          \[
        		\mathcal Q = \left\{
        		  f^N_\theta : \RR \to \mathbb C, 
              \theta = (\mu, \alpha, \beta, \lambda_0) \in \RR_{>0}\times (0,1) \times \RR_{>0} \times \RR_{>0}
            \right\}\,.
          \]
          
          The exponential kernel parametrisation has been widely studied from an inference point of view (see for instance \cite{Ozaki1979, Bacry2016}).
          In particular, its Fourier transform is:
          \begin{equation}\label{eq:fourier_exponential}
            \forall \nu \in \RR: \quad
            \tilde h(\nu) = \frac{\alpha \beta}{\beta + 2 \pi \iu \nu} \,,
          \end{equation}
          and the spectral density $f^H$ of a univariate Hawkes process
          with baseline intensity $\mu > 0$ and exponential kernel
          is \citep{Hawkes1971}:
          \[
            \forall \nu \in \RR: \quad
            f^H(\nu) = m^H \left(1 + \frac{\beta^2 \alpha (2 - \alpha)}{\beta^2 (1 - \alpha)^2 + 4 \pi^2 \nu^2}\right), \quad
            \text{where } m^H = \frac{\mu}{1 - \alpha}\,.
          \]
          It results that the spectral density $f_\theta^N \in \mathcal Q$ of a univariate noisy Hawkes process is:          
          \begin{equation}\label{eq:exponential_spectral_density}
            \forall \nu \in \RR: \quad
          	f_{\theta}^N(\nu) = \left( \frac{\mu}{1-\alpha}{\beta^2 \alpha (2-\alpha)} \right)
          	\frac{1}{\beta^2 (1-\alpha)^2 + 4 \pi^2 \nu^2}
          	+ \left(\frac{\mu}{1-\alpha} + \lambda_0\right)\,.
          \end{equation}
        
        Now, we should be ready for implementing the inference procedure based on maximising the spectral log-likelihood as expressed in Equation~\eqref{eq:univariate_log_likelihood}.
        However, it appears that the model $\mathcal Q$, as currently defined, is not identifiable (see Proposition~\ref{PROPOSITION:FIXED_UNIVARIATE_IDENTIFIABILITY}).
        Hopefully, this model becomes identifiable when restricted to only three parameters, thus allowing the practicability of the proposed estimation method,
        as numerically illustrated in Section~\ref{sec:univariate_numerical_results}.
        
        \begin{proposition}[Identifiability in the univariate setting]\label{PROPOSITION:FIXED_UNIVARIATE_IDENTIFIABILITY}
        	The model $\mathcal{Q}$ is not identifiable.
        	In particular, for any admissible parameter $\theta = (\mu, \alpha, \beta, \lambda_0)$ there exists an infinite number of admissible parameters $\theta'$ such that $f_{\theta}^N = f_{\theta'}^N$.
        	
        	However, the four collections of models defined below are identifiable:
          \begin{enumerate}
            \item for all $\mu^\circ > 0$, $\mathcal Q_{\mu} = \left\{ f_{(\mu, \alpha, \beta, \lambda_0)}^N \in \mathcal Q, \mu = \mu^\circ \right\}$;
            \item for all $\alpha^\circ \in (0, 1)$, $\mathcal Q_{\alpha} = \left\{ f_{(\mu, \alpha, \beta, \lambda_0)}^N \in \mathcal Q, \alpha = \alpha^\circ \right\}$;
            \item for all $\beta^\circ > 0$, $\mathcal Q_{\beta} = \left\{ f_{(\mu, \alpha, \beta, \lambda_0)}^N \in \mathcal Q, \beta = \beta^\circ \right\}$;
            \item for all $\lambda_0^\circ > 0$, $\mathcal Q_{\lambda_0} = \left\{ f_{(\mu, \alpha, \beta, \lambda_0)}^N \in \mathcal Q, \lambda_0 = \lambda_0^\circ \right\}$.
          \end{enumerate}
        \end{proposition}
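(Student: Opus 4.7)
The plan is to reduce the identifiability question to a system of three algebraic equations in the four unknowns $(\mu, \alpha, \beta, \lambda_0)$, then analyse each sub-case. I would rewrite Equation~\eqref{eq:exponential_spectral_density} in the normalised form
\begin{equation*}
    f^N_\theta(\nu) \;=\; \frac{A(\theta)}{B(\theta) + \nu^2} + C(\theta),
\end{equation*}
with $A(\theta) = \mu\beta^2 \alpha(2-\alpha)/[4\pi^2(1-\alpha)]$, $B(\theta) = \beta^2(1-\alpha)^2/(4\pi^2)$, and $C(\theta) = \mu/(1-\alpha) + \lambda_0$. A direct matching argument shows that $f^N_\theta = f^N_{\theta'}$ if and only if the triples $(A,B,C)(\theta)$ and $(A,B,C)(\theta')$ coincide: $C$ is recovered as $\lim_{\nu \to \infty} f^N_\theta(\nu)$, and the remaining rational part $A/(B+\nu^2)$ then determines $A$ and $B$ uniquely (since $A > 0$, matching the constant and $\nu^2$ coefficients in $A(B'+\nu^2) = A'(B+\nu^2)$ forces $A=A'$ and $B=B'$).

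For the non-identifiability of $\mathcal Q$, I would exhibit an explicit one-parameter family of admissible quadruples sharing the same $(A, B, C)$. Setting $m := \mu/(1-\alpha)$, the three equations rewrite as $C = m + \lambda_0$, $\beta^2 = 4\pi^2(A/m + B)$, and $(1-\alpha)^2 = mB/(A + mB)$. Hence, given an admissible $\theta$, for every $m \in (0, C)$ the formulas $\lambda_0 = C - m$, $\alpha = 1 - \sqrt{mB/(A+mB)}$, $\beta = 2\pi\sqrt{A/m + B}$, $\mu = m(1-\alpha)$ yield an admissible quadruple realising $(A, B, C)$; varying $m$ produces infinitely many distinct parameters with the same spectral density.

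For the four identifiable sub-models, I would use the same reparametrization. Fixing $\alpha$, $\beta$, or $\lambda_0$ reduces the system to a sequence of direct substitutions that recovers the remaining three parameters uniquely (for $\mathcal Q_{\lambda_0}$, one obtains $m = C - \lambda_0^\circ > 0$, which then determines $(1-\alpha)^2$, $\beta^2$, and finally $\mu$). The least immediate case is $\mathcal Q_\mu$: eliminating $\beta^2$ via the $B$-equation and substituting into the $A$-equation leads to
\begin{equation*}
    g(\alpha) \;:=\; A(1-\alpha)^3 - \mu^\circ B\, \alpha(2-\alpha) \;=\; 0, \qquad \alpha \in (0, 1).
\end{equation*}
I would resolve this by monotonicity: $g(0) = A > 0$, $g(1) = -\mu^\circ B < 0$, and a short computation gives $g'(\alpha) = -(1-\alpha)\bigl[3A(1-\alpha) + 2\mu^\circ B\bigr] < 0$ on $(0,1)$, so $\alpha$ is uniquely determined, after which $\beta$ and $\lambda_0$ follow.

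The main obstacle I expect is this last step: a priori, the cubic $g$ could have several roots in $(0,1)$. The monotonicity argument bypasses an explicit cubic inversion and is what makes $\mathcal Q_\mu$ identifiable without any additional restriction on the parameter space.
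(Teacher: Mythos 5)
Your proof is correct and follows essentially the same route as the paper's: both reduce $f^N_\theta = f^N_{\theta'}$ to an equivalent three-equation algebraic system (your $(A,B,C)$-matching is the paper's evaluation at $\nu \to \infty$, $\nu = 0$, $\nu = 1$ in a slightly cleaner guise), exhibit the same one-parameter family of indistinguishable quadruples (your $m$ is the paper's $\mu/(1-\alpha) - \tau$), and resolve the delicate sub-case via strict monotonicity of a function of $\alpha$ on $(0,1)$ (your cubic $g(\alpha) = A(1-\alpha)^3 - \mu^\circ B\,\alpha(2-\alpha)$ is a rearrangement of the paper's increasing function $\alpha(2-\alpha)/(1-\alpha)^3$). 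The only cosmetic difference is that you invert the system sub-model by sub-model, whereas the paper proves the chain of equivalences $\mu=\mu' \iff \alpha=\alpha' \iff \beta=\beta' \iff \lambda_0=\lambda_0'$ once and deduces all four identifiability claims from it.
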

        \begin{proof}
	        The proof is given in Appendix~\ref{appendix:identifiability_univariate}.
        \end{proof}
        
        The previous discussion raises the question of whether this non-identifiability also extends to the distribution of the noisy Hawkes process.
It turns out that, in the exponential case, Markov properties of the intensity function $\lambda^H$ of the underlying Hawkes process help ensuring identifiability.
Indeed, from the definition of the Hawkes process $H$ and the stationarity of the Poisson process $P$, $\left( \lambda^N(t) \right)_{t \ge 0}$ is a Markov process: it decreases with rate $\beta(\lambda^H(t) - \mu) = \beta (\lambda^N(t) - \mu - \lambda_0)$, and the jumps occurring from the Hawkes process, with rate $\lambda^H(t) = \lambda^N(t) - \lambda_0$, are of size $\alpha \beta$, while the jumps occurring from the Poisson component, with rate $\lambda_0$, have no impact on the intensity of the process. 
Then $\left( \lambda^N(t), N(t) \right)_{t \ge 0}$ is also a Markov process, more specifically a piecewise deterministic Markov process \citep{Davis1984}.
This allows us to use results from \cite{Dassios2011} on the distribution of exponential Hawkes processes to show that the distribution of the exponential noisy Hawkes process is identifiable. 

\begin{proposition}\label{PROP:STATN_IDENTIFIABILITY}
	Let $(\mu, \alpha, \beta, \lambda_0)$ and $(\mu', \alpha', \beta', \lambda_0')$ be two admissible $4$-tuples for the exponential noisy Hawkes model $\mathcal Q$,
	and $N$ and $N'$ respectively defined by these two tuples (Equation~\eqref{eq:intensity_noisy} with kernel \eqref{equ:exp_kernel_1d}).
	Then,
	if $N$ and $N'$ have same distribution
	and $\lambda^N(0)$ (respectively $\lambda^{N'}(0)$) is distributed according to the stationary distribution of $\left( \lambda^N(t) \right)_{t \ge 0}$ (respectively $\bigl( \lambda^{N'}(t) \bigr)_{t \ge 0}$),
	then $(\mu, \alpha, \beta, \lambda_0) = (\mu', \alpha', \beta', \lambda_0')$.
\end{proposition}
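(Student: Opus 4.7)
The plan is to reduce the problem to a one-parameter identification via the second-order spectral information already at hand, and then to close the remaining degree of freedom by exploiting the piecewise deterministic Markov structure of $(\lambda^N(t), N(t))_{t \ge 0}$ together with the explicit stationary law of the exponential Hawkes intensity obtained in \cite{Dassios2011}.

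First, since $N$ and $N'$ are equal in distribution, their spectral densities coincide, and by Proposition~\ref{PROPOSITION:FIXED_UNIVARIATE_IDENTIFIABILITY} (or directly from Equation~\eqref{eq:exponential_spectral_density}) this only pins down three scalar invariants of $(\mu, \alpha, \beta, \lambda_0)$, leaving a one-parameter family of admissible tuples. Parametrising this family by $\alpha \in (0,1)$, the three other parameters $\mu$, $\beta$, $\lambda_0$ become explicit rational functions of $\alpha$ and of the spectral invariants $\beta(1-\alpha)$, $\mu \beta^2 \alpha(2-\alpha)/(1-\alpha)$ and $\mu/(1-\alpha) + \lambda_0$.

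Second, I would extract an additional identifiable functional of the law of $N$ that is not captured by $f^N$. Using the PDMP dynamics recalled just before the statement and the explicit stationary Laplace transform of $\lambda^H(0)$ provided by \cite{Dassios2011}, one can compute in closed form the survival function of the first inter-event time $T_1^N$ at stationarity: conditional on $\lambda^N(0) = \ell$ and on no event before $t$, the intensity decays deterministically, so
\[
  \PP\bigl(T_1^N > t \mid \lambda^N(0) = \ell \bigr) = \exp\Bigl( - (\mu + \lambda_0)\, t - (\ell - \mu - \lambda_0)(1 - \mathrm{e}^{-\beta t})/\beta \Bigr),
\]
and integrating against the Dassios--Zhao stationary law of $\lambda^N(0) = \lambda^H(0) + \lambda_0$ yields an explicit parametric expression. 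Substituting the one-parameter family of the first step into this survival function (or into the joint law of $k$ consecutive inter-event times, which is equally explicit via the PDMP) produces a scalar equation in $\alpha$, which I expect to solve by showing strict monotonicity on $(0,1)$, for instance via a short-time Taylor expansion of $\PP(T_1^N > t)$ beyond the leading order.

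The hard part is exactly this last monotonicity check: the Dassios--Zhao expressions involve transcendental (ODE-derived) functions of the parameters, and one must carefully exclude any accidental coincidence along the one-parameter family identified at the spectral step. A natural backup route is to work at the level of the third-order cumulant spectrum: because $H$ and $P$ are independent, the bispectrum of $N$ equals that of $H$ plus a trivial Poisson constant, and the Hawkes bispectrum can be derived explicitly from the same PDMP/Laplace-transform machinery, again producing a single scalar constraint to add to the three spectral ones and forcing $\alpha = \alpha'$, hence equality of all four parameters.
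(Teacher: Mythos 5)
Your strategy is essentially the one the paper follows: you identify the correct key object, namely the survival function of the first event time at stationarity, obtained by factorising $\PP(\tau_1>t\mid\mathcal H_0\vee\mathcal P_0)$ through the independence of $H$ and $P$ (your conditional formula agrees with the paper's $\mathrm e^{-\lambda_0 t}\mathrm e^{-\mu(t-u_t)-u_t\lambda^H(0)}$ after the substitution $\lambda^H(0)=\ell-\lambda_0$), integrating against the Dassios--Zhao stationary Laplace transform of $\lambda^H(0)$, and then Taylor-expanding in $t$ near $0$ to extract a scalar constraint on $\alpha$ closed by a monotonicity argument. A minor structural difference: the paper does not route through the spectral invariants at all. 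It extracts four equations directly from $D(t)=-\frac{\dd}{\dd t}\log\PP(\tau_1>t)$, and the first of these, $\mu+\lambda_0=\mu'+\lambda_0'$ from $\lim_{t\to\infty}D(t)$, is \emph{not} a spectral invariant (the spectrum only gives $\mu/(1-\alpha)+\lambda_0$), so the time-domain analysis already breaks the spectral degeneracy before the Taylor expansion is even invoked. Your reduction to a one-parameter family is therefore correct but unnecessary.

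The genuine gap is the step you yourself flag as ``the hard part'': you never carry out the short-time expansion, never exhibit the scalar function of $\alpha$ that must be injective, and never prove its monotonicity --- you only state that you ``expect'' to be able to do so. Since the whole content of the proposition is concentrated in exactly this computation, the proof is incomplete as written. For the record, the paper resolves it by expanding $D(t)$ to second order in $t$, obtaining after normalisation by the lower-order coefficients the function $g(\alpha)=\bigl(\alpha^3-\alpha^2-3\alpha+6\bigr)/(\alpha-2)^2$, and checking that $g$ is strictly increasing on $(0,1)$, which forces $\alpha=\alpha'$ and then equality of the remaining parameters via the other three equations. Your concern about ``accidental coincidences'' is thus legitimate but is settled by an explicit, elementary monotonicity check rather than by anything transcendental; the ODE-derived exponentials in the Dassios--Zhao formula disappear once one differentiates $-\log\PP(\tau_1>t)$ and expands at $t=0$. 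The proposed bispectrum backup is plausible in principle (cumulants of independent superpositions add) but is likewise only a sketch and is not needed.
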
 
\begin{proof}
  See Appendix~\ref{app:felix1}.
\end{proof}

A consequence of this result is that non-identifiability of our proposed method in the exponential case is a shortcoming of the spectral approach itself rather than an underlying property of the noisy Hawkes process, presumably stemming from the fact that the spectral density only encodes the second order moments of the process.
In the forthcoming section, we briefly investigate whether this issue arises when considering other reproduction kernels.

      \subsection{Beyond the exponential model}
        Let $N$ be a noisy Hawkes process defined by the superposition of a stationary 	Hawkes process with baseline intensity $\mu$ and kernel function $\alpha h$, with $\alpha \in (0,1)$, $h : \RR \to \RR_{\ge 0}$ and $\|h\|_1 = 1$, and a homogeneous Poisson process with intensity $\lambda_0$.
Per Corollary~\ref{corollary:noisy_spectral_density}, its spectral density is given by
\begin{equation*}
  \forall \nu \in \RR: \quad
	f^N(\nu) = \frac{\mu}{(1 - \alpha) \left \lvert 1 - \alpha \tilde h(\nu) \right \rvert^2} + \lambda_0 \,.
\end{equation*}

While it may be difficult to show that a model is identifiable from the spectral density expression, it may prove fruitful to look at its Taylor expansion around 0, and analyse the Taylor coefficients. 
For example, considering the uniform kernel and the corresponding Taylor expansion of the spectral density up to order 2, we get the following proposition.

\begin{proposition}\label{PROP:RECTANGLE}
  Let us consider a rectangle interaction function
  \[
    h : t \in \RR \mapsto \phi^{-1} \II_{0 \le t \le \phi}\,,
  \]
  for some kernel parameter $\phi > 0$,
  and the corresponding statistical model for a univariate noisy Hawkes process:
	\begin{equation*}
		\mathcal R = \left \{ f_\theta^N: \RR \to \mathbb C, \theta = (\mu, \alpha, \phi, \lambda_0) \in \mathbb R_{>0} \times (0, 1) \times \mathbb R_{>0} \times \mathbb R_{>0} \right\} \,.
	\end{equation*}
	Then $\mathcal R$ is identifiable.
\end{proposition}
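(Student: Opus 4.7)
The plan is to show that, if two parameter tuples $\theta = (\mu, \alpha, \phi, \lambda_0)$ and $\theta' = (\mu', \alpha', \phi', \lambda_0')$ yield $f^N_\theta = f^N_{\theta'}$, then $\theta = \theta'$. I would extract four pieces of information from $f^N_\theta$: the limit at infinity, the frequency of the asymptotic oscillations, and the first two even Taylor coefficients at the origin.

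Since $\tilde h_\phi(\nu) = (1-\mathrm e^{-2\pi\iu\nu\phi})/(2\pi\iu\nu\phi)$ tends to $0$ as $|\nu| \to \infty$, the asymptotic limit $\lim_{\nu\to\infty} f^N_\theta(\nu) = \mu/(1-\alpha) + \lambda_0 =: m^N$ is read off directly. A sharper expansion based on $|1-\alpha\tilde h_\phi|^{-2} = 1 + 2\alpha \operatorname{Re}(\tilde h_\phi) + O(|\tilde h_\phi|^2)$ and $\operatorname{Re}(\tilde h_\phi(\nu)) = \sin(2\pi\nu\phi)/(2\pi\nu\phi)$ yields
\[
  f^N_\theta(\nu) - m^N \;\sim\; \frac{\alpha\mu}{(1-\alpha)\,\pi\phi\,\nu}\,\sin(2\pi\nu\phi), \qquad \nu \to \infty.
\]
The main obstacle is to convert this into a rigorous identification of $\phi$. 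Starting from $f^N_\theta = f^N_{\theta'}$, multiplying by $\nu$, and taking $\nu\to\infty$, one obtains $A\sin(2\pi\nu\phi) - A'\sin(2\pi\nu\phi') = O(1/\nu)$ for positive constants $A, A'$ built from the parameters. Evaluating at $\nu = k/\phi$ (so the first sinusoid vanishes) forces $A'\sin(2\pi k\phi'/\phi) = O(1/k)$; Weyl's equidistribution theorem (if $\phi'/\phi \notin \mathbb Q$) or finite cyclic structure (if $\phi'/\phi \in \mathbb Q \setminus \{1\}$) then contradicts $A' > 0$ unless $\phi' = \phi$.

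With $m^N$ and $\phi$ identified, the Taylor expansion of $f^N_\theta$ at $\nu = 0$ up to order $2$,
\[
  f^N_\theta(\nu) \;=\; \Bigl(\frac{\mu}{(1-\alpha)^3} + \lambda_0\Bigr) \;-\; \frac{\mu\,\alpha(4-\alpha)(\pi\phi)^2}{3(1-\alpha)^5}\,\nu^2 \;+\; O(\nu^4),
\]
produces two coefficients $c_0$ and $c_2$. From these and $m^N$ one computes $c_0 - m^N = \mu\alpha(2-\alpha)/(1-\alpha)^3$, and the ratio
\[
  \frac{c_2}{c_0 - m^N} \;=\; -\,\frac{(4-\alpha)(\pi\phi)^2}{3\,(2-\alpha)(1-\alpha)^2}
\]
depends only on $\alpha$ (since $\phi$ is fixed), reducing identifiability to showing that the map $g : \alpha \in (0,1) \mapsto (4-\alpha)/[(2-\alpha)(1-\alpha)^2]$ is injective. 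This follows from the logarithmic-derivative identity
\[
  \frac{g'(\alpha)}{g(\alpha)} \;=\; -\,\frac{1}{4-\alpha} + \frac{1}{2-\alpha} + \frac{2}{1-\alpha} \;=\; \frac{2}{(4-\alpha)(2-\alpha)} + \frac{2}{1-\alpha} \;>\; 0,
\]
so $g$ is strictly increasing on $(0,1)$; hence $\alpha = \alpha'$. Then $c_0 - m^N$ recovers $\mu = \mu'$ and $m^N$ recovers $\lambda_0 = \lambda_0'$, completing identifiability.
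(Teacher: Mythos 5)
Your overall strategy is sound and genuinely different from the paper's. The paper works entirely at $\nu = 0$ and $\nu \to \infty$: it extracts $m^N$ from the limit at infinity, $f^N(0)$, and then \emph{both} the order-$2$ and order-$4$ Taylor coefficients at the origin; since the order-$2$ coefficient only yields a relation coupling $\phi^2$ to a function of $\alpha$, the paper must push to order $4$ to isolate a strictly monotone function of $\alpha$ alone (a cubic-rational expression whose monotonicity has to be checked separately). You instead identify $\phi$ first from the frequency of the $O(1/\nu)$ oscillation of $f^N_\theta(\nu) - m^N$ at high frequency --- a feature the exponential kernel does not have, which is a nice structural explanation of why the rectangle model is identifiable --- and this lets you stop at the order-$2$ Taylor coefficient, where the relevant function $\alpha \mapsto (4-\alpha)/[(2-\alpha)(1-\alpha)^2]$ is visibly log-convex-increasing. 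Your computations (the asymptotic $\frac{\mu\alpha}{(1-\alpha)\pi\phi\nu}\sin(2\pi\nu\phi)$, the value $c_0 - m^N = \mu\alpha(2-\alpha)/(1-\alpha)^3$, the coefficient of $\nu^2$, and the logarithmic derivative) all check out against the paper's Lemma on Taylor coefficients.

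There is, however, one step that fails as written: the exclusion of rational ratios $\phi'/\phi \neq 1$. Sampling at $\nu = k/\phi$ only forces $\sin(2\pi k \phi'/\phi) \to 0$, and this holds identically whenever $2\phi'/\phi$ is an integer --- for instance $\phi' = 2\phi$ or $\phi' = \phi/2$ passes your test, and the symmetric test at $\nu = k/\phi'$ does not rule these out either. So "finite cyclic structure contradicts $A' > 0$ unless $\phi' = \phi$" is not true for the sampling points you chose. The gap is repairable within your strategy: sample instead at $\nu_k = (k + 1/4)/\phi$, where $\sin(2\pi\nu_k\phi) = 1$, so that $A' \sin\bigl(2\pi(k+1/4)\phi'/\phi\bigr) \to A > 0$; for $\phi'/\phi = m/2$ with integer $m \neq 2$ the left-hand side is either identically $A'\sin(\pi m/4)$ with $\sin(\pi m /4) \le \frac{\sqrt2}{2} < 1$ combined with $A'=A$ from the equidistribution/limit comparison, or alternates in sign, and in all cases a contradiction follows. (Alternatively, invoke uniqueness of Fourier--Bohr coefficients: an almost periodic function $A\sin(2\pi\nu\phi) - A'\sin(2\pi\nu\phi')$ with $A, A' > 0$ and $\phi \neq \phi'$ cannot vanish at infinity.) With that repair your proof is complete and, arguably, lighter than the paper's since it avoids the fourth-order expansion.
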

\begin{proof}
  See Appendix~\ref{app:felix3}.
\end{proof}

This last proposition shows that non-identifiability of the spectral approach for the noisy exponential Hawkes process is more a consequence of the choice of the reproduction function $h$ rather than a general shortcoming of the spectral approach.
It is unexpected that the exponential reproduction function, usually chosen because the Markov properties for the resulting Hawkes intensity simplify calculations \citep{Ozaki1979, DaFonseca2013, Duarte2019}, seems to be here the main culprit of non-identifiability for our proposed spectral approach. 
Still, we will show how, by imposing some constraints on the modelling of multivariate noisy Hawkes processes, we are able to ensure identifiability of the model even in this case.

    \section{The bivariate noisy Hawkes process}\label{sec:dim2}
    
      This section addresses bivariate noisy Hawkes processes ($d=2$).
      More precisely, for such a process $N = H + P$, where $H$ is a bivariate Hawkes process (see Equation~\eqref{eq:hawkes_intensity}) and $P$ a Poisson process with shared intensity $\lambda_0$,
      Corollary~\ref{corollary:noisy_spectral_density_2} gives the closed-form expression of the spectral density $\mathbf f^N$.
      
      \begin{corollary}\label{corollary:noisy_spectral_density_2}
        Let $N = (N_1, N_2)$ be a bivariate noisy Hawkes process defined by the superposition of a stationary Hawkes process $H = (H_1, H_2)$
        (with baseline intensities $\mu_1 > 0$ and $\mu_2 > 0$, and kernel functions $h_{ij} : \RR \to \RR_{\geq 0}$, $1 \le i, j \le 2$)
        and an independent homogeneous Poisson process $P = (P_1, P_2)$ (with same constant intensity $\lambda_0 > 0$).
        Then the spectral density $\mathbf f^N$ of $N$ reads:
        \begin{equation*}
          \forall \nu \in \RR, \quad
          \mathbf f^N(\nu) =
          \begin{pmatrix}
            f_{11}^H(\nu) + \lambda_0 & f_{12}^H(\nu) \\
            f_{21}^H(\nu) & f_{22}^H(\nu) + \lambda_0
          \end{pmatrix} \,,
        \end{equation*}
        where
        \[
          \begin{cases}
            f_{11}^H (\nu) = \frac{m_1^H \left \lvert 1 - \tilde h_{22}(\nu) \right \rvert^2 + m_2^H \left \lvert  \tilde h_{12}(\nu) \right \rvert^2}{\left \lvert \left (1-\tilde h_{11}(\nu) \right) \left (1-\tilde h_{22}(\nu) \right) - \tilde h_{12}(\nu) \tilde h_{21}(\nu) \right \rvert^2}\\
            f_{12}^H(\nu) = \frac{m_1^H \left (1-\tilde h_{22}(\nu) \right)\tilde h_{21}(-\nu) + m_2^H \left (1-\tilde h_{11}(-\nu) \right) \tilde h_{12}(\nu)}{\left \lvert \left (1-\tilde h_{11}(\nu) \right) \left (1-\tilde h_{22}(\nu) \right) - \tilde h_{12}(\nu) \tilde h_{21}(\nu) \right \rvert^2}
          \end{cases} \,,
        \]
        and
        \[
          m_1^H = \frac{\mu_1\left( 1 - \|h_{22}\|_1 \right)  + \mu_2 \|h_{12}\|_1 }{\left( 1 - \|h_{11}\|_1 \right)\left( 1 - \|h_{22}\|_1 \right) - \|h_{12}\|_1 \|h_{21}\|_1} \, ,
        \]
        and $f_{22}^H$, $f_{21}^H$ and $m_2^H$ are obtained by symmetry of all indices.
      \end{corollary}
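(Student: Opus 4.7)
The plan is to specialise the general multivariate result already established (Equation~\eqref{eq:noisy_spectral_matrix} combined with Equation~\eqref{eq:multivariate_spectral_matrix}) to the case $d=2$ and to carry out the resulting $2\times 2$ matrix algebra explicitly. By Proposition~\ref{PROP:SUM_OF_SPECTRAL_DENSITIES}, $\mathbf f^N(\nu) = \mathbf f^H(\nu) + \mathbf f^P(\nu)$, and since the Poisson component has spectral density $\lambda_0 I_2$, only the two diagonal entries pick up the additive $\lambda_0$ term. This immediately reduces the problem to computing each entry of $\mathbf f^H(\nu)$ from Equation~\eqref{eq:multivariate_spectral_matrix}.

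The key computational step is to invert $I_2 - \tilde{\mathbf h}(\nu)$ using the standard $2\times 2$ inverse formula. Writing $D(\nu) = (1 - \tilde h_{11}(\nu))(1 - \tilde h_{22}(\nu)) - \tilde h_{12}(\nu)\tilde h_{21}(\nu)$ for the determinant, I would expand
\[
  \mathbf f^H(\nu) = \frac{1}{D(\nu)}
  \begin{pmatrix} 1 - \tilde h_{22}(\nu) & \tilde h_{12}(\nu) \\ \tilde h_{21}(\nu) & 1 - \tilde h_{11}(\nu) \end{pmatrix}
  \begin{pmatrix} m_1^H & 0 \\ 0 & m_2^H \end{pmatrix}
  \frac{1}{\overline{D(\nu)}}
  \begin{pmatrix} 1 - \overline{\tilde h_{22}(\nu)} & \overline{\tilde h_{21}(\nu)} \\ \overline{\tilde h_{12}(\nu)} & 1 - \overline{\tilde h_{11}(\nu)} \end{pmatrix},
\]
using that $\tilde h_{ij}(-\nu) = \overline{\tilde h_{ij}(\nu)}$ because $h_{ij}$ is real-valued, so that $(I_2 - \tilde{\mathbf h}(-\nu)^T)^{-1}$ is the conjugate transpose of $(I_2 - \tilde{\mathbf h}(\nu))^{-1}$. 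The denominator becomes $|D(\nu)|^2$, matching the expression in the statement. Multiplying the three factors and reading off the $(1,1)$ entry yields $m_1^H|1-\tilde h_{22}(\nu)|^2 + m_2^H|\tilde h_{12}(\nu)|^2$ in the numerator, and reading off the $(1,2)$ entry gives $m_1^H(1-\tilde h_{22}(\nu))\overline{\tilde h_{21}(\nu)} + m_2^H(1-\overline{\tilde h_{11}(\nu)})\tilde h_{12}(\nu)$, which coincides with the claimed formula after rewriting the conjugates as Fourier transforms evaluated at $-\nu$. The entries $f_{22}^H$ and $f_{21}^H$ follow by swapping the indices $1\leftrightarrow 2$.

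For the mean intensities, I would specialise the vector equation $(I_2 - \tilde{\mathbf h}(0))\,m^H = \mu$ from the paragraph following Equation~\eqref{eq:multivariate_spectral_matrix}. Since $\tilde h_{ij}(0) = \|h_{ij}\|_1$, inverting this $2\times 2$ system via Cramer's rule gives
\[
  m_1^H = \frac{\mu_1(1-\|h_{22}\|_1) + \mu_2 \|h_{12}\|_1}{(1-\|h_{11}\|_1)(1-\|h_{22}\|_1) - \|h_{12}\|_1\|h_{21}\|_1},
\]
and analogously for $m_2^H$. The stationarity assumption $\rho(S) < 1$ guarantees that the denominator is non-zero and positive, so the inverse is well defined.

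The proof is therefore essentially bookkeeping: no analytic obstacle arises once Proposition~\ref{PROP:SUM_OF_SPECTRAL_DENSITIES} and Equation~\eqref{eq:multivariate_spectral_matrix} are in hand. The only care point is tracking conjugates correctly when expanding $(I_2 - \tilde{\mathbf h}(-\nu)^T)^{-1}$, which is straightforward provided one uses $\tilde h_{ij}(-\nu) = \overline{\tilde h_{ij}(\nu)}$ throughout.
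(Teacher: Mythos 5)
Your proposal is correct and follows exactly the route the paper intends: the paper's own proof simply states that the corollary is ``straightforward from Equations~\eqref{eq:multivariate_spectral_matrix} and \eqref{eq:noisy_spectral_matrix},'' and your explicit $2\times 2$ inversion, the identity $\tilde h_{ij}(-\nu) = \overline{\tilde h_{ij}(\nu)}$, and Cramer's rule for $m^H$ are precisely the bookkeeping that claim leaves implicit. No gaps.
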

      \begin{proof}
        This is straightforward from Equations~\eqref{eq:multivariate_spectral_matrix} and \eqref{eq:noisy_spectral_matrix}.
      \end{proof}
      
      Then, the estimation procedure is exactly that described in Section~\ref{sec:setting},
      which is based on computing the cross-periodogram $\mathbf I^T$ and on maximising the spectral log-likelihood $\ell_T$ (see Equations~\eqref{eq:multivariate_periodogram} and \eqref{eq:spectral_log_likelihood}).
      Now, similarly to the univariate case detailed in Section~\ref{sec:expon_1d}, we consider exponential interaction functions, \ie for $1 \le i, j \le 2$:
      \[
        \forall t \in \RR: \quad
        h_{ij}(t) = \alpha_{ij} \beta_i \mathrm e^{-\beta_i t} \II_{t \ge 0},
      \]
      with $\alpha_{ij} \ge 0$ and $\beta_i > 0$.
      The kernel parameter is thus $\gamma = (\alpha, \beta)$, where $\alpha \in \RR_{\ge 0}^{2 \times 2}$ and $\beta \in \RR_{>0}^2$ and the statistical model for a bivariate noisy Hawkes process becomes:
      \[
        \mathcal Q_\Lambda =
        \left\{
          \mathbf{f}_{\theta}^N : \RR \to \mathbb C^{2 \times 2},
          \theta = (\mu, \alpha, \beta, \lambda_0)
          \in \RR_{>0}^2\times \Lambda \times \RR_{>0}^2 \times \RR_{>0},
          \beta \in \Omega_\alpha
        \right\}\,,
      \]
      where $\Lambda \subset \{\alpha \in \RR_{\geq 0}^{2\times 2} : \rho(\alpha) < 1\}$ is subset of matrices $\alpha$ that will be specified later,
      and for all $\alpha \in \RR_{\geq 0}^{2\times 2}$,
      \[
        \Omega_\alpha =
        \left\{
          \beta \in \RR_{>0}^2,
          \beta_1 = 1 \text{ if } \alpha_{11} = \alpha_{12} = 0,
          \beta_2 = 1 \text{ if } \alpha_{21} = \alpha_{22} = 0
        \right\} \,,
      \]
      is a subset of admissible values for $\beta$.
      The definition of $\Omega_\alpha$
      takes into account that when a row, say the first one, of the interaction matrix $\alpha$ is null, then the corresponding kernels verify $h_{11}=0$ and $h_{12}=0$ independently of the value of $\beta_1$.
      Thus, identifiability for the parameter $\beta_1$ is hopeless, which justifies that we get rid of it from the model (by fixing it to an arbitrary value).
      
      \begin{remark}
        Different versions of the multivariate exponential model exist. A first convention assumes that there is a unique $\beta\in\RR_{>0}$ shared by all kernel functions \citep{Chevallier2019, Bacry2020}. A second and less restrictive option, which is that we opt for in this paper, assumes that the recovery rate $\beta_i\in\RR_{>0}$ for each subprocess $N_i$ $(1 \le i \le d)$ is shared among received interactions \citep{Bonnet2023}. These choices allow for simplified derivations of estimators in the time domain and in the frequency domain, as shown below.  
      \end{remark}
      
      The aim of this section is to study identifiability of model $\mathcal Q_\Lambda$.
      A broad analysis (\ie for $\Lambda = \{\alpha \in \RR_{\geq 0}^{2\times 2} : \rho(\alpha) < 1\}$) being out of reach for complexity reasons,
      we exhibit some situations (\ie subsets $\Lambda$) for which non-identifiability (Proposition~\ref{PROPOSITION:BI_NON_IDENTIFIABLE}) or identifiability (Proposition~\ref{PROPOSITION:BI_IDENTIFIABLE}) can be proved.
      
      \begin{proposition}[Non-identifiability in the bivariate setting]\label{PROPOSITION:BI_NON_IDENTIFIABLE}
      	The model $\mathcal Q_\Lambda$ is not identifiable in the three situations:
        \begin{enumerate}
          \item \label{hyp:bi_non_identifiable_1} $\Lambda = \left\{ \begin{pmatrix} \alpha_{11} & 0 \\ 0 & \alpha_{22} \end{pmatrix}, 0 \le \alpha_{11}, \alpha_{22} < 1 \right\}$, that is for diagonal matrices $\alpha$ (with possibly null entries).
          \item \label{hyp:bi_non_identifiable_2} $\Lambda = \left\{ \begin{pmatrix} \alpha_{11} & \alpha_{12} \\ 0 & 0 \end{pmatrix}, 0 < \alpha_{11} < 1, \alpha_{12} > 0 \right\}$, that is for matrices with positive entries in the first row and null entries in the second.
          \item \label{hyp:bi_non_identifiable_2_bis} $\Lambda = \left\{ \begin{pmatrix} 0 & 0 \\ \alpha_{21} & \alpha_{22} \end{pmatrix}, \alpha_{21} > 0, 0 < \alpha_{22} < 1 \right\}$, that is for matrices with null entries in the first row and positive entries in the second.
        \end{enumerate}
      \end{proposition}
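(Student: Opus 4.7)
The strategy is to exhibit, for each of the three cases, a non-trivial family of admissible parameters producing the same spectral density matrix $\mathbf f^N$, relying on the explicit formulas in Corollary~\ref{corollary:noisy_spectral_density_2}. For Case~\ref{hyp:bi_non_identifiable_1} (diagonal $\alpha$), the vanishing of $h_{12}$ and $h_{21}$ immediately decouples the bivariate noisy process into two independent univariate noisy Hawkes processes. The spectral matrix becomes diagonal with entries $f_{ii}^N(\nu) = m_i^H/|1-\tilde h_{ii}(\nu)|^2 + \lambda_0$, each being a univariate exponential noisy Hawkes spectral density to which Proposition~\ref{PROPOSITION:FIXED_UNIVARIATE_IDENTIFIABILITY} applies. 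Rewriting \eqref{eq:exponential_spectral_density} as a Lorentzian plus a constant, one checks that the univariate non-identifiability is in fact a one-parameter family indexed by $\lambda_0$: for any perturbed common $\lambda_0'$ in a neighborhood of the original, the three scalars identifiable from each $f_{ii}^N$ uniquely determine a perturbed triple $(\mu_i', \alpha_{ii}', \beta_i')$. Since the perturbation of $\lambda_0$ is shared by both subprocesses, this yields an admissible bivariate parameter distinct from the original but with the same $\mathbf f^N$.

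For Case~\ref{hyp:bi_non_identifiable_2}, the identically zero second row of $\alpha$ forces $\lambda_2^H(t) \equiv \mu_2$, so $H_2$ is Poisson and $N_2$ is Poisson with constant rate $\mu_2 + \lambda_0$. Hence $f_{22}^N \equiv \mu_2 + \lambda_0$, an immediate symptom of non-identifiability, since this entry only observes the sum of the two parameters. Substituting $\tilde h_{21} = \tilde h_{22} = 0$ together with the exponential Fourier transform \eqref{eq:fourier_exponential} into Corollary~\ref{corollary:noisy_spectral_density_2} gives
\[
  f_{12}^N(\nu) = \frac{\mu_2 \alpha_{12}\beta_1}{\beta_1(1-\alpha_{11}) + 2\pi\iu\nu}, \qquad
  f_{11}^N(\nu) = \frac{\mu_1+\mu_2\alpha_{12}}{1-\alpha_{11}} + \lambda_0 + \frac{B}{\beta_1^2(1-\alpha_{11})^2 + 4\pi^2\nu^2},
\]
with $B$ an explicit polynomial in the parameters. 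Inspection yields exactly five independent scalar observables: $\mu_2+\lambda_0$ from $f_{22}^N$; the numerator $\mu_2\alpha_{12}\beta_1$ and the denominator coefficient $\beta_1(1-\alpha_{11})$ of $f_{12}^N$; and the limit at infinity $(\mu_1+\mu_2\alpha_{12})/(1-\alpha_{11}) + \lambda_0$ together with the Lorentzian amplitude $B$ of $f_{11}^N$, the Lorentzian width $\beta_1^2(1-\alpha_{11})^2$ being redundant with the denominator already extracted from $f_{12}^N$. Matched against the six free parameters $(\mu_1,\mu_2,\alpha_{11},\alpha_{12},\beta_1,\lambda_0)$ (with $\beta_2 = 1$ pinned by the convention defining $\Omega_\alpha$), this leaves a one-dimensional fiber. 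I would parameterize it by $\beta_1$, solve explicitly for the remaining five unknowns from the five identifiability equations, and verify by continuity that the perturbed parameters stay in the admissible set for $\beta_1$ close enough to its original value.

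Case~\ref{hyp:bi_non_identifiable_2_bis} is the mirror image of Case~\ref{hyp:bi_non_identifiable_2} under the index swap $1\leftrightarrow 2$, under which Corollary~\ref{corollary:noisy_spectral_density_2} is manifestly symmetric, so no separate computation is required. The main obstacle lies in Case~\ref{hyp:bi_non_identifiable_2}: one must confirm that $f_{11}^N$ contributes no fourth independent scalar beyond the two listed above, and that the resulting five-equation system admits a genuine one-parameter family of solutions entirely within the admissible domain $\RR_{>0}\times(0,1)\times\RR_{>0}\times\RR_{>0}$; a local perturbation argument around a generic reference parameter should handle both points simultaneously.
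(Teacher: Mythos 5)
Your strategy is the one the paper follows: Case~\ref{hyp:bi_non_identifiable_1} reduces to the univariate non-identifiability of Proposition~\ref{PROPOSITION:FIXED_UNIVARIATE_IDENTIFIABILITY} applied coordinate-wise with a \emph{shared} perturbation of $\lambda_0$, Case~\ref{hyp:bi_non_identifiable_2} rests on the observation that the spectral matrix is determined by exactly five scalar functionals of the six free parameters, and Case~\ref{hyp:bi_non_identifiable_2_bis} follows by swapping indices. Your five observables ($\mu_2+\lambda_0$, the numerator and pole of $f_{12}^N$, the limit at infinity and the Lorentzian amplitude of $f_{11}^N$) are precisely the constants $A,B,C,D,E$ introduced in the paper's proof.

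Where your argument is not yet a proof is the passage, in Case~\ref{hyp:bi_non_identifiable_2}, from ``five equations in six unknowns'' to ``there exists a genuine one-parameter family of admissible solutions.'' A degrees-of-freedom count does not establish this on its own: the fiber could a priori be a single point if the map fails to be submersive in the direction you want to move, and even if it is a curve it must be shown to leave the coordinate you fix actually varying and to remain inside the admissible domain. The paper closes this gap constructively: it parameterizes the fiber by $\tau=\lambda_0'-\lambda_0$, writes closed-form expressions for $(\mu_1',\mu_2',\alpha_{11}',\alpha_{12}',\beta_1')$ preserving $A,\dots,E$ (involving an auxiliary quantity $\kappa_\tau$), and derives an explicit interval of $\tau$ on which all admissibility constraints hold. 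Your proposed route (parameterize by $\beta_1$, solve, argue by continuity) would work, but the explicit resolution of the system --- or at least a verification of the rank condition needed for an implicit-function-theorem argument --- is the substantive content of the proof and is left unexecuted. A second, minor omission: in Case~\ref{hyp:bi_non_identifiable_1} the set $\Lambda$ allows $\alpha_{ii}=0$, for which Proposition~\ref{PROPOSITION:FIXED_UNIVARIATE_IDENTIFIABILITY} does not apply verbatim (the univariate model requires $\alpha\in(0,1)$ and the diagonal entry degenerates to the constant $\mu_i+\lambda_0$); the fix is immediate ($\mu_i'=\mu_i-\tau$, $\beta_i'=\beta_i$, as the paper notes), and one must also check that the shared $\tau$ can be taken in a common interval valid for both coordinates simultaneously.
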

      \begin{proof}
      	The proof is given in Appendix~\ref{appendix:bi_non_identifiable}.
      \end{proof}
      
      \begin{remark} \label{rem:non_identifiability_submodels}
        The proof of Proposition~\ref{PROPOSITION:BI_NON_IDENTIFIABLE}, Situation~\ref{hyp:bi_non_identifiable_1} reveals that non-identifiability stands actually for each subprocess (considered as a univariate process),
        such that all the submodels with null cross-interactions built by fixing $\alpha_{11}$ or $\alpha_{22}$ to zero,
        or by keeping them away from zero are also not identifiable.
      \end{remark}
      
      \begin{proposition}[Identifiability in the bivariate setting]\label{PROPOSITION:BI_IDENTIFIABLE}
      	The model $\mathcal Q_\Lambda$ is identifiable in the four situations:
        \begin{enumerate}
          \item \label{HYP:BI_IDENTIFIABLE_1} $\Lambda = \left\{ \begin{pmatrix} \alpha_{11} & 0 \\ \alpha_{21} & 0 \end{pmatrix}, 0 \le \alpha_{11} < 1, \alpha_{21} > 0 \right\}$, that is for matrices $\alpha$ with null entries in the second column and a positive entry on the antidiagonal.
          \item \label{HYP:BI_IDENTIFIABLE_1_BIS} $\Lambda = \left\{ \begin{pmatrix} 0 & \alpha_{12} \\ 0 & \alpha_{22} \end{pmatrix}, \alpha_{12} > 0, 0 \le \alpha_{22} < 1 \right\}$, that is for matrices $\alpha$ with null entries in the first column and a positive entry on the antidiagonal.
          \item \label{HYP:BI_IDENTIFIABLE_2} $\Lambda = \left\{ \begin{pmatrix} \alpha_{11} & 0 \\ \alpha_{21} & \alpha_{22} \end{pmatrix}, 0 < \alpha_{11} < 1, \alpha_{21} > 0 , 0 \le \alpha_{22} < 1 \right\}$, that is for matrices $\alpha$ with positive entries in the first column and null upper right entry.
          \item \label{HYP:BI_IDENTIFIABLE_2_BIS} $\Lambda = \left\{ \begin{pmatrix} \alpha_{11} & \alpha_{12} \\ 0 & \alpha_{22} \end{pmatrix}, 0 \le \alpha_{11} < 1, \alpha_{12} > 0, 0 < \alpha_{22} < 1 \right\}$, that is for matrices $\alpha$ with a null lower left entry and positive entries in the second column.
        \end{enumerate}
        
        In addition, denoting respectively \(\Lambda_1\), \(\dots\), \(\Lambda_4\) the four sets described above,
        the model \(Q_{\cup_{j=1}^4 \Lambda_j} = \cup_{j=1}^4 Q_{\Lambda_j}\) is identifiable.
      \end{proposition}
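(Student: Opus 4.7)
The plan is to prove identifiability separately for each of the four sparsity patterns $\Lambda_1,\ldots,\Lambda_4$, and then to argue that the sparsity pattern itself is determined by $\mathbf f^N_\theta$, which yields identifiability of the union.

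For each of Situations 1--4, the prescribed zero pattern of $\alpha$ forces one of $\tilde h_{12}$ or $\tilde h_{21}$ to vanish, so the denominator $|(1-\tilde h_{11})(1-\tilde h_{22}) - \tilde h_{12}\tilde h_{21}|^2$ appearing in Corollary~\ref{corollary:noisy_spectral_density_2} factorises as $|1-\tilde h_{11}|^2\,|1-\tilde h_{22}|^2$. Combined with the exponential form $\tilde h_{ij}(\nu) = \alpha_{ij}\beta_i/(\beta_i+2\pi\iu\nu)$, every entry of $\mathbf f^N_\theta$ becomes a rational function in $\nu$ with an explicit closed form, and identifiability reduces to matching these rational functions. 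The per-situation argument proceeds by reading off three kinds of invariants:
\begin{enumerate}
\item \textbf{High-frequency limits}: $f_{ii}^N(\nu)\to m_i^H+\lambda_0$ and $\nu f_{12}^N(\nu)$ tends to an explicit multiple of $m_1^H\alpha_{21}\beta_2$ or $m_2^H\alpha_{12}\beta_1$;
\item \textbf{Low-frequency values}: $f_{ii}^N(0)$ and $f_{12}^N(0)$, which depend on ratios of the form $m_i^H/(1-\alpha_{ii})^2$;
\item \textbf{Poles and zeros in $u=4\pi^2\nu^2$}: poles of $f_{ii}^N$ locate $(1-\alpha_{ii})\beta_i$, while zeros of the real part of $f_{12}^N$ locate $\beta_1$ or $\beta_2$ alone, so the pair $(\alpha_{ii},\beta_i)$ gets decoupled.
\end{enumerate}
In Situations~1 and 2, where one column of $\alpha$ is zero, one subprocess reduces to a Cox process driven by the other; the isolated diagonal entry $f_{ii}^N$ is a univariate noisy Hawkes density whose parameters would be non-identifiable on its own (by Proposition~\ref{PROPOSITION:FIXED_UNIVARIATE_IDENTIFIABILITY}), but the extra information from $f_{12}^N$ and $f_{22}^N$ pins down the remaining degree of freedom (the cross-rate $\beta_2$ or $\beta_1$ fixed by $\Omega_\alpha$). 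In Situations~3 and 4 (triangular $\alpha$), both subprocesses self-excite and the recovery must be simultaneous: the plan is to first extract $(\alpha_{11},\beta_1,\mu_1)$ and $\lambda_0$ partially from $f_{11}^N$, then recover $\beta_1$ and $(1-\alpha_{22})\beta_2$ independently from the pole/zero structure of $f_{12}^N$, which in turn separates $\lambda_0$ from $m_1^H$, and finally recover $(\mu_2,\alpha_{22})$ from $f_{22}^N$ and $\alpha_{21}$ (or $\alpha_{12}$) from the leading coefficient of $f_{12}^N$.

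For the union statement, the remaining task is to show that no $\theta\in\Lambda_j$ and $\theta'\in\Lambda_{j'}$, $j\neq j'$, yield the same $\mathbf f^N$ unless the two parameters agree on a common overlap of $\Lambda_j$ and $\Lambda_{j'}$. The plan here is to show that the support of the interaction matrix is itself recoverable from $\mathbf f^N_\theta$. The key handle is $f_{12}^N=f_{12}^H$: when $\alpha_{12}=0$, $f_{12}^H$ reduces to a single rational function carrying the factor $\tilde h_{21}(-\nu)$, whose only pole in the complex $\nu$-plane lies at $\nu=-\iu\beta_2/(2\pi)$ (lower half plane); when $\alpha_{21}=0$ the analogous pole lies at $+\iu\beta_1/(2\pi)$ (upper half plane). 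Inspecting the analytic continuation of $f_{12}^N$ thus distinguishes $\Lambda_1\cup\Lambda_3$ (where $\alpha_{12}=0$) from $\Lambda_2\cup\Lambda_4$ (where $\alpha_{21}=0$). Within each of these two families, whether $\alpha_{22}=0$ or $\alpha_{11}=0$ is decided by whether an additional pole/zero cancellation occurs in the rational expression for $f_{22}^N$ or $f_{11}^N$, thus distinguishing $\Lambda_1$ from $\Lambda_3$ and $\Lambda_2$ from $\Lambda_4$.

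The main obstacle I expect is the cross-identifiability step between $\Lambda_1$ and $\Lambda_3$ (and symmetrically $\Lambda_2$ and $\Lambda_4$): the two submodels share a codimension-one boundary ($\alpha_{22}=0$ in the case of $\Lambda_1$ vs.\ $\Lambda_3$), and one must verify that parameters on this boundary arise consistently in both parametrisations, so that the union really is a single identifiable model rather than two models glued along a non-unique interface. The convention $\beta_2=1$ when $\alpha_{21}=\alpha_{22}=0$ encoded in $\Omega_\alpha$ plays a subtle role here and should be invoked to rule out spurious duplications. A secondary obstacle is bookkeeping in Situations~3 and 4, where six kernel parameters plus $\lambda_0$ must be recovered from three rational functions; the argument hinges on correctly counting that the number of independent algebraic features (asymptotic coefficients, poles, zeros, values at zero) extracted from $\mathbf f^N$ exceeds the number of unknowns, after which matching the rational functions uniquely determines $\theta$.
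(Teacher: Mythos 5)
Your treatment of the four individual situations follows essentially the same route as the paper: once one column of $\alpha$ vanishes the denominator factorises, every entry of $\mathbf f_\theta^N$ is an explicit rational function of $\nu$, and the parameters are read off from high-frequency limits, values at $\nu=0$ (and one more frequency), and the location of the poles in $u=4\pi^2\nu^2$. That toolkit is sufficient and matches what the paper does in Appendices~\ref{appendix:bi_identifiable_1} and \ref{appendix:bi_identifiable_2} (one slip: in Situation~\ref{HYP:BI_IDENTIFIABLE_1} the real part of $f_{12}^N$ is a positive multiple of $\beta_2/(\beta_2^2+4\pi^2\nu^2)$ and never vanishes on $\RR$, so ``zeros of the real part'' cannot locate $\beta_2$; what works is the ratio $\mathfrak{Im}/\mathfrak{Re}=2\pi\nu/\beta_2$, or equivalently evaluating $\tilde h_{21}$ at two frequencies as the paper does).

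The genuine gap is in the union statement, which is where the paper does its hardest work (Appendix~\ref{appendix:all_bi_identifiable}). First, your half-plane criterion is stated incorrectly: the factor $\tilde h_{21}(-\nu)$ indeed has its only pole at $-\iu\beta_2/(2\pi)$, but the full cross-spectrum in $\Lambda_1\cup\Lambda_3$ is $f_{12}^N(\nu)=m_1^H\,\alpha_{21}\beta_2\,\bigl(\beta_2(1-\alpha_{22})-2\pi\iu\nu\bigr)^{-1}\bigl\lvert 1-\tilde h_{11}(\nu)\bigr\rvert^{-2}\bigl(\beta_1^2+4\pi^2\nu^2\bigr)\beta_1^{-2}$, and the factor $\lvert 1-\tilde h_{11}\rvert^{-2}$ contributes a \emph{conjugate pair} of poles at $\pm\iu\beta_1(1-\alpha_{11})/(2\pi)$ whenever $\alpha_{11}>0$; so $f_{12}^N$ has poles in both half-planes and the criterion must be refined to an asymmetric pole count, with degenerate coincidences (e.g.\ $\beta_1(1-\alpha_{11})=\beta_2(1-\alpha_{22})$) ruled out. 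Second, and more seriously, your proposed mechanism for separating $\Lambda_1$ from $\Lambda_3$ --- detecting ``whether an additional pole/zero cancellation occurs'' in $f_{22}^N$ --- does not work: under $\Lambda_1$ the poles of $f_{22}^N$ sit at $\pm\iu\beta_2/(2\pi)$ and $\pm\iu\beta_1(1-\alpha_{11})/(2\pi)$, under $\Lambda_3$ at $\pm\iu\beta_2(1-\alpha_{22})/(2\pi)$ and $\pm\iu\beta_1(1-\alpha_{11})/(2\pi)$; both are generic two-conjugate-pair configurations and can coincide as sets for distinct parameter tuples, so no cancellation is visible. Ruling out a $\theta\in\Lambda_1$ and $\theta'\in\Lambda_3$ with $\mathbf f_\theta^N=\mathbf f_{\theta'}^N$ requires writing out the full system of equations (as the paper does) and deriving $\alpha_{22}'=0$ from ${f_\theta^N}_{22}(0)={f_{\theta'}^N}_{22}(0)$ after the other parameters have been matched. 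You correctly identify this boundary case as the main obstacle, but the plan defers it rather than resolving it, and the resolution is precisely the nontrivial content of the union claim.
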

      \begin{proof}
        The proofs of Situations~\ref{HYP:BI_IDENTIFIABLE_1} and \ref{HYP:BI_IDENTIFIABLE_2} are respectively in Appendices~\ref{appendix:bi_identifiable_1} and \ref{appendix:bi_identifiable_2}.
        The other situations are obtained by symmetry of all indices.
        
        The proof regarding \(Q_{\cup_{j=1}^4 \Lambda_j}\) is in Appendix~\ref{appendix:all_bi_identifiable}.
      \end{proof}

    \begin{remark}
        In the case where the Poisson process \(P\) does not share the same intensity across subprocesses (say \(\lambda_0\) and \(\lambda_0'\)),
        nonidentifiability still holds in the situations exhibited in Proposition~\ref{PROPOSITION:BI_NON_IDENTIFIABLE}.
        However, regarding identifiability, one can prove that it is still true for Situations~\ref{HYP:BI_IDENTIFIABLE_2} and \ref{HYP:BI_IDENTIFIABLE_2_BIS} from Proposition~\ref{PROPOSITION:BI_IDENTIFIABLE},
        while it presents an issue for Situations~\ref{HYP:BI_IDENTIFIABLE_1} and \ref{HYP:BI_IDENTIFIABLE_1_BIS}
        
        In any case, let us highlight that once identifiability is proved for a model, there is no additional methodological difficulty for the estimation procedure with different values of noise level for each subprocess.
    \end{remark}
    
    \begin{remark}
        Let us remark that, if we assume that the noise is known in advance, the identifiability of the full model $\mathcal{Q}$ holds. If the noise is known (whether they are identical between subprocesses or not), it suffices to prove that the model is identifiable for a multivariate Hawkes process. The equality of the spectral densities of two processes (unnoised) Hawkes processes implies the equality of their first and second order characteristics, which by the results from \cite[Corollary~1]{Bacry2016} implies equality of the baselines and the interaction functions. 
        
        The identifiability of the model for unnoised multivariate Hawkes processes holds, for any dimension $d\geq 1$, under the assumption that the parameterisation of the kernel functions is injective with respect to all parameters. This is in particular true for the exponential model used throughout this paper.
    \end{remark}
    
      Several lessons can be learnt from Propositions~\ref{PROPOSITION:BI_NON_IDENTIFIABLE} and \ref{PROPOSITION:BI_IDENTIFIABLE} and Remark~\ref{rem:non_identifiability_submodels}.
      First, the statistical model $\mathcal Q_\Lambda$ is not identifiable if
      $H$ reduces to a bivariate homogenous Poisson process (Proposition~\ref{PROPOSITION:BI_NON_IDENTIFIABLE}, Situation~\ref{hyp:bi_non_identifiable_1} with $\alpha_{11}=\alpha_{22}=0$)
      or to two independent univariate Hawkes processes (Proposition~\ref{PROPOSITION:BI_NON_IDENTIFIABLE}, Situation~\ref{hyp:bi_non_identifiable_1} with $\alpha_{11} >0$ and $\alpha_{22}>0$)
      even if the noise $P$ shares the same intensity $\lambda_0$ for both subprocesses.
      This result actually still holds true for a dimension $d > 2$.
      
      Second, Proposition~\ref{PROPOSITION:BI_IDENTIFIABLE} tells in a nutshell that
      the model $\mathcal Q_\Lambda$ is identifiable if there exist cross-interactions in the Hawkes process $H$ (\ie $\alpha_{12} > 0$ or $\alpha_{21} > 0$).
      However,
      interactions must not come from a Poisson subprocess and reach a self-exciting Hawkes subprocess (Proposition~\ref{PROPOSITION:BI_NON_IDENTIFIABLE}, Situations~\ref{hyp:bi_non_identifiable_2} and \ref{hyp:bi_non_identifiable_2_bis}),
      but rather
      it is sufficient to
      reach a self-neutral (\ie with null self-excitation) Hawkes subprocess (Proposition~\ref{PROPOSITION:BI_IDENTIFIABLE}, Situations~\ref{HYP:BI_IDENTIFIABLE_1} and \ref{HYP:BI_IDENTIFIABLE_1_BIS}) 
      or come from a self-exciting Hawkes subprocess
      (Proposition~\ref{PROPOSITION:BI_IDENTIFIABLE}, Situations~\ref{HYP:BI_IDENTIFIABLE_2} and \ref{HYP:BI_IDENTIFIABLE_2_BIS}).

     Let us mention that the identifiability conditions given in Proposition~\ref{PROPOSITION:BI_IDENTIFIABLE} cannot be generally verified without prior knowledge on the interaction matrix. However, we propose in the next section a practical method to recover the graph of interactions, which requires replicates of the process or, alternatively, a unique but long observation of the process.

\section{Numerical results}\label{sec:numerical_results}
        
  This section numerically illustrates the behaviour of the proposed estimator $\hat \theta$ (described in Section~\ref{sec:setting})
  for noisy exponential Hawkes processes.
  It investigates the effect of horizon $T$ and hyperparameter $M$ in the univariate setting (Section~\ref{sec:univariate_numerical_results}),
  and the impact of model sparsity and interaction strength in the bivariate setting (Section~\ref{sec:bivariate_numerical_results}).
  
  In the whole study, point processes are simulated thanks to Ogata's thinning method \citep{Ogata1981}
  and numerical optimisation of the spectral log-likelihood is performed via the L-BFGS-B method \citep{Byrd1995}, implemented in the \texttt{scipy.optimize.minimize} Python function.
  The computation of the periodogram is accelerated via a Fast Fourier Transform algorithm using the Python package \texttt{finufft} \citep{Barnett2019, Barnett2021}.
  Both simulation and estimation algorithms are freely available as a Python package on
  \cite{Code}. 
        
  \subsection{Univariate setting}\label{sec:univariate_numerical_results}

    \subsubsection{Simulation and estimation scenarios}\label{sec:uni_simulation}
      \paragraph{Data simulation}
        We consider observations $(T_k^N)_{1 \le k \le N(T)}$ coming from a univariate exponential noisy Hawkes process $N = H+P$,
        where $P$ is a Poisson process with intensity $\lambda_0 > 0$
        and $H$ is a Hawkes process with baseline intensity $\mu = 1$ and kernel given by Equation~\eqref{equ:exp_kernel_1d}
        with parameters $\alpha=0.5$ and $\beta=1$.

        In order to get close to stationarity while coping with inability to generate a process on the whole line $\RR$,
        $N$ is simulated on the window $[-100, T]$ with no points in $(-\infty, -100)$
        but only observations falling in $[0, T]$ are considered.
        
        The forthcoming section will illustrate the convergence of $\hat \theta$,
        thanks to its behaviour with respect to varying horizon $T \in \{250, 500, 1000, \dots, 8000\}$,
        and the impact of the intensity $\lambda_0$ of the noise process $P$ on estimation accuracy,
        via varying noise-to-signal ratio ${\lambda_0}/{m^H} \in \{ 0.2, 0.4, \dots, 2.0 \}$
        (given the average intensity $m^H = {\mu}/{(1-\alpha)}=2$ of $H$).

      \paragraph{Statistical models}
        According to Proposition~\ref{PROPOSITION:FIXED_UNIVARIATE_IDENTIFIABILITY},
        which states four collections of identifiable models for univariate exponential Hawkes processes,
        estimation is successively performed in models $\mathcal Q_{\mu}$, $\mathcal Q_{\alpha}$, $\mathcal Q_{\beta}$, $\mathcal Q_{\lambda_0}$,
        where the known parameter is fixed to the value of the generated process (see above).
        
        In addition, the behaviour of $\hat \theta$ will be assessed thanks to its relative error ${\|\hat \theta - \theta^\star\|_2}/C_{\mathcal Q}^\star$
        (where \(C_{\mathcal Q}^\star\) is the norm of the vector of parameters of the generated process, that are considered free in the model --
        for instance \(C_{\mathcal Q}^\star = \sqrt{\alpha^2 + \beta^2 + \lambda_0^2}\) for \(\mathcal Q_\mu\))
        averaged over $50$ different trials.
        
      \subsubsection{Convergence, computation time and influence of parameter $M$}
      
        Up to now, the hyperparameter $M$,
        appearing in the spectral log-likelihood (Equations~\eqref{eq:spectral_log_likelihood} and \eqref{eq:univariate_log_likelihood})
        and determining the number of frequencies tested with the spectral density,
        has been let free.
        However, the theoretical literature suggests that its choice may have a lot of influence on the performance of the estimation procedure.
        In \citep{Tuan1981}, it is suggested
        that \(M \to \infty\)
        as $T \to \infty$, in particular to include enough frequencies with respect to the information available.
        A rational choice is to consider $M/T \xrightarrow[T \to \infty]{} \infty$, which allows for a strictly increasing window of frequencies (see Theorem 5 of the aforementioned paper,
        which ensures convergence of a modified version of the spectral log-likelihood \(\ell_T(\theta)\) to its integral version).

        Since the rate of convergence is not specified but has an effect on the computational efficiency of the spectral estimator,
        we propose to study the compatible case $M = N(T) \log N(T)$
        and the economy case $M = N(T)$.
		 
        Figure~\ref{fig:errors_wrt_T} displays the relative errors of \(\hat \theta\)
        with respect to both the simulation horizon $T$ (top panels) and the estimation time (bottom panels). 
		 As expected, the quality of the estimations improves as $T$ increases, and this independently of which parameter is fixed.
		 Estimations are slightly better when considering $M = N(T) \log N(T)$ (orange line) especially for smaller values of $T$ but the trade-off is a ten times higher computation time. 
          Therefore, the performance benefit of taking $M = N(T) \log N(T)$ rather than $M = N(T)$ seems minor when compared to the computational cost. In practice, what seems to be important is that for a fixed window of observation $[0, T]$, the maximal frequency $M/T$ is big enough so that, for any $\nu > M/T$, the spectral density $f(\nu)\approx m_1$, in other words, so it is almost constant.
         The forthcoming numerical results will be performed with $M = N(T)$, value above which the considered spectral functions are practically constant.

    \begin{figure}[!ht]
        \centering
        \includegraphics[width=0.8\textwidth]{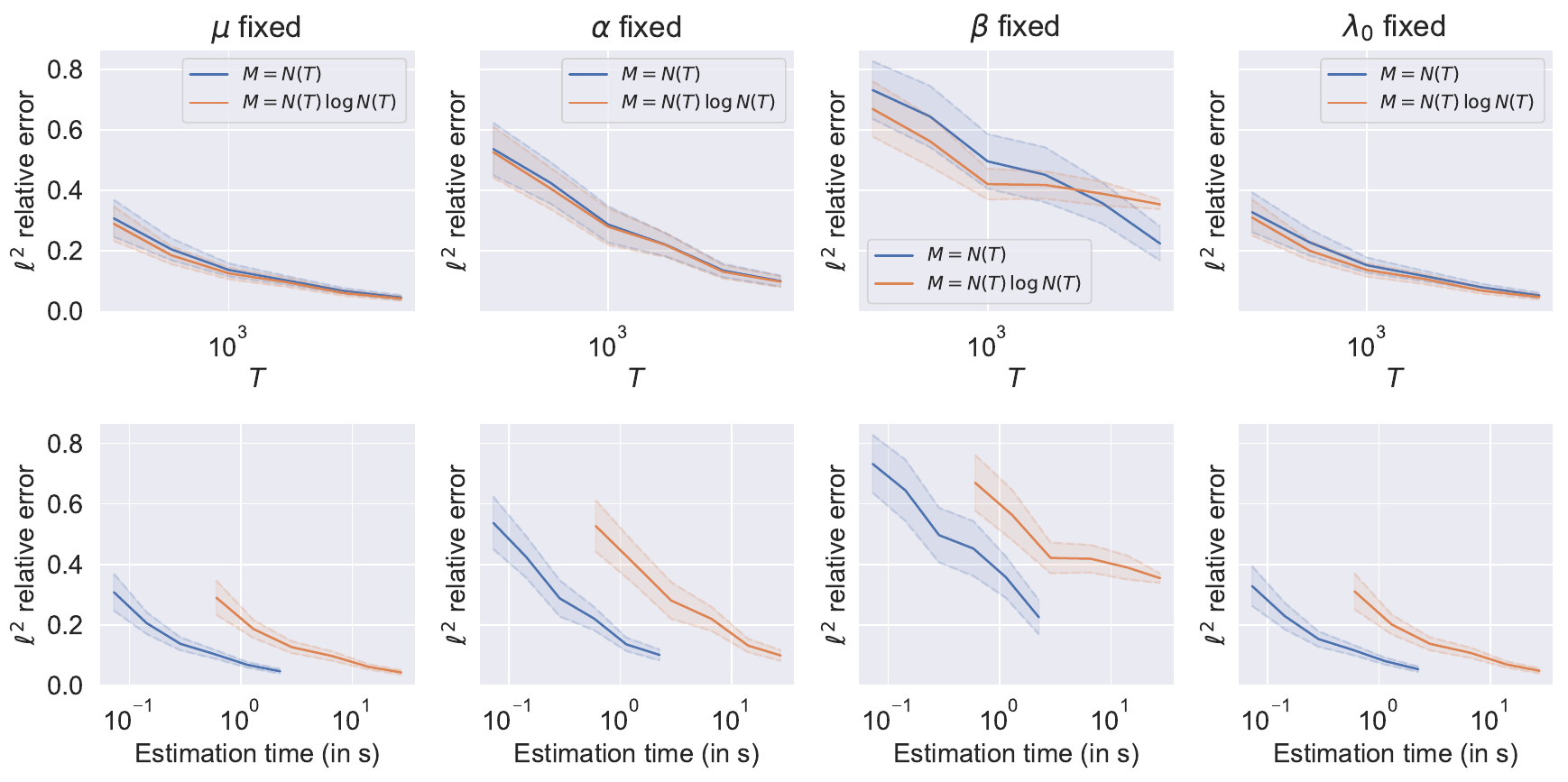}
        \caption{Relative estimation error with confidence bands ($\pm 1.96$ empirical standard deviation) respectively for \(\mu\), \(\alpha\), \(\beta\) and \(\lambda_0\) fixed (columns from left to right)
        with respect to the time horizon $T$ (top) and the computation time (bottom) in logarithmic scale. Level of noise $\lambda_0 = 1.6$.}
        \label{fig:errors_wrt_T}
    \end{figure}

    \subsubsection{Influence of the noise level}\label{sec:influence_noise}
       Figure~\ref{fig:errors_wrt_noise_N} shows the relative error with respect to the ratio $\lambda_0/m_H$, obtained when increasing the value of $\lambda_0$ while
       keeping the other parameters fixed, for a given horizon $T=8000$.

       \begin{figure}[!ht]
        \centering
        \includegraphics[width=0.8\textwidth]{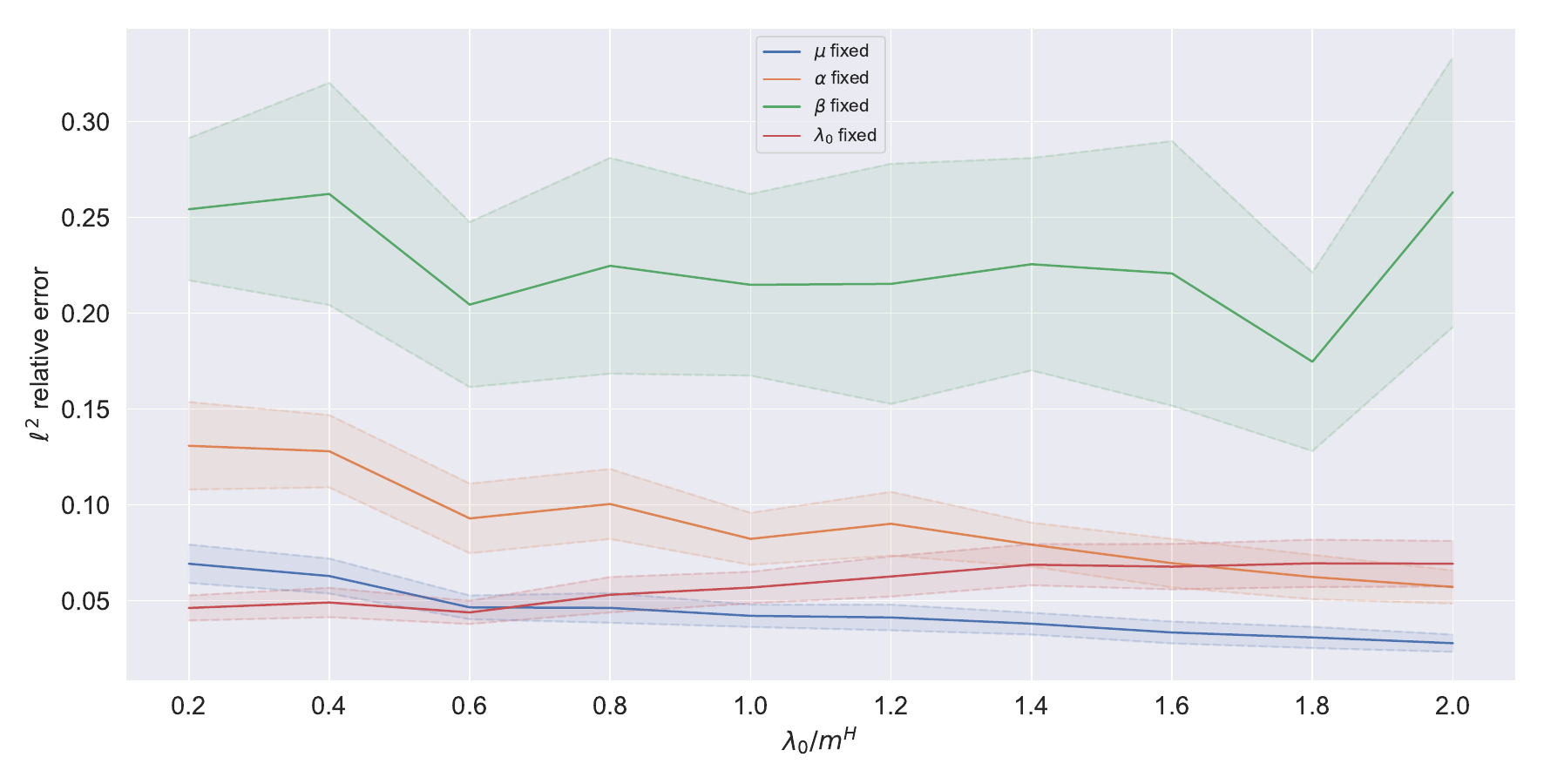}
        \caption{Relative estimation error with confidence bands ($\pm 1.96$ empirical standard deviation) respectively for \(\mu\), \(\alpha\), \(\beta\) and \(\lambda_0\) fixed 
        with respect to the noise-to-signal ratio for the maximal horizon $T=8000$.}
        \label{fig:errors_wrt_noise_N}
        \end{figure}
        
		First, it seems that the value of $\lambda_0$ has a low impact on the quality of estimations, as we cannot see a clear trend in the four curves displayed in Figure~\ref{fig:errors_wrt_noise_N}.
        Taking a step back,
        the case when \(\lambda_0\) is fixed (red curve -- for which only the parameters of the Hawkes process have to be estimated) looks slightly increasing,
        which is probably due to the increase of points coming from the Poisson process (as \(\lambda_0\) grows), thus leading to a more difficult estimation of the parameters of the Hawkes process.
        On the contrary, the cases where \(\alpha\) and \(\mu\) are fixed (orange and blue curves) look to have a slight downward trend, suggesting that the better estimation of \(\lambda_0\) dominates the worse estimations of the parameters of the Hawkes process.
   
    Second, when $\beta$ is fixed, the average error is substantially larger than when any of the other parameters is fixed.
     This could be explained by a compensation phenomenon inside the triplet $(\mu, \alpha,\lambda_0)$ which occurs as our method implicitly adjusts the estimation to the mean intensity of the noisy Hawkes process: 
     \[m^N = \lambda_0 + \frac{\mu}{1-\alpha},\]
     which is indeed a quantity independent of $\beta$.
    
    This numerical compensation is illustrated in Figure~\ref{fig:uni_compensation_beta_N}, where we can see that overestimating $\mu$ is systematically balanced by underestimating $\alpha$ and $\lambda_0$ and vice versa, whereas the estimated mean intensities remain accurate.
    In this experiment, the level of noise has been arbitrarily fixed to $\lambda_0 = 1.2$ but the observed behaviour appears similarly for all possible values of $\lambda_0$ used in the previous section.

    \begin{figure}[!ht]
        \centering
        \includegraphics[width=0.8\textwidth]{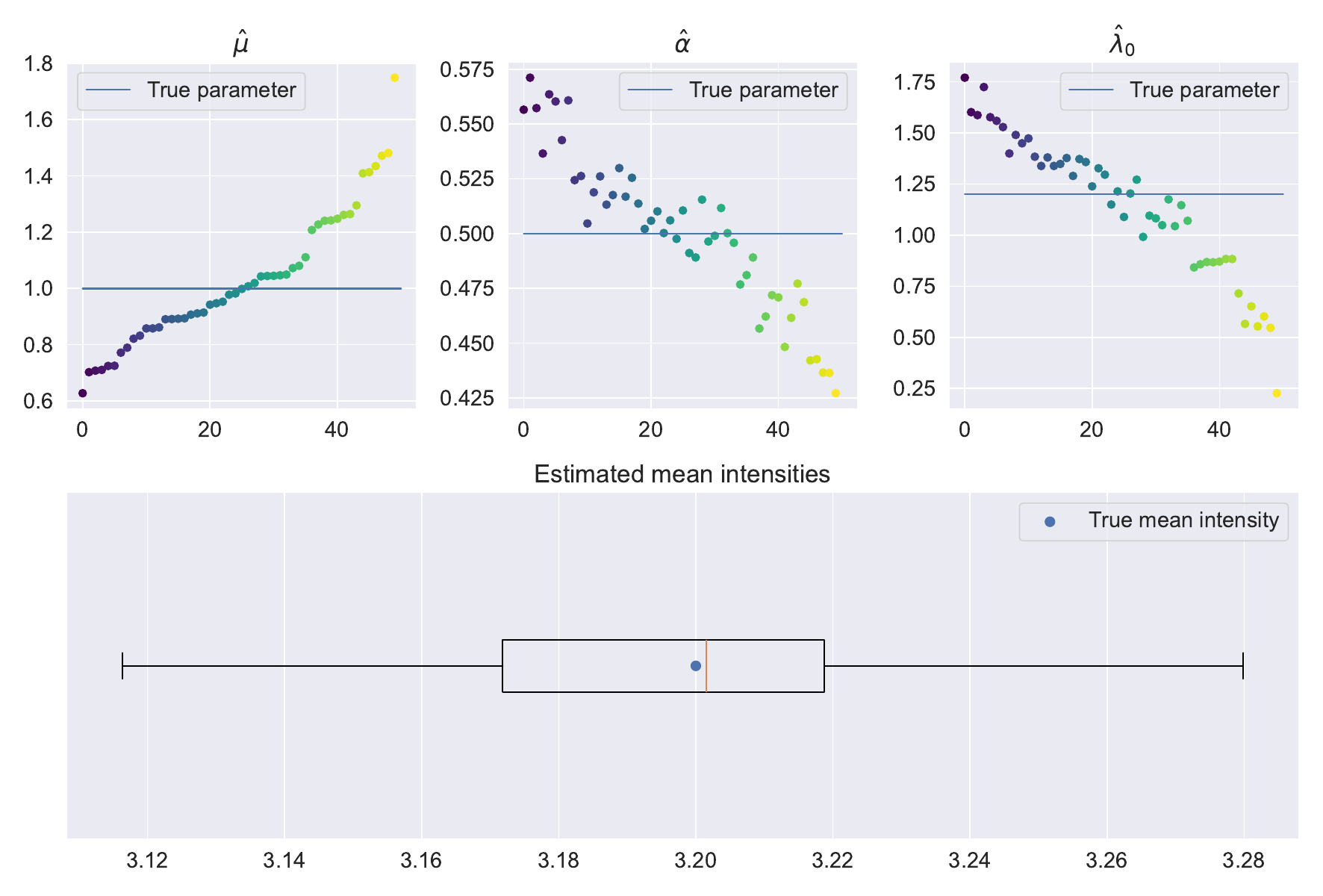}
        \caption{Estimations $\hat \mu, \hat \alpha$ and $\hat \lambda_0$ (of $\mu$, $\alpha$, and $\lambda_0$) for
        $\lambda_0 = 1.2$ when $\beta$ is fixed,
        sorted by the values of $\hat{\mu}$ (top).
        In all plots, each color corresponds to one of the 50 repetitions. Boxplot of estimated mean intensities (bottom).}
        \label{fig:uni_compensation_beta_N}
        \end{figure}

    When performing estimation with $\alpha$ fixed (the case for which the average error is the second largest, as illustrated in Figure~\ref{fig:errors_wrt_noise_N}), the compensation appears only between $\hat \mu$ and $\hat \lambda_0$ whereas $\hat \beta$ does not seem impacted, as shown in Figure~\ref{fig:uni_compensation_alpha_N}.

        \begin{figure}[!ht]
        \centering
        \includegraphics[width=0.8\textwidth]{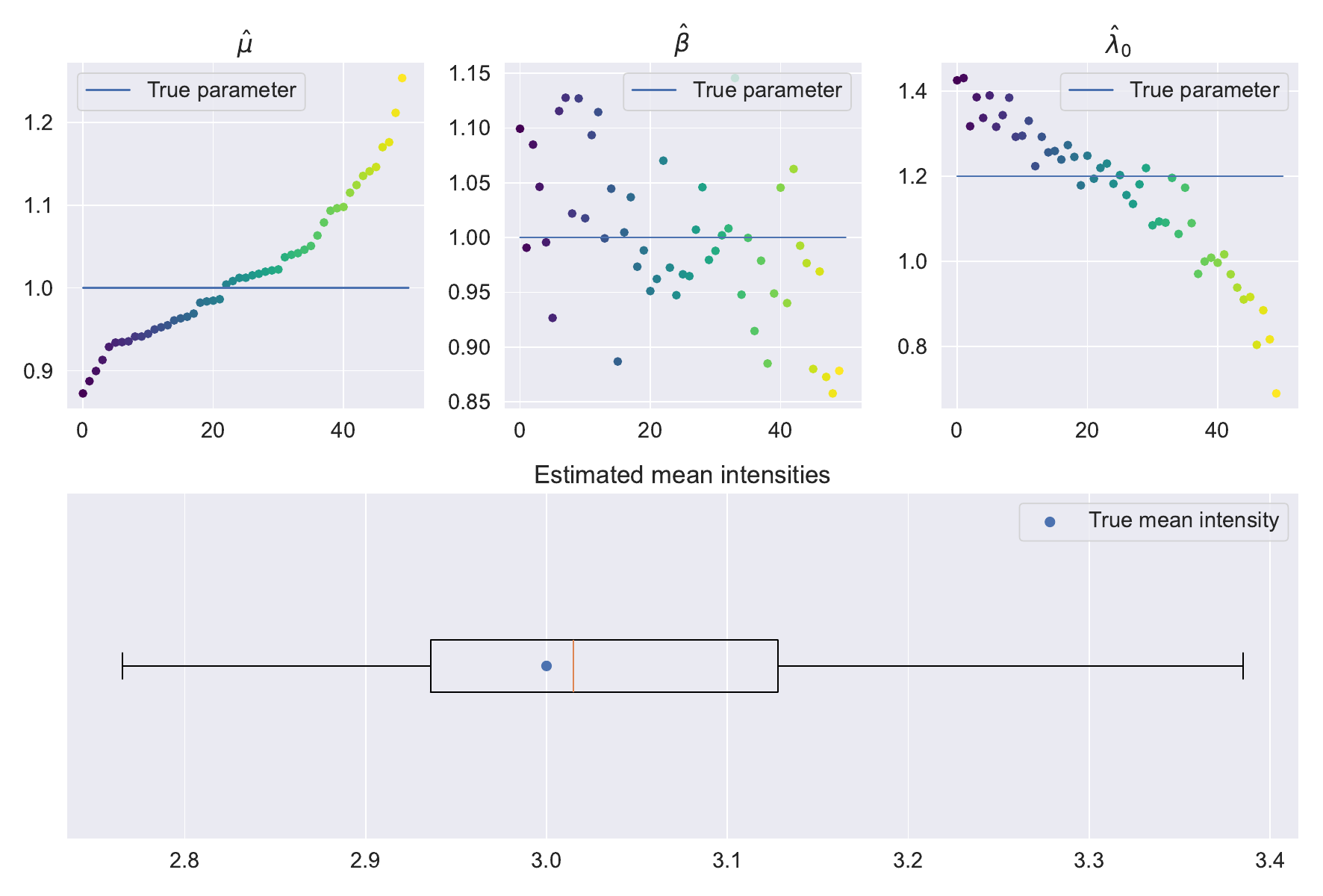}
        \caption{Estimations $\hat \mu, \hat \beta$ and $\hat \lambda_0$ (of $\mu$, $\beta$, and $\lambda_0$) for
        $\lambda_0 = 1.2$ when $\alpha$ is fixed,
        sorted by the values of $\hat{\mu}$ (top).
        In all plots, each color corresponds to one of the 50 repetitions. Boxplot of estimated mean intensities (bottom).}
        \label{fig:uni_compensation_alpha_N}
        \end{figure}

	\subsection{Bivariate setting} \label{sec:bivariate_numerical_results}
		This section illustrates numerical estimation of bivariate exponential noisy Hawkes processes (see Section~\ref{sec:dim2}) when conditions of identifiability are met (Proposition~\ref{PROPOSITION:BI_IDENTIFIABLE}). We carry out two different studies, exploring different scenarios: Section~\ref{sec:bi_alpha_parameter} studies the influence of the strength of the cross-interaction between the two subprocesses and Section~\ref{sec:bi_two_scenarios} investigates the performance of the estimator with and without knowledge of the null components. Indeed, since identifiability conditions depend on knowing which components are non-null, an information that is unlikely to be available in practical applications, we compare the performance of the estimator for both the reduced model $\mathcal{Q}_{\Lambda}$, where the null components are known, and the complete model, \[\mathcal Q
        = \left\{
          \mathbf{f}_{\theta}^N : \RR \to \mathbb C^{2 \times 2},
          \theta = (\mu, \alpha, \beta, \lambda_0)
          \in \RR_{>0}^2\times \RR_{\geq 0}^{2\times 2} \times \RR_{>0}^2 \times \RR_{>0}, \rho(\alpha) < 1,
          \beta \in \Omega_\alpha
       	 \right\}\,,\]with no prior information.
		
    Throughout this section, we consider a Hawkes process with $\mu = \begin{pmatrix} 1.0 \\ 1.0 \end{pmatrix}$ and $\beta = \begin{pmatrix} 1.0 \\ 1.3 \end{pmatrix}$.
    In addition, fortified by the analysis of the univariate setting,
    the Poisson intensity is chosen to be $\lambda_0 = 0.5$
    (the level of noise does not appear to have a significant impact on the quality of estimations, Figure~\ref{fig:errors_wrt_noise_N})
    and it is considered $M = N(T)$ (which provides accurate estimations in a reasonable amount of time, Figure~\ref{fig:errors_wrt_T}).

		\subsubsection{Influence of the cross-interaction}
		\label{sec:bi_alpha_parameter} 

        Let us consider one of the identifiable scenarios where the only non-null interaction in the Hawkes process is one of the two cross-interactions (see Proposition~\ref{PROPOSITION:BI_IDENTIFIABLE}, Situation~\ref{HYP:BI_IDENTIFIABLE_1}).
        More precisely, we consider the reduced model $\mathcal{Q}_{\Lambda}$,
        where
        \[
          \Lambda = \left\{ \begin{pmatrix} 0 & 0 \\ \alpha_{21} & 0 \end{pmatrix} : \alpha_{21} > 0 \right\} \,.
        \]
        The Hawkes process is then simulated with different levels of cross-interaction: $\alpha_{21} \in \{0.2, 0.4, 0.6, 0.8\}$,
        and estimations are obtained by optimising the spectral log-likelihood on the non-null parameters $\mu_1$, $\mu_2$, $\alpha_{21}$, $\beta_2$, and $\lambda_0$.
        
        Figure~\ref{fig:bi_phase_transition} illustrates the influence of the true parameter $\alpha_{21}$ on the quality of the estimations,
	      through the relative error of the estimations for the different values of $\alpha_{21}$ and 
	      an increasing range of horizons $T$.
	      As a complement to what has been observed in Figure~\ref{fig:errors_wrt_T},
	      our estimator appears to behave particularly well for higher values of $T$, but also for higher values of $\alpha_{21}$.

        \begin{figure}[!ht]
			\centering
			\includegraphics[width=0.8\textwidth]{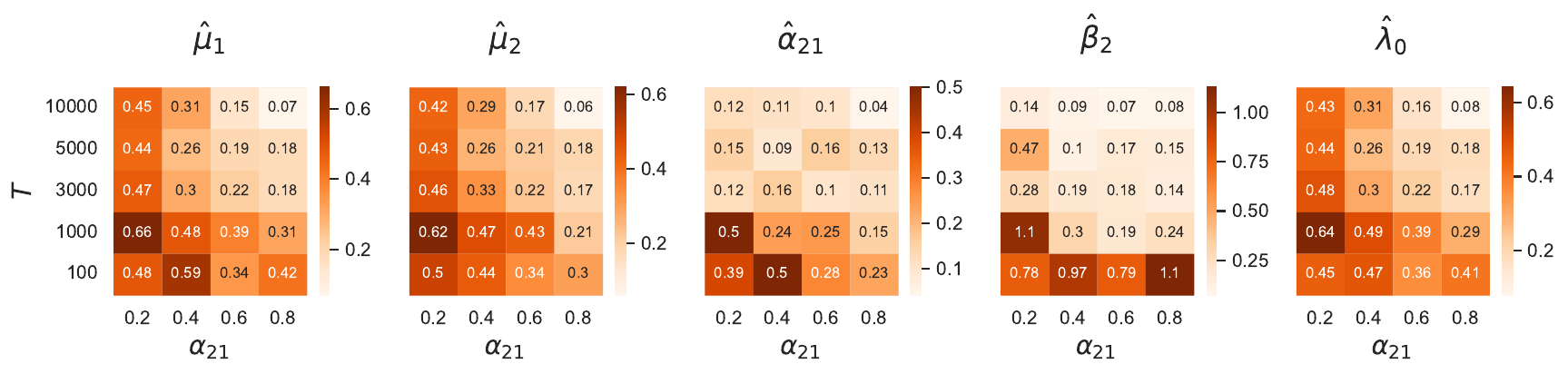}
 		   	\caption{Heatmap of relative errors for each estimation $\hat\mu_1$, $\hat\mu_2$, $\hat\alpha_{21}$, $\hat\beta_2$, and $\hat\lambda_0$ for different levels of interaction $\alpha_{21}$ (x-axis) and horizons $T$ (y-axis).}
 		  \label{fig:bi_phase_transition}
		\end{figure}
        
        This is not surprising since, for smaller values of $\alpha_{21}$, the Hawkes process behaves closely to a homogeneous Poisson process, and as proven in Proposition~\ref{PROPOSITION:BI_NON_IDENTIFIABLE}, the superposition of two Poisson processes leads to a non-identifiable model. Lower interactions necessitate then higher values of $T$ to obtain satisfactory results. Inversely, for average and high interaction magnitudes, we start to obtain small errors for horizon values around $T=3000$.

		\subsubsection{Influence of null interactions}
		\label{sec:bi_two_scenarios}

        In this section we simulate $50$ repetitions with a fixed horizon $T=3000$ for two identifiable scenarios regarding the Hawkes process $H = (H_1, H_2)$.
        \begin{description}
          \item[Scenario 1] The matrix of interactions is:
          \[
            \alpha = \begin{pmatrix} 0.5 & 0 \\ 0.4 & 0 \end{pmatrix}\,,
          \]
          corresponding to Proposition~\ref{PROPOSITION:BI_IDENTIFIABLE}, Situation~\ref{HYP:BI_IDENTIFIABLE_1}
          In other terms, $H_1$ excites both subprocesses whereas $H_2$ has no influence on the dynamics (See Figure~\ref{fig:bi_scenario}, left).
          \item[Scenario 2] The matrix of interactions is:
          \[
            \alpha = \begin{pmatrix} 0.5 & 0 \\ 0.4 & 0.4 \end{pmatrix}\,,
          \]
          corresponding to Proposition~\ref{PROPOSITION:BI_IDENTIFIABLE}, Situation~\ref{HYP:BI_IDENTIFIABLE_2}
          In other terms, $H_1$ excites both subprocesses and $H_2$ excites itself (See Figure~\ref{fig:bi_scenario}, right).
        \end{description}
        		
        \begin{figure}[!ht]
  			\centering
			  \includegraphics[width=0.4\linewidth]{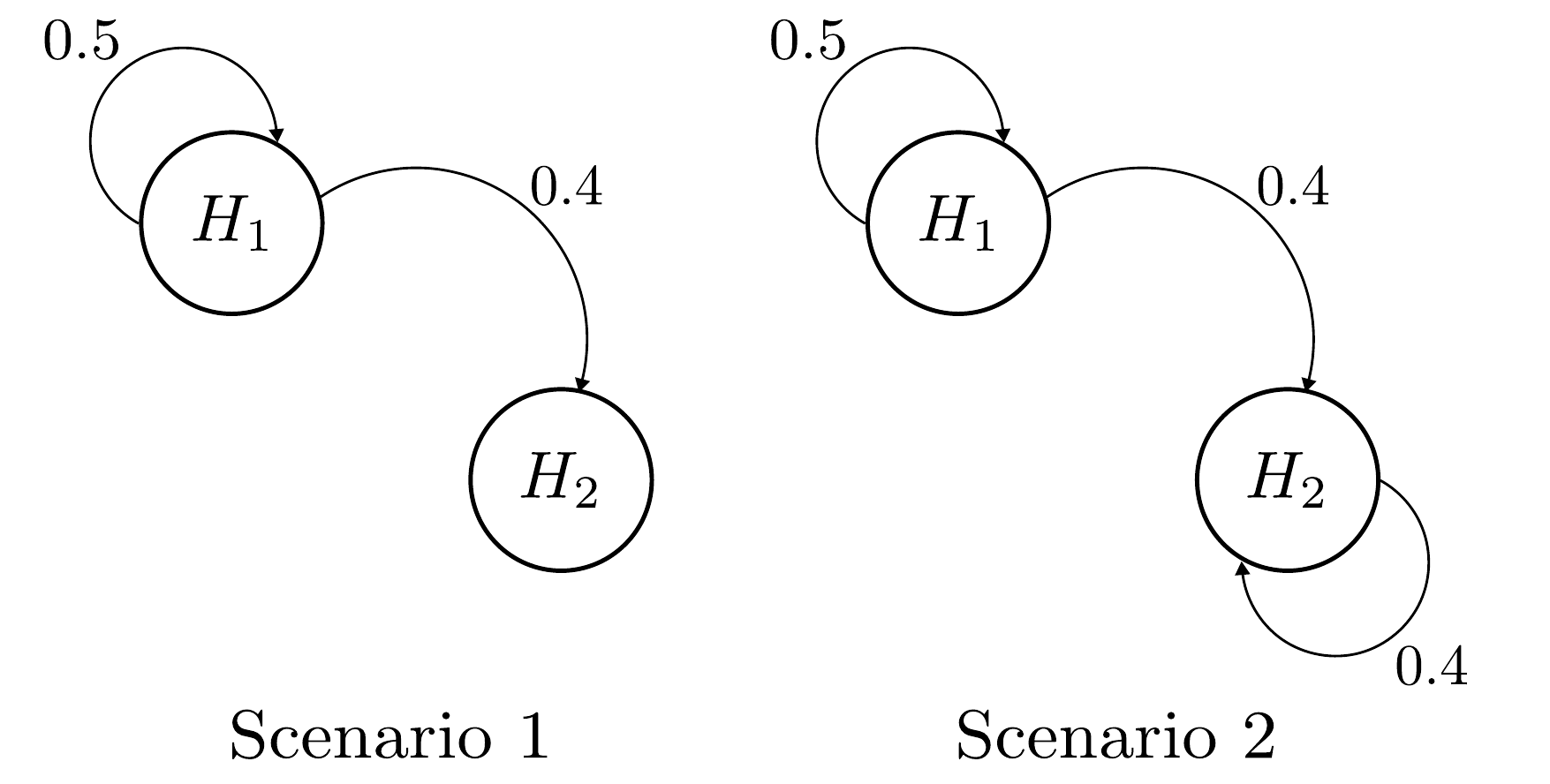}
 		   	\caption{
 		   	Interactions in the two numerical scenarios considered.
 		   	}
 		  	\label{fig:bi_scenario}
	   	\end{figure}

      Graphics in the left column of Figure~\ref{fig:column_triangle_model_estimation} present the boxplots of each parameter estimation when considering their respective reduced models $\mathcal{Q}_{\Lambda}$.		
      These results show that our method provides unbiased estimates of all parameters and is particularly efficient at inferring the interaction matrix $\alpha$ (estimations have very low variance).
      The larger variances are observed for parameters $\mu_1$, $\mu_2$ and $\lambda_0$, which is probably due to compensation effects already mentioned in Section~\ref{sec:influence_noise}.

      \begin{figure}[!ht]
        \centering
        \includegraphics[width=0.8\textwidth]{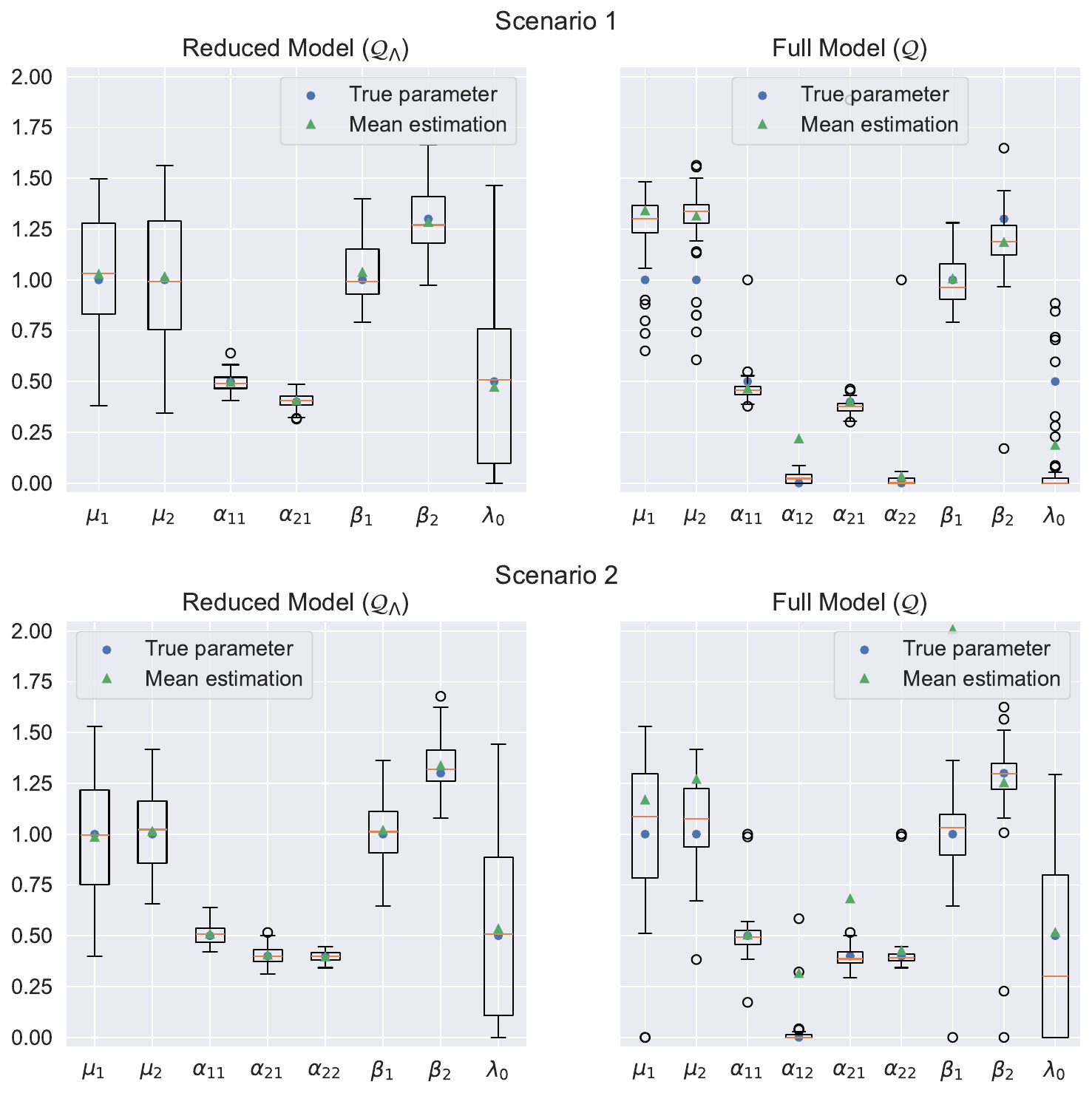}
        \caption{Boxplots of parameter estimations in the reduced model $\mathcal{Q}_{\Lambda}$ (left) and full model $\mathcal{Q}$ (right)
        in Scenario~1 (top) and Scenario~2 (bottom) ($50$ trials).
        Average estimation (green triangle) is to be compared to true parameter (blue point).}
        \label{fig:column_triangle_model_estimation}
      \end{figure}
            
            Estimating in the reduced model $\mathcal Q_\Lambda$ requires prior knowledge on the null parameters $\alpha_{ij}$ ($1 \le i, j \le 2$), which is unlikely in practical applications. Therefore, we compare the results obtained in the reduced model to those in the complete model $\mathcal Q$ (see the right column graphics in Figure~\ref{fig:column_triangle_model_estimation}). If the estimates of $\alpha$ and $\beta$ seem still empirically unbiased, we observed several deteriorations compared to the previous results. First, we notice a bias in the estimates of $\mu_1$, $\mu_2$ and $\lambda_0$: more precisely, $\mu_1$ and $\mu_2$ are overestimated while $\lambda_0$ is underestimated in both scenarios. Moreover, we observe in Scenario 2 some outlier estimations for the $\alpha_{ij}$ ($1 \le i, j \le 2$) coefficients, which did not appear when considering the reduced model. 
		
		Fortunately, Figure~\ref{fig:column_triangle_model_estimation} also suggests that our estimator is able to detect the null interactions in the full model, which allows to re-estimate the parameters in the reduced model.
    To do so, we propose to look at the $5\%$-empirical quantile and proportion of null estimations of each term of the estimated interaction matrix $\hat \alpha$, which are summarised in Table~\ref{tab:full_model_quantile}. 	We can observe that these empirical quantiles allow to distinguish between null and non-null parameters: indeed, we see that when the true parameter is non-null, at least $95\%$ of the estimations are also non-null, which does not happen when the true parameter is null. If the choice of the $5\%$-threshold can seem arbitrary, the proportion of null estimations (in brackets) suggests that the procedure is robust with respect to the choice of threshold. Indeed, at least $28\%$ of each null coefficient are actually estimated to $0$, while none of the non-null coefficient has a single null estimate.
		This suggests that when enough repetitions are available, it is possible to use these empirical quantiles to estimate the null interactions.
		An estimation procedure when no prior information is known about the support of the interaction graph would then consist in a three-step approach:
		first, estimating all parameters in the complete model $\mathcal{Q}$;
		second, computing the $5\%$-quantiles for all estimated interactions parameters in matrix $\hat \alpha$,
		and defining the support of $\alpha$ to be entries corresponding to a positive empirical quantile;
		finally, re-estimating all parameters in the reduced model defined by the support of $\alpha$.
		Let us remark that the proposed support estimation step boils down to correspond to a multiple test that,
		when the noisy Hawkes process has significantly more that $2$ dimensions,
		can be corrected thanks to usual procedures such as Bonferroni and Benjamini-Hochberg methods.
                
        \begin{table}[!ht]
          \centering
          \begin{tabular}{c @{\hspace{7mm}} cccc}
            \toprule
            & $\alpha_{11}$ & $\alpha_{12}$ & $\alpha_{21}$ & $\alpha_{22}$ \\
            \toprule
            Scenario 1 & 0.40 (0.00) & 0.00 (0.28) & 0.32 (0.00) & 0.00 (0.48) \\
            Scenario 2 & 0.41 (0.00) & 0.00 (0.56) & 0.34 (0.00) & 0.35 (0.00) \\
          \end{tabular}
          \caption{$5\%$-empirical quantile (in bracket, proportion of null estimations) of each parameter of the estimated interaction matrix $\hat \alpha$. The true parameter $\alpha_{12}$ is null in both scenarios and $\alpha_{22}$ is null only in Scenario 1.}
          \label{tab:full_model_quantile}
        \end{table}

        \begin{remark}
       \label{rem:subsampling}
            If there is no available replicates but the number of event times is large, an alternative approach consists in partitioning the observation window $[0, T]$ and performing the estimation procedure on each subsample. The corresponding results are given in Table~\ref{tab:partition_quantile}, which shows similar results to those obtained with replicates.
        \end{remark}

	    \begin{table}[!ht]
          \centering
          \begin{tabular}{c @{\hspace{7mm}} cccc}
            \toprule
            & $\alpha_{11}$ & $\alpha_{12}$ & $\alpha_{21}$ & $\alpha_{22}$ \\
            \toprule
            Partitioned Scenario 1 & 0.34 (0.00) & 0.00 (0.50) & 0.27 (0.00) & 0.00 (0.45) \\
            Partitioned Scenario 2 & 0.26 (0.00) & 0.00 (0.60) & 0.26 (0.00) & 0.30 (0.00) \\
          \end{tabular}
          \caption{$5\%$-quantile (in bracket, proportion of null estimations) of each parameter of the estimated interaction matrix $\hat \alpha$ for $20$ subsampled replicates from a single observation in $[0, 6000]$. Observation window is partitioned in equally-sized windows of length $300$. The true parameter $\alpha_{12}$ is null in both scenarios and $\alpha_{22}$ is null only in Scenario 1.}
          \label{tab:partition_quantile}
    \end{table}

    To conclude and in order to numerically ground the three-step approach proposed above,
    we provide an additional numerical experiment in Appendix~\ref{appendix:additional_experiment},
    which quantifies the proportion of correct estimations of the support of the interaction matrix \(\alpha\) for a wide range of true parameters randomly chosen according to a spike-and-slab procedure.

  \section{Application to neuronal data}

        \subsection{Dataset and preprocessing}
        
        We work on a publicly available dataset analysed in \cite{zhang2023cerebellum}, which describes the neuronal activity of mice when accomplishing a navigation task. Several sessions are recorded for several mice. The events correspond to the spike times of different clusters of neurons: since the number of clusters varies from one mouse to another, but remains constant during all sessions for  each mouse, we focus on the recording of one mouse and we consider the five sessions as replicates. We select one mouse for which the number of clusters is small to remain close to the setting that we study in the paper: we then keep Mouse 22, all recordings of which contain 3 clusters of neurons. Finally, since Cluster 1 has very few observations (around $20$ in each experiment for $T=725$) compared to the two other clusters (more than $1000$ spikes for each cluster), we decide to keep only Clusters 2 and 3 for this study.
        
        \subsection{Results}

        Since the number of replicates is small but the number of spikes is large, we decide to use the partitioning procedure mentioned in the previous section: the observation window for each replicate is divided into 5 windows in order to obtain a total of $25$ subsamples.
        Before carrying out the estimation procedures, we have to make a sensible choice about parameter $M$, to make sure that the explored window of frequencies in Equation~\eqref{eq:spectral_log_likelihood} is big enough to take into account most of the real spectra. We begin by looking at the average periodogram estimated with the 25 subsamples, which is often used as an estimation of the spectral matrix $\mathbf{f}$. In Figure~\ref{fig:real_data_periodogram}, we display the diagonal terms of the periodogram $\mathbf{I}^T$. 
        By choosing $M = N(T) \log(N(T))$, the maximal frequency, on average, is $M(T) / T = 11.7$, value above which we can see that both $f_{11}$ and $f_{22}$ seem to be in an almost constant regime. We decide then to use $M = N(T) \log(N(T))$ for the rest of this study.

        \begin{figure}[!ht] 

		    \centering
		    \includegraphics[width=0.9\linewidth]{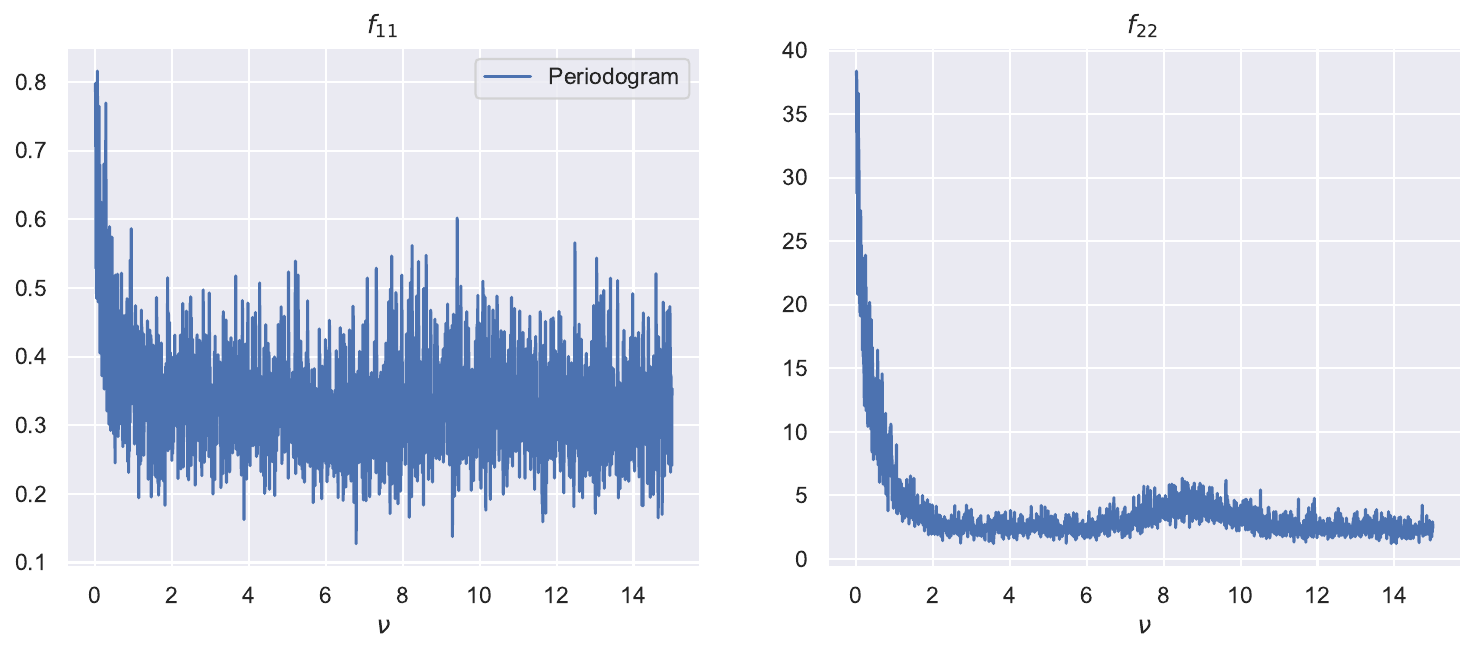}
		    \caption{Diagonal terms $I_{11}^T$ and $I^T_{22}$ of averaged periodogram for the 25 subsamples of the neuronal data.}
		    \label{fig:real_data_periodogram}
        \end{figure}
        We perform the estimation procedure on each subsample.
        Table~\ref{tab:full_model_quantile_realdata} shows the estimated quantiles and the proportion of null estimations of the interaction parameters in matrix $\hat \alpha$. Consequently, we then set $\hat \alpha_{12} = 0$ and re-estimate the parameters in the reduced model $\mathcal{Q}_{\Lambda}$.
        
        \begin{table}[!ht]
          \centering
          \begin{tabular}{c @{\hspace{7mm}} cccc}
            \toprule
            & $\alpha_{11}$ & $\alpha_{12}$ & $\alpha_{21}$ & $\alpha_{22}$ \\
            \toprule
            & $9.26\times10^{-4}$ (0.00) & 0.00 (0.08)& $2.59\times10^{-2}$ (0.00)& $8.30\times10^{-1}$ (0.00)\\
            
          \end{tabular}
          \caption{$5\%$-empirical quantile (in bracket, proportion of null estimations) of each parameter of the interaction matrix $\hat \alpha$ for the neuronal data.}
          \label{tab:full_model_quantile_realdata}
    \end{table}

    Figure~\ref{fig:real_data_application} displays the estimations obtained within the full model $\mathcal{Q}$ and the reduced model $\mathcal{Q}_\Lambda$ where the coefficient $\hat \alpha_{12} = 0$ is set to $0$. We observe that the noise $\lambda_0$ is estimated to a non-null value in almost all subsamples, which illustrates the interest of modeling the noise for neuronal data. 
    We can comment on the fact that we detect self-excitating behaviours for both neurons ($\alpha_{11}$ and $\alpha_{22}$ being positive), but also one cross-interaction, indicating that Neuron $1$ has an exciting effect on Neuron $2$. This supports the interest of modeling neuronal activity as a multivariate process rather than individually for each neuron. The large values for $\beta$ also indicate that the interactions from the past vanish quickly. However, these results must be considered cautiously due to both the large variances of the estimators and the very small values of some of the coefficients $\alpha$, which makes the support identification very difficult. This could be explained by some obvious limitations of the model for detecting neuronal interactions, in particular the lack of modeling for inhibiting interactions.
    We believe it still motivates the interest of further research for developing appropriate modeling of noised data.
        
        \begin{figure}[!ht] 

		    \centering
		    \includegraphics[width=\linewidth]{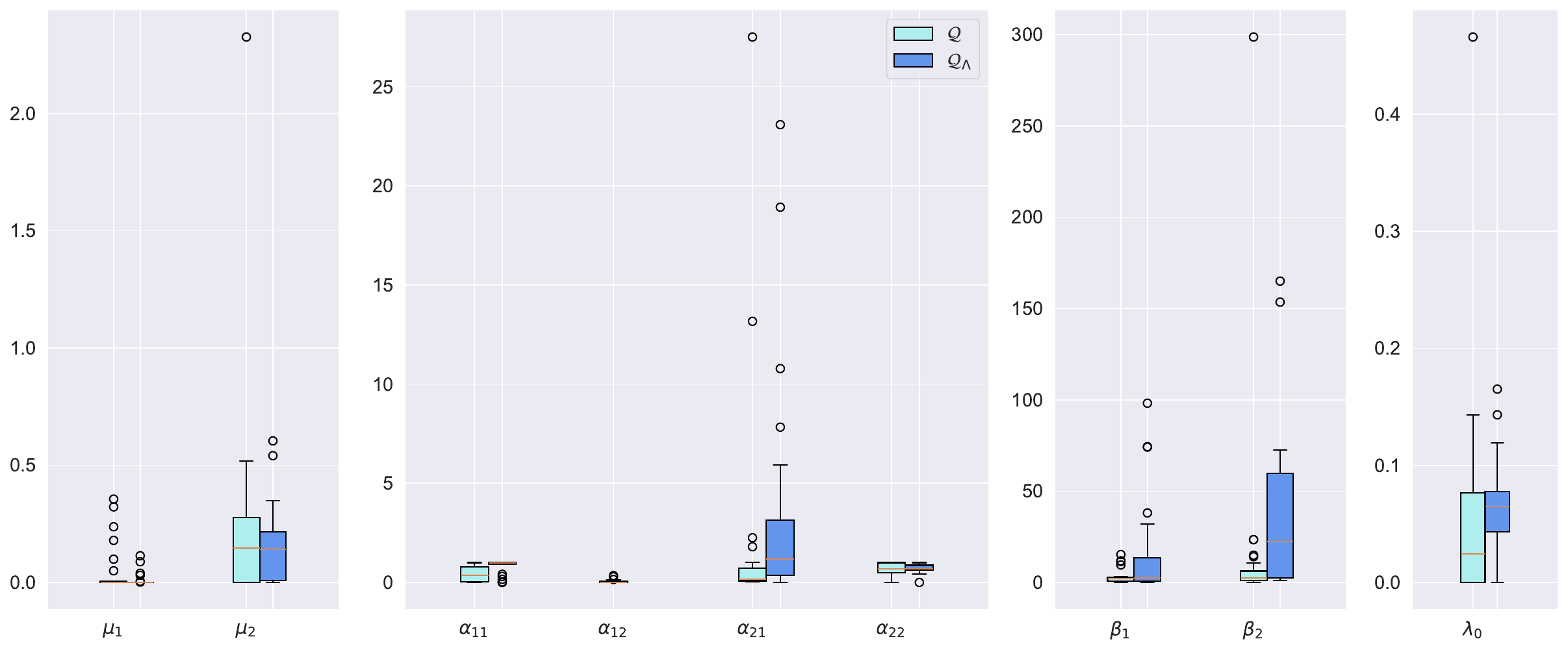}
		    \caption{Boxplots of estimated parameters for $25$ subsamples on neuronal activation data. The coefficients are estimated in the full model $\mathcal{Q}$, then re-estimated in the reduced model $\mathcal{Q}_\Lambda$ after the detection support step.}
		    \label{fig:real_data_application}
        \end{figure}

		\section{Discussion}

In this paper, we propose a spectral approach to estimate the parameters of a noisy Hawkes process, the performance of which is illustrated on an extended numerical study. Although we highlight the great benefit of considering a spectral analysis when standard inference methods are not available, we also bring out  identifiability issues that may arise, either from the model itself or from the spectral approach. While we exhibit several identifiable and non-identifiable scenarios in both univariate and bivariate contexts, a general result on identifiability is still to be established, in particular in higher dimensions. For this purpose, the number of non-null cross-interactions and the choice of the kernel functions appear to be key elements in order to obtain identifiability guarantees.

More generally, we believe that the spectral analysis can provide efficient estimators in many frameworks of inaccurately or partially observed data. A natural extension of this work is to consider alternative forms for the noise process, for instance an heterogeneous Poisson process. Another topic of interest is to investigate other mechanisms for the noise when some points are randomly missing. This would be complementary to our work since it would allow to model both false positive and false negative occurrences. In practice, this could be of great interest for applications, especially for the tracking of epidemics. Finally, let us mention that this paper focuses on the linear Hawkes model, which excludes notably inhibition phenomena. Though the nonlinear framework is particularly interesting for many applications, for instance in neuroscience, all the spectral theory for point processes only exists in a linear context so that we believe that developing a spectral inference procedure for nonlinear processes would be very challenging and remains a widely open topic.

\section*{Acknowledgments}

This work is partially supported by the French Agence Nationale de la Recherche (ANR),
project under reference ANR-23-CE40-0007.



\bibliography{bibliography}

\appendix

\section{Proof of Proposition \ref{PROP:SUM_OF_SPECTRAL_DENSITIES}}\label{appendix:proof_sum_spectra}
          
          \begin{lemma}\label{lemma:sum_second_order_moment}
            Let $X$ and $Y$ be two independent and stationary multivariate point processes with same dimension $d$.
            If they admit second order moment measures, denoted respectively $(M_{ij}^{X})_{1 \le i, j \le d}$ and $(M_{ij}^{Y})_{1 \le i, j \le d}$,
            then the process $N = X + Y$ also admits a second order moment measure,
            noted $(M_{ij}^N)_{1 \le i, j \le d}$.
            Moreover, for any pair $(A, B) \in (\mathcal{B}_{\RR}^c)^2$ and all $1 \le i, j \le d$:
	        
	        \begin{equation}\label{eq:relation_second_moment}
		        M_{ij}^N(A, B) = M_{ij}^{X}(A, B) + M_{ij}^{Y}(A, B) + (m_i^X m_j^Y + m_j^X m_i^Y)\ell_\RR(A)\ell_\RR(B)\,,
	        \end{equation}
	        where for all $i \in \{1, \dots, d\}$ and $m_i^X = \EE[X_i([0, 1))]$ and $m_i^Y = \EE[Y_i([0, 1))]$.
            Furthermore, the reduced measure of $N$ reads:

                \begin{equation}\label{eq:relation_reduced_moment}
		        \breve M_{ij}^N(B) = \breve M_{ij}^X(B) + \breve M_{ij}^Y(B)+ (m_i^X m_j^Y + m_j^X m_i^Y)\ell_{\RR}(B)\,.
	        \end{equation}
	       
          \end{lemma}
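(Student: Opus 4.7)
The plan is to compute the second order moment measure of $N$ directly from the definition $N_i = X_i + Y_i$ given in Definition~\ref{def:superposition}, leveraging the independence of $X$ and $Y$ together with their stationarity, and then to deduce the formula for the reduced moment measure by identification in the defining integral equation~\eqref{eq:reduced_moment_measure_property}.

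First, I would write
\[
  M_{ij}^N(A, B) = \EE[N_i(A) N_j(B)] = \EE\bigl[(X_i(A) + Y_i(A))(X_j(B) + Y_j(B))\bigr]
\]
and expand the product into four terms. The two ``diagonal'' terms yield $M_{ij}^X(A,B) + M_{ij}^Y(A,B)$ by definition (and this also ensures that $N$ itself admits a second order moment measure). For the two cross terms $\EE[X_i(A) Y_j(B)]$ and $\EE[Y_i(A) X_j(B)]$, the independence of $X$ and $Y$ allows to factor each expectation into a product, and the stationarity of each process gives $\EE[X_i(A)] = m_i^X \ell_\RR(A)$ and similarly for $Y$. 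Regrouping yields exactly the cross contribution $(m_i^X m_j^Y + m_j^X m_i^Y) \ell_\RR(A) \ell_\RR(B)$, which establishes \eqref{eq:relation_second_moment}.

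For the reduced measure, I would plug the just-obtained decomposition of $M_{ij}^N$ into the defining identity~\eqref{eq:reduced_moment_measure_property} tested against a generic bounded measurable function $g : \RR^2 \to \RR$ with bounded support. The contributions of $M_{ij}^X$ and $M_{ij}^Y$ produce by definition the integrals against $\breve M_{ij}^X(\dd u)$ and $\breve M_{ij}^Y(\dd u)$ respectively. For the remaining product-measure piece $(m_i^X m_j^Y + m_j^X m_i^Y)\, \ell_\RR \otimes \ell_\RR$, the change of variables $y = x + u$ immediately rewrites $\int_{\RR^2} g(x,y)\,\dd x\,\dd y$ as $\int_\RR\int_\RR g(x, x+u)\, \ell_\RR(\dd x)\,\ell_\RR(\dd u)$, so that this piece contributes precisely $(m_i^X m_j^Y + m_j^X m_i^Y) \ell_\RR$ to the reduced measure. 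Summing the three contributions and invoking the uniqueness of the reduced measure of $M_{ij}^N$ yields \eqref{eq:relation_reduced_moment}.

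The main obstacle I anticipate is a bookkeeping one rather than a conceptual one: ensuring that the three reduced measures I produce really add up to the (unique) reduced measure $\breve M_{ij}^N$ associated with $M_{ij}^N$. This follows once one observes that \eqref{eq:reduced_moment_measure_property} holds simultaneously for the three candidates (the two originals plus $\ell_\RR$ in the product-measure piece), so that their sum satisfies the defining identity for the sum $M_{ij}^N$; the uniqueness then comes from the fact that \eqref{eq:reduced_moment_measure_property} characterises the reduced measure when tested on a sufficiently rich class of functions $g$, which can be extended to indicator functions of bounded Borel sets by a standard monotone class argument.
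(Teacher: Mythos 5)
Your proposal is correct and follows essentially the same route as the paper: expand $\EE[N_i(A)N_j(B)]$ into four terms, factor the cross terms by independence and stationarity, and then pass to the reduced measure via the defining identity~\eqref{eq:reduced_moment_measure_property}. The only (minor) difference is in the last step: where you test against a generic $g$ and invoke uniqueness of the reduced measure, the paper simply evaluates both sides of~\eqref{eq:reduced_moment_measure_property} at the single test function $g(x,y) = \II_{x \in [0,1]}\II_{y-x\in B}$, which returns $\breve M_{ij}^N(B)$ directly and sidesteps the uniqueness discussion.
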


          \begin{proof}

          Let $(A,B)\in (\mathcal{B}_{\RR}^c)^2$. Then for any $1 \leq i,j \leq d$:
		\begin{align*}
			M_{ij}^N(A, B) &= \EE[N_i(A)N_j(B)] \\
			&= \EE[(X_i(A) + Y_i(A))(X_j(B) + Y_j(B))]\\
			&= \EE[X_i(A)X_j(B)] + \EE[X_i(A)Y_j(B)] + \EE[X_j(A)Y_i(B)] + \EE[Y_j(A)Y_j(B)]\\
			&= M_{ij}^X(A,B) + \EE[X_i(A)]\EE[Y_j(B)] + \EE[X_j(A)]\EE[Y_i(B)] + M_{ij}^Y(A,B)\\
            &= M_{ij}^X(A,B) + m_i^X m_j^Y\ell(A)\ell(B) + m_j^X m_i^Y\ell(A)\ell(B) + M_{ij}^Y(A,B)\,,
		\end{align*}
		where the last line comes from the stationarity, which implies that $\EE[X_i(A)] = m_i^X \ell(A)$.
         
            By applying Equation~\eqref{eq:reduced_moment_measure_property} to the function $g(x,y) = \II_{x \in [0,1]}\II_{y-x\in B}$ for any $B\in\mathcal{B}_\RR^c$, we can remark that:
		\begin{equation*}
			\int_{\RR^2}{g(x, y) \, M_{ij}^N(\mathrm{d}x, \mathrm{d}y)} = \int_{[0,1]}{\ell_{\RR}(\mathrm{d}x)}\int_{B}{\breve M_{ij}^N(\mathrm{d}u)} = \breve M_{ij}^N(B)\,.
		\end{equation*}
		In particular this equality is also true if we replace $N$ by $X$ and $Y$. 
            By leveraging Equation~\eqref{eq:relation_second_moment} on the left-side integral, we obtain:
		\begin{align*}
			\int_{\RR^2}{g(x, y) \, M_{ij}^N(\mathrm{d}x, \mathrm{d}y)} &= \breve M_{ij}^{X}(B) +  \breve M_{ij}^{Y}(B) + (m_i^X m_j^Y + m_j^X m_i^Y)\int_{x\in[0,1]}{\int_{y\in B + x}{\ell_\RR(\mathrm{d}y)} \, \ell_\RR(\mathrm{d}x)}  \\
			&= \breve M_{ij}^X(B) + \breve M_{ij}^Y(B)+ (m_i^X m_j^Y + m_j^X m_i^Y)\ell_{\RR}([0,1])\ell_{\RR}(B) \\
			&= \breve M_{ij}^X(B) + \breve M_{ij}^Y(B)+ (m_i^X m_j^Y + m_j^X m_i^Y)\ell_{\RR}(B) \,,
		\end{align*}
		which achieves the proof.

        \end{proof}
				
  \subsubsection*{Proof of Proposition~\ref{PROP:SUM_OF_SPECTRAL_DENSITIES}}
		 By definition of the spectral density, for all $\nu\in\RR$:
		 
		\begin{align}
            f_{ij}^N(\nu) = &\int_\RR \mathrm{e}^{-2\pi \iu x \nu} \, \breve M_{ij}^N(\mathrm{d}x) - m_i^N m_j^N \delta(\nu)\nonumber \\
            = &\int_\RR \mathrm{e}^{-2\pi \iu x \nu} \, \breve M_{ij}^X(\mathrm d x) + \int_\RR \mathrm{e}^{-2\pi \iu x \nu} \, \breve M_{ij}^Y(\mathrm d x)\nonumber  \\
            &+ (m_i^X m_j^Y + m_j^X m_i^Y) \int_\RR \mathrm{e}^{-2\pi \iu x \nu} \, \ell_{\RR}(\mathrm{d}x) - (m_i^X + m_i^Y)(m_j^X + m_j^Y) \delta(\nu)\nonumber  \\
            = &\int_\RR \mathrm{e}^{-2\pi \iu x \nu} \, \breve M_{ij}^X(\mathrm d x) - m_i^X m_j^X \delta(\nu) + \int_\RR \mathrm{e}^{-2\pi \iu x \nu} \, \breve M_{ij}^Y(\mathrm d x) - m_i^Y m_j^Y \delta(\nu) \nonumber \\
            &+ (m_i^X m_j^Y + m_j^X m_i^Y) \int_\RR \mathrm{e}^{-2\pi \iu x \nu} \, \ell_{\RR}(\mathrm{d}x) - (m_i^X m_j^Y + m_i^Y m_j^X) \delta(\nu) \nonumber \\
            = &f_{ij}^X(\nu) + f_{ij}^Y(\nu) + (m_i^X m_j^Y + m_j^X m_i^Y) \int_\RR \mathrm{e}^{-2\pi \iu x \nu} \,\mathrm{d}x - (m_i^X m_j^Y + m_i^Y m_j^X) \delta(\nu)\,.\label{eq:dirac_difference} 
          \end{align}
                    
          By properties of the Dirac measure, for all $\nu\in\RR$:
          \[\int_\RR \mathrm{e}^{-2\pi \iu x \nu} \delta(x) \,\mathrm{d}x = 1\,,\]
          and so by duality of the Fourier transform \cite[Proposition 5.2.4.]{Pinsky2008} it follows that:
          \[\int_\RR \mathrm{e}^{-2\pi \iu x \nu} \,\mathrm{d}x  = \delta(\nu)\,.\]

          The last two terms of Equation~\eqref{eq:dirac_difference} being equal, we obtain $f_{ij}^N = f_{ij}^X + f_{ij}^Y$.

\section{Proof of Proposition~\ref{PROPOSITION:FIXED_UNIVARIATE_IDENTIFIABILITY}}\label{appendix:identifiability_univariate}  
  We first show that equality of two spectral densities with different parameters is equivalent to a system of equations (Equations~\eqref{eq:system_1d}).
  This result will be used then to prove both parts of Proposition~\ref{PROPOSITION:FIXED_UNIVARIATE_IDENTIFIABILITY}.

  Let $(\mu, \alpha, \beta, \lambda_0)$ and $(\mu', \alpha', \beta', \lambda_0')$ be two admissible $4$-tuples for the exponential noisy Hawkes model. 
	        Let us assume that for all $\nu\in\RR$:
	        \[
	        		 f_{(\mu, \alpha, \beta, \lambda_0)}^N(\nu) = f_{(\mu', \alpha', \beta', \lambda_0')}^N(\nu).
	        \]
	        Thanks to Equation~\eqref{eq:exponential_spectral_density}, this equality implies the following system of equations:

                \begin{numcases}{}
                        \frac{\mu}{1-\alpha} + \lambda_0 = \frac{\mu'}{1-\alpha'} + \lambda_0' &\text{$(\nu\to +\infty)$}\nonumber\\
                        \frac{\mu}{1-\alpha} \frac{\beta^2 \alpha(2-\alpha)}{\beta^2(1-\alpha)^2} +\frac{\mu}{1-\alpha} + \lambda_0= \frac{\mu'}{1-\alpha'} \frac{{\beta'}^2 \alpha' (2-\alpha')}{{\beta'}^2(1-\alpha')^2} +\frac{\mu'}{1-\alpha'} + \lambda_0' &\text{$(\nu = 0)$}\nonumber\\
                        \frac{\mu}{1-\alpha} \frac{\beta^2 \alpha (2-\alpha)}{\beta^2(1-\alpha)^2 + 4\pi^2} +\frac{\mu}{1-\alpha} + \lambda_0 = \frac{\mu'}{1-\alpha'} \frac{{\beta'}^2\alpha'  (2-\alpha')}{{\beta'}^2(1-\alpha')^2 + 4\pi^2}+\frac{\mu'}{1-\alpha'} + \lambda_0' &\text{$(\nu = 1)$}.\nonumber
                \end{numcases}
                
                The first equality can be used to simplify the second and third equalities,
                leading to:
                
		      \begin{numcases}{}
                    \frac{\mu}{1-\alpha} + \lambda_0 = \frac{\mu'}{1-\alpha'} + \lambda_0' \nonumber\\
			 \frac{\mu}{1-\alpha} \frac{\alpha (2-\alpha)}{(1-\alpha)^2} = \frac{\mu'}{1-\alpha'} \frac{\alpha' (2-\alpha')}{(1-\alpha')^2}\nonumber\\
		         \frac{\mu}{1-\alpha} \frac{\beta^2 \alpha (2-\alpha)}{\beta^2(1-\alpha)^2 + 4\pi^2}= \frac{\mu'}{1-\alpha'} \frac{{\beta'}^2 \alpha' (2-\alpha')}{{\beta'}^2(1-\alpha')^2 + 4\pi^2}\,. \nonumber
		      \end{numcases}
		      
		      Given that $\alpha\in(0,1)$ (same for $\alpha'$), the last two equalities lead to:
		      
		      \[ \frac{\beta^2(1-\alpha)^2}{\beta^2 (1-\alpha)^2 + 4\pi^2} = \frac{{\beta'}^2(1-\alpha')^2}{{\beta'}^2 (1-\alpha')^2 + 4\pi^2}\,,\]
		      which in turn implies $\beta(1-\alpha) = \beta' (1-\alpha')$.
		      
		      Thus, it results the following system of equations:
		      
		       \begin{subnumcases}{\label{eq:system_1d}}
                    \frac{\mu}{1-\alpha} + \lambda_0 = \frac{\mu'}{1-\alpha'} + \lambda_0' \label{eq:system_1} \\
			 \frac{\mu \alpha (2-\alpha)}{(1-\alpha)^3} = \frac{\mu' \alpha' (2-\alpha')}{(1-\alpha')^3}\label{eq:system_2} \\
		         \beta(1-\alpha) = \beta' (1-\alpha') \label{eq:system_3}\,.
		      \end{subnumcases}
		      
		      Now, Equations~\eqref{eq:system_1d} lead to
		      \begin{numcases}{}
	          \frac{\mu}{1-\alpha} + \lambda_0 = \frac{\mu'}{1-\alpha'} + \lambda_0' \nonumber\\
		        \frac{\mu}{1-\alpha} \beta^2 \alpha (2-\alpha) = \frac{\mu'}{1-\alpha'} {\beta'}^2\alpha' (2-\alpha') \nonumber\\
	          \beta(1-\alpha) = \beta' (1-\alpha') \,, \nonumber
		      \end{numcases}
		      which, by Equation~\eqref{eq:exponential_spectral_density}, implies straightforwardly $f_{(\mu, \alpha, \beta, \lambda_0)}^N = f_{(\mu', \alpha', \beta', \lambda_0')}^N$.
		      Consequently,
		      \[
		        f_{(\mu, \alpha, \beta, \lambda_0)}^N = f_{(\mu', \alpha', \beta', \lambda_0')}^N
		        \iff
		        \text{Equations~}\eqref{eq:system_1d} \,.
		      \]
		      
  \subsubsection*{The model $\mathcal{Q}$ is not identifiable.}
              	Let $\tau> -\lambda_0$ and $\lambda_0' = \lambda_0 + \tau > 0$.
          
          		Then, by denoting $\kappa = \frac{\mu}{1-\alpha} \beta^2\alpha (2-\alpha)$, Equations~\eqref{eq:system_1d} are equivalent to:
          		\begin{equation}\label{eq:system_non_identifiable_1d}
	          	\begin{cases}
            		\mu' = \frac{\beta(1-\alpha) (\frac{\mu}{1-\alpha}-\tau)}{\sqrt{\beta^2(1-\alpha)^2 + \frac{\kappa}{\frac{\mu}{1-\alpha}-\tau}}} \\
            		\alpha' = 1 - \frac{\beta(1 - \alpha)}{\sqrt{\beta^2(1-\alpha)^2 + \frac{\kappa}{\frac{\mu}{1-\alpha}-\tau}}} \\
            		\beta' = \sqrt{\beta^2(1-\alpha)^2 + \frac{\kappa}{\frac{\mu}{1-\alpha}-\tau}}\,.
	          	\end{cases}
          		\end{equation}
			From \eqref{eq:system_non_identifiable_1d}, it is clear that for all $\tau\in \left(-\lambda_0, \frac{\mu}{1-\alpha} \right)\setminus\{0\}$,
          		$(\mu', \alpha', \beta', \lambda_0 + \tau) \neq (\mu, \alpha, \beta, \lambda_0) $ is an admissible parameter for $\mathcal Q$ and
          		$f_{(\mu', \alpha', \beta', \lambda_0 + \tau)}^N = f_{(\mu, \alpha, \beta, \lambda_0)}^N$.
          		Consequently, $\mathcal Q$ is not identifiable.

  \subsubsection*{The reduced model defined by a triplet of admissible parameters is identifiable.}
                		
		It will be shown that, for admissible parameters, $f_{(\mu, \alpha, \beta, \lambda_0)}^N = f_{(\mu', \alpha', \beta', \lambda_0')}^N$ implies \[\mu = \mu' \iff \alpha = \alpha' \iff \beta = \beta' \iff \lambda_0 = \lambda_0'\,,\]
		from which we can deduce identifiability of the four collections of models mentioned in Proposition~\ref{PROPOSITION:FIXED_UNIVARIATE_IDENTIFIABILITY}.
		
		So, let us assume that $f_{(\mu, \alpha, \beta, \lambda_0)} = f_{(\mu', \alpha', \beta', \lambda_0')}$,
		for some admissible parameters.
		As shown beforehand this implies Equations~\eqref{eq:system_1d}.
		From Equation~\eqref{eq:system_3}, we establish that \[\alpha = \alpha' \iff \beta = \beta'\,,\] since, $\beta > 0$ and $\alpha < 1$.
		
		From Equation~\eqref{eq:system_2}, since $\alpha > 0$, it is clear that $\alpha = \alpha' \implies \mu = \mu'$.  Conversely, if $\mu=\mu' > 0$, Equation~\eqref{eq:system_2} becomes:
		\[g(\alpha) =  \frac{\alpha (2-\alpha)}{(1-\alpha)^3} = \frac{\alpha' (2-\alpha')}{(1-\alpha')^3}  = g(\alpha')\,.\]
		In particular, $g$ is a strictly increasing function on $(0,1)$ and so it can be deduced that $\alpha = \alpha'$. So \[\mu = \mu' \iff \alpha = \alpha'\,.\]
		
		Finally, as $\mu = \mu' \iff \alpha = \alpha'$, from Equation~\eqref{eq:system_1} we conclude that $\alpha = \alpha' \implies \lambda_0 = \lambda_0'$. 
		Let us now assume that $\lambda_0 = \lambda_0'$, Equation~\eqref{eq:system_1} then reads \[\frac{\mu}{1-\alpha} = \frac{\mu'}{1-\alpha'}\,.\]
		As $\mu > 0$ and $\mu' > 0$, Equation~\eqref{eq:system_2} can be then reduced to \[s(\alpha) := \frac{\alpha(2-\alpha)}{(1-\alpha)^2} = \frac{\alpha(2-\alpha)}{(1-\alpha)^2} = s(\alpha')\,,\]
		where $s$ is a strictly increasing function on $(0,1)$ and so $\alpha = \alpha'$. With this we have proved that $\alpha = \alpha' \iff \lambda_0 = \lambda_0'$ which achieves the proof.

  \section{Proof of Proposition~\ref{PROP:STATN_IDENTIFIABILITY}}
    \label{app:felix1}

  Let $\mathcal G_t = \mathcal H_t \vee \mathcal P_t$ denote the $\sigma$-algebra generated by $\mathcal H_t$ and $\mathcal P_t$.
  Note that this filtration is different from the natural filtration $\mathcal F_t$ of the superposed process $(N(t))_{t\ge0}$ as it yields information about the process, $H$ or $P$, which has generated each point.
  Let us now consider the conditional survival function of the first non-negative jump $\tau_1$ of $N$ given the past (before $0$):
	for all $t \ge 0$,
	\begin{align*}
		\mathbb P \left(\tau_1 > t \,\big|\, \mathcal G_0 \right)
		&= \mathbb P \left(N(t) = 0 \,\big|\, \mathcal G_0 \right)\\
		&= \mathbb P \left(P(t) = 0, H(t) = 0 \,\big|\, \mathcal G_0 \right)\\
		&= \mathbb P \left(P(t) = 0 \,\big|\, \mathcal P_0 \right) \mathbb P \left( H(t) = 0 \,\big|\, \mathcal H_0 \right) & (\text{by independence})\\
		&= \mathbb P \left(P(t) = 0 \right) \mathbb P \left( H(t) = 0 \,\big|\, \mathcal H_0 \right) & (\text{by definition})\\
		&= \mathrm e^{-\lambda_0 t} e^{- \mu \left( t - u_t \right) - u_t \lambda^H(0)} \,,
	\end{align*}
        with $u_t = {(1 - \mathrm e^{-\beta t})}{\beta^{-1}}$ and
	where we have used that,
	given that $H(t) = 0$,
	for all $u \in [0, t]$,
	\[
	  \lambda^H (u) = \mu + \left( \lambda^H(0) - \mu \right) \mathrm e^{-\beta u}.
	\]
	Let us remark that the last equality can also be deduced from \cite[Corollary~3.3]{Dassios2011} with the same notation $u_t$:
	\[
	  \mathbb P \left( H(t) = 0 \,\big|\, \mathcal H_0 \right) = \exp \left( - \int_0^{u_t} \frac{\mu\beta v}{1 - \beta v} \, \dd v \right) \mathrm e^{-u_t \lambda^H(0)} \,.
	\]

	It appears that $\mathbb P \left(\tau_1 > t \,\big|\, \mathcal G_0 \right)$ depends on the past only through $\lambda^H(0)$.
	But $\lambda^H(0) = \lambda^N(0) - \lambda_0$ and $\lambda^N(0)$ is distributed according to the stationary distribution of $\left( \lambda^N(t) \right)_{t \ge 0}$.
	It results that $\lambda^H(0)$ is distributed according to the stationary distribution of $\left( \lambda^H(t) \right)_{t \ge 0}$, the Laplace transform of which is given by \cite[Corollary 3.1]{Dassios2011}.
	Plugging in this result, we have:
	\begin{align*}
		\mathbb P \left(\tau_1 > t \right)
		&= \mathrm e^{-\lambda_0 t - \mu \left( t - u_t \right)} \mathbb E \left[ \mathrm e^{- u_t \lambda^H(0)} \right]\\
		&= \mathrm e^{-\lambda_0 t - \mu \left( t - u_t \right)} \exp \left( - \int_0^{u_t} \frac{\mu\beta v}{\beta v + \mathrm e^{-\alpha\beta v} - 1} \dd v \right) \,.
	\end{align*}	
	For ease of derivation, let $G : t \in \RR_{\ge 0} \mapsto - \log \mathbb P \left(\tau_1 > t \right)$.
	The function $G$ is differentiable and has derivative $D$ given by:
	\[
	  D(t) = \lambda_0 + \mu \left( 1 - \mathrm e^{-\beta t} \right) + \frac{\mu \left( 1 - \mathrm e^{-\beta t} \right) \mathrm e^{-\beta t}}{\exp \left(-\alpha \left( 1 - \mathrm e^{-\beta t} \right) \right) - \mathrm e^{-\beta t}} \,.
	\]
	
	Now define in the same manner $\tau_1'$, $G'$, and $D'$ from the process $N'$ with parameter $(\mu', \alpha', \beta', \lambda_0')$, with $\lambda^{N'}(0)$ being distributed according to the stationary distribution of $\bigl( \lambda^{N'}(t) \bigr)_{t \ge 0}$,
	and assume that $N$ and $N'$ have the same distribution.
	Then it follows that both $\tau_1$ and $\tau_1'$ have the same distribution, so that $G(t) = G'(t)$ for all $t \ge 0$.
	Since $G$ and $G'$ are everywhere differentiable, it results that $D(t) = D'(t)$ for all $t \ge 0$.
	
	We want to establish a system of four equations satisfied by the parameters that leads to the equality of the 4-tuples.
	First, noting that $\lim_{t \to \infty} D(t) = \lambda_0 + \mu$, we get 
	\begin{equation}\label{eq:statN_eq_1}
	\lambda_0 + \mu = \lambda_0' + \mu' \,.
	\end{equation}
	Then, since $\lim_{t \to 0} D(t) = \lambda_0 + \mu / (1 - \alpha)$, we get, using Equation~\eqref{eq:statN_eq_1},
	\begin{equation}\label{eq:statN_eq_2}
	\frac{\mu\alpha}{1 - \alpha} = \frac{\mu'\alpha'}{1 - \alpha'} \,.
	\end{equation}
	
	To highlight two other equations on the parameters, we establish the Taylor expansion of $D(t)$ around $t = 0$ up to order 2.
	After some calculation, we find that 
	\begin{equation*}
		D(t) = \lambda_0 + \frac{\mu}{1-\alpha} + \frac{\mu\alpha}{1-\alpha} \left( \frac{1}{2} \beta t \frac{\alpha-2}{1-\alpha} + \frac{1}{12} \beta^2 t^2 \frac{\alpha^3 - \alpha^2 - 3\alpha + 6}{(1-\alpha)^2} + o\bigl(t^2\bigr) \right) \,.
	\end{equation*}
	From the first-order term of the expansion, Equation~\eqref{eq:statN_eq_2} and the equality of $D$ and $D'$, we find that
	\begin{equation}\label{eq:statN_eq_3}
		\beta \frac{\alpha - 2}{1 - \alpha} = \beta' \frac{\alpha' - 2}{1 - \alpha'} \,.
	\end{equation}
	Then the second-order term of the expansion can be rewritten
	\begin{equation*}
		\frac{\mu\alpha}{1-\alpha} \frac{1}{12} \left( \frac{\beta(\alpha-2)}{1-\alpha} \right)^2 \frac{\alpha^3 - \alpha^2 - 3\alpha + 6}{(\alpha-2)^2} \,,
	\end{equation*}
	so that from Equations~\eqref{eq:statN_eq_2} and \eqref{eq:statN_eq_3}, we find that 
	\begin{equation}\label{eq:statN_eq_4}
		g(\alpha) = \frac{\alpha^3 - \alpha^2 - 3\alpha + 6}{(\alpha-2)^2} = \frac{\alpha'^3 - \alpha'^2 - 3\alpha' + 6}{(\alpha'-2)^2} = g(\alpha') \,.
	\end{equation}
	$\alpha \mapsto g(\alpha)$ can be shown to be strictly increasing for $\alpha \in (0, 1)$, so that Equation~\eqref{eq:statN_eq_4} yields that $\alpha = \alpha'$.
	With this remark, one can easily show from the system composed by Equations~\eqref{eq:statN_eq_1}--\eqref{eq:statN_eq_4} that the tuples $(\mu, \alpha, \beta, \lambda_0)$ and $(\mu', \alpha', \beta', \lambda_0')$ are equal.

  \section{Proof of Proposition~\ref{PROP:RECTANGLE}}
    \label{app:felix3}

\begin{lemma}\label{lemma:taylor}
	Assume all moments of $h$ are finite: $\forall n \ge 0, m_n = \int t^n h(t) \, \dd t < \infty$.
	Then
	the spectral density $f^N$ of the noisy Hawkes process $N$ has a Taylor expansion around $0$ given by:
	\begin{equation}\label{eq:taylor}
	  \forall t \in \RR: \quad
		f^N(t) = \frac{\mu}{(1-\alpha)^3} + \frac{\mu}{(1-\alpha)^3} \sum_{q \ge 1} \left( \frac{\alpha}{1 - \alpha} \sum_{n \ge 1} b_n t^{2n} \right)^q + \lambda_0 \,,
	\end{equation}
	with 
	\begin{equation}\label{eq:taylor_an_bn}
		b_n = \frac{(-1)^n (2\pi)^{2n} a_n}{(2n)!} \,, \qquad \text{and} \qquad
		a_n = 2 m_{2n} - \frac{\alpha}{1 - \alpha} \sum_{k=1}^{2n-1} \binom{2n}{k} (-1)^k m_k m_{2n-k} \,.
	\end{equation}
\end{lemma}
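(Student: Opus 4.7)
The plan is to start from Corollary~\ref{corollary:noisy_spectral_density}, which for the reparametrised kernel $\alpha h$ with $\|h\|_1 = 1$ gives
\[
  f^N(\nu) = \frac{\mu}{(1-\alpha)\,\lvert 1 - \alpha \tilde h(\nu)\rvert^2} + \lambda_0 \,.
\]
Since all moments of $h$ are finite, I would expand $\tilde h$ around $0$ via its Taylor series, using the classical identity $\tilde h^{(k)}(0) = (-2\pi \iu)^k m_k$: writing $c_k = (-2\pi \iu)^k m_k / k!$, we have $\tilde h(\nu) = \sum_{k\ge 0} c_k \nu^k$ with $c_0 = m_0 = 1$. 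Convergence on some neighbourhood of $0$ can be invoked from the bound on moments (or, if needed, treated as a formal series to be truncated at each order).

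Next I would compute $\lvert 1 - \alpha \tilde h(\nu)\rvert^2 = (1-\alpha\tilde h(\nu))(1-\alpha\tilde h(-\nu))$, using the fact that for real $h$, $\tilde h(-\nu) = \overline{\tilde h(\nu)}$. The key observation is that this product is an even function of $\nu$, so only even powers survive. A direct expansion yields
\[
  \lvert 1 - \alpha \tilde h(\nu) \rvert^2
  = (1-\alpha)^2 - 2\alpha \sum_{n \ge 1} c_{2n}\nu^{2n} + \alpha^2 \sum_{n \ge 1} \nu^{2n} \sum_{k=1}^{2n-1}(-1)^k c_k c_{2n-k} + \alpha^2 \sum_{n\ge 1}2 c_{2n}\nu^{2n} \,,
\]
where I have separated the boundary terms $k=0$ and $k=2n$ in the convolution from the inner sum. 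Collecting even-order terms, the coefficient of $\nu^{2n}$ ($n \ge 1$) reduces to $-2\alpha(1-\alpha)c_{2n} + \alpha^2 \sum_{k=1}^{2n-1}(-1)^k c_k c_{2n-k}$. Substituting $c_k c_{2n-k} = (-1)^n(2\pi)^{2n} m_k m_{2n-k}/(k!(2n-k)!)$ and factoring $\frac{(-1)^n (2\pi)^{2n}}{(2n)!}$ out of the sum using $\binom{2n}{k} = (2n)!/(k!(2n-k)!)$, I would recognise exactly $-\alpha(1-\alpha) b_n$ in the notation of Equation~\eqref{eq:taylor_an_bn}. Hence
\[
  \lvert 1 - \alpha \tilde h(\nu)\rvert^2 = (1-\alpha)^2 \left( 1 - \frac{\alpha}{1-\alpha}\sum_{n \ge 1} b_n \nu^{2n} \right) \,.
\]

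Finally, I would invert this identity through the geometric series $\frac{1}{1-x} = 1 + \sum_{q\ge 1} x^q$, valid for $\nu$ small enough so that $\bigl\lvert \frac{\alpha}{1-\alpha} \sum_{n\ge 1} b_n \nu^{2n}\bigr\rvert < 1$, and multiply by $\mu/(1-\alpha)$ to conclude that
\[
  f^N(\nu) = \frac{\mu}{(1-\alpha)^3} + \frac{\mu}{(1-\alpha)^3} \sum_{q \ge 1} \left( \frac{\alpha}{1-\alpha} \sum_{n \ge 1} b_n \nu^{2n} \right)^q + \lambda_0 \,,
\]
which is the claimed expansion.

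The main obstacle is purely bookkeeping: one must carefully track the combinatorial factors when identifying the coefficient of $\nu^{2n}$ in $\lvert 1 - \alpha \tilde h(\nu)\rvert^2$ and isolating the boundary indices of the convolution so that the sum matches the definition of $a_n$ over $k \in \{1,\ldots,2n-1\}$. A secondary point to address is the radius of convergence of the series so that all manipulations (Taylor expansion of $\tilde h$, term-by-term multiplication, geometric inversion) are justified on a common neighbourhood of $\nu = 0$; this follows from analyticity of $\tilde h$ near $0$ (all moments being finite gives in particular $\lvert c_k \rvert \le (2\pi)^k m_k / k!$, and the family of moments dictates the admissible neighbourhood).
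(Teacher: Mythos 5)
Your proposal is correct and follows essentially the same route as the paper: expand $\tilde h$ via its moment Taylor series, compute $\lvert 1 - \alpha \tilde h \rvert^2$ as the product $(1-\alpha\tilde h(\nu))(1-\alpha\tilde h(-\nu))$, collect even-order coefficients to identify $-\alpha(1-\alpha)b_n$, factor out $(1-\alpha)^2$, and invert by the geometric series. The only (shared) soft spot is that finiteness of all moments does not by itself guarantee a positive radius of convergence for $\tilde h$'s Taylor series, but your fallback to a formal/truncated expansion matches the level of rigour of the paper's own proof.
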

\begin{proof}
	Introduce the Taylor expansion of the Fourier transform of $h$:
	\begin{equation*}
	  \forall t \in \RR: \quad
		\tilde h(t) = \sum_{n \ge 0} (\iu \tau)^n \frac{m_n}{n!},
	\end{equation*}
  where $\tau = -2 \pi t$.
	Then, given that $m_0=1$,
	\begin{align*}
		\left \lvert 1 - \alpha \tilde h(t) \right \rvert^2 
		&= \left( 1 - \alpha \sum_{n \ge 0} (\iu \tau)^n \frac{m_n}{n!} \right) \left( 1 - \alpha \sum_{n \ge 0} (-\iu \tau)^n \frac{m_n}{n!} \right)\\
		&= 1 - 2 \alpha \sum_{n \ge 0} (-1)^n \tau^{2n} \frac{m_{2n}}{(2n)!} + \alpha^2 \sum_{n \ge 0} \sum_{k=0}^n (-1)^k (\iu \tau)^n \frac{m_{n-k}m_k}{(n-k)!k!}\\
		&= 1 - \alpha \sum_{n \ge 0} 2 m_{2n} \frac{(-1)^n \tau^{2n}}{(2n)!} + \alpha^2 \sum_{n \ge 0} \sum_{k=0}^{2n} (-1)^k m_k m_{2n-k} \binom{2n}{k} \frac{(-1)^n \tau^{2n}}{(2n)!}\\
		&= (1 - \alpha)^2 \left( 1 - \frac{\alpha}{1 - \alpha} \sum_{n \ge 1} a_n \frac{(-1)^n \tau^{2n}}{(2n)!} \right).
	\end{align*}
	
	Inverting this expression and taking the Taylor expansion of $x \mapsto (1 - x)^{-1}$ around 0 yields the desired result.
\end{proof}

\begin{remark}
	For $n = 1, 2$, $a_n$ is given by
	\[
	  \begin{cases}
		  a_1 &= 2 m_2 + \frac{\alpha}{1 - \alpha} 2 m_1^2 \\
		  a_2 &= 2 m_4 + \frac{\alpha}{1 - \alpha} \bigl(8 m_1 m_3 - 6 m_2^2\bigr) \,,
	  \end{cases}
	\]
	so that the Taylor expansion of $f^N$ up to order $5$ is:
	\begin{equation*}
		f^N(t) = a + c_1 t^2 + c_2 t^4 + o(t^5) \,,
	\end{equation*}
	with
	\begin{align*}
	  a &= \frac{\mu}{(1-\alpha)^3} + \lambda_0 \\
		c_1 &= 4 \frac{\mu\alpha \pi^2}{(1-\alpha)^4} \left[ - m_2 - \frac{\alpha}{1 - \alpha} m_1^2 \right]\\
		c_2 &= 16 \frac{\mu\alpha \pi^4}{(1-\alpha)^4} \left[ \frac{m_4}{12} + \frac{\alpha}{1 - \alpha} \left( \frac{m_1 m_3}{3} + \frac{3 m_2^2}{4} \right) + \left( \frac{\alpha}{1 - \alpha} \right)^2 2 m_2 m_1^2 + \left( \frac{\alpha}{1 - \alpha} \right)^3 m_1^4 \right] \,.
	\end{align*}
\end{remark}

\subsubsection*{Proof of Proposition~\ref{PROP:RECTANGLE}}
Let $(\mu, \alpha, \phi, \lambda_0)$ and $(\mu', \alpha', \phi', \lambda_0')$ be two admissible $4$-tuples for the uniform noisy Hawkes model $\mathcal R$, and assume that, for all $\nu \in \mathbb R$,
	\begin{equation*}
		f_{(\mu, \alpha, \phi, \lambda_0)}^N(\nu) = f_{(\mu', \alpha', \phi', \lambda_0')}^N(\nu)\,.
	\end{equation*}
	First, noting that $\lim_{\nu\to\infty} f_{(\mu, \alpha, \phi, \lambda_0)}^N(\nu) = \lambda_0 + \mu/(1 - \alpha)$, we get
	\begin{equation}\label{eq:idunif_eq1}
		\lambda_0 + \frac{\mu}{1 - \alpha} = \lambda_0' + \frac{\mu'}{1 - \alpha'}\,.
	\end{equation}
	Then, since $f_{(\mu, \alpha, \phi, \lambda_0)}^N(0) = \lambda_0 + \mu/(1-\alpha)^3$ and using Equation~\eqref{eq:idunif_eq1}, we get:
	\begin{equation}\label{eq:idunif_eq2}
		\frac{\mu\alpha(2-\alpha)}{(1-\alpha)^3} = \frac{\mu'\alpha'(2-\alpha')}{(1-\alpha')^3}\,.
	\end{equation}
	
	Now, since the Taylor expansions, given by Equation~\eqref{eq:taylor}, of $f_{(\mu, \alpha, \phi, \lambda_0)}^N$ and $f_{(\mu', \alpha', \phi', \lambda_0')}^N$ around 0 coincide, their respective Taylor coefficients $(c_1, c_2, \ldots)$ and $(c_1', c_2', \ldots)$ are equal.
	Plugging in the moments of the uniform distribution on $[0, \phi]$, 
	\begin{equation*}
		m_n = \frac{\phi^n}{n+1}\,, \qquad \text{for all}~n \ge 0\,,
	\end{equation*}
	the first order coefficient $c_1$ can be written
	\begin{equation*}
		c_1 = - \frac{\pi}{3} \frac{\mu\alpha(2-\alpha)}{(1-\alpha)^3} \phi^2 \frac{4-\alpha}{(1-\alpha)^2 (2-\alpha)}\,,
	\end{equation*}
	so that, with the use of Equation~\eqref{eq:idunif_eq2}, we get
	\begin{equation}\label{eq:idunif_eq3}
		\phi^2 \frac{4-\alpha}{(1-\alpha)^2 (2-\alpha)} = \phi'^2 \frac{4-\alpha'}{(1-\alpha')^2 (2-\alpha')}\,.
	\end{equation}
	Similarly, by plugging the moments in the second order coefficient $c_2$, we get 
	\begin{equation*}
		c_2 = \frac{\pi^4}{15} \frac{\mu\alpha(2-\alpha)}{(1-\alpha)^3} \left( \phi^2 \frac{4-\alpha}{(1-\alpha)^2(2-\alpha)} \right)^2 \frac{(2-\alpha)(\alpha^3 - 8\alpha^2 + 18\alpha + 4)}{(4 - \alpha)^2}\,,
	\end{equation*}
	so that, by Equations \eqref{eq:idunif_eq2} and \eqref{eq:idunif_eq3},
	\begin{equation}\label{eq:idunif_eq4}
		g(\alpha) = \frac{(2-\alpha)(\alpha^3 - 8\alpha^2 + 18\alpha + 4)}{(4 - \alpha)^2} = \frac{(2-\alpha')(\alpha'^3 - 8\alpha'^2 + 18\alpha' + 4)}{(4 - \alpha')^2} = g(\alpha')\,.
	\end{equation}
	But $\alpha \mapsto g(\alpha)$ can be shown to be strictly increasing for $\alpha \in (0, 1)$, so that Equation~\eqref{eq:idunif_eq4} yields that $\alpha = \alpha'$.
	Finally, it is easily proven from the system composed by Equations~\eqref{eq:idunif_eq1}--\eqref{eq:idunif_eq4} that the tuples $(\mu, \alpha, \beta, \lambda_0)$ and $(\mu', \alpha', \beta', \lambda_0')$ are equal.

  \section{Proof of Proposition~\ref{PROPOSITION:BI_NON_IDENTIFIABLE}}\label{appendix:bi_non_identifiable}

Let $N$ be a bivariate noisy Hawkes process parametrised by the exponential model $\mathcal{Q}_\Lambda$. We will prove that if either of the conditions in Proposition~\ref{PROPOSITION:BI_NON_IDENTIFIABLE} is fulfilled then $\mathcal{Q}_\Lambda$ is not identifiable.

\begin{itemize}

\item \textbf{Condition~\ref{hyp:bi_non_identifiable_1}:} Let 
\[\Lambda = \left\{ \begin{pmatrix} \alpha_{11} & 0 \\ 0 & \alpha_{22} \end{pmatrix}, 0 \le \alpha_{11}, \alpha_{22} < 1 \right\}\,.\]
For any admissible $\theta = (\mu, \alpha, \beta, \lambda_0)$, the spectral matrix $\mathbf{f}_{\theta}^N$ is diagonal such that, for any integer $i\in\{1,2\}$:

\[{ f_{\theta}^N }_{ii}(\nu) = \frac{\mu_i}{1-\alpha_{ii}}\left(\frac{\beta_i^2 \alpha_{ii} (2-\alpha_{ii})}{\beta_i^2 (1-\alpha_{ii})^2 + 4 \pi^2 \nu^2} \right) + \left(\frac{\mu_i}{1-\alpha_{ii}} + \lambda_0\right)\,.\] 

We will show that there exists an admissible $\theta' = (\mu', \alpha', \beta', \lambda_0')$ such that $\theta'\neq \theta$ and $\mathbf{f}_\theta^N = \mathbf{f}_{\theta'}^N$. Let $\tau > -\lambda_0$ and $\lambda_0' = \lambda_0 + \tau$.

For every $i\in\{1,2\}$,
on the one hand, if $\alpha_{ii}\neq 0$ then, as shown in the univariate case (Appendix~\ref{appendix:identifiability_univariate}),
letting $\kappa_i = \frac{\mu_i}{1-\alpha_{ii}}\beta_i^2 \alpha_{ii} (2-\alpha_{ii})$ and defining the parameters:
\begin{equation}
\begin{cases}
	\mu_i' = \frac{\beta_i(1-\alpha_{ii}) (\frac{\mu_i}{1-\alpha_{ii}}-\tau)}{\sqrt{\beta_i^2(1-\alpha_{ii})^2 + \frac{\kappa_i}{\frac{\mu_i}{1-\alpha_{ii}}-\tau}}} \\
	\alpha_{ii}' = 1 - \frac{\beta_i(1 - \alpha_{ii})}{\sqrt{\beta_i^2(1-\alpha_{ii})^2 + \frac{\kappa_i}{\frac{\mu_i}{1-\alpha_{ii}}-\tau}}} \\
	\beta_i' = \sqrt{\beta_i^2(1-\alpha_{ii})^2 + \frac{\kappa_i}{\frac{\mu_i}{1-\alpha_{ii}}-\tau}}\,,
\end{cases}
\end{equation}
leads to ${ f_{\theta}^N }_{ii} = { f_{\theta'}^N }_{ii}$.

On the other hand, if $\alpha_{ii}=0$ then,
${ f_{\theta}^N }_{ii} = \mu_i + \lambda_0$ and it is enough to consider:
\[
\begin{cases}
	\mu_i' = \mu_i - \tau \\
	\alpha_{ii}' = 0 \\
	\beta_i' = \beta_i\,,
\end{cases}
\]
to get ${ f_{\theta}^N }_{ii} = { f_{\theta'}^N }_{ii}$.

In both cases, for $\tau \in\left(-\lambda_0, \min_{1\leq i \leq 2}\left\{\frac{\mu_i}{1-\alpha_{ii}}\right\} \right)\setminus\{0\}$, we obtain an admissible parameter $\theta' \neq \theta$ and
$\mathbf{f}_\theta^N = \mathbf{f}_{\theta'}^N$.
Thus, the model is not identifiable.

\item \textbf{Condition~\ref{hyp:bi_non_identifiable_2} and \ref{hyp:bi_non_identifiable_2_bis}:} 
Without loss of generality, let 
\[\Lambda = \left\{ \begin{pmatrix} \alpha_{11} & \alpha_{12} \\ 0 & 0 \end{pmatrix}, 0 < \alpha_{11} < 1, \alpha_{12} > 0 \right\}\,.\] 

For any admissible parameter $\theta = (\mu, \alpha, \beta, \lambda_0)$ and for any $\nu\in\RR$, the spectral matrix reads: 

\begin{equation*}
	\mathbf{f}_\theta^N(\nu) = 
	\begin{bmatrix}
		{f_\theta^N}_{11}(\nu) & {f_\theta^N}_{12}(\nu)\\
		{f_\theta^N}_{21}(\nu) & {f_\theta^N}_{22}(\nu)
	\end{bmatrix}	=
	\begin{bmatrix}
		{f_\theta^N}_{11}(\nu) &  \dfrac{\mu_2 \beta_1 \alpha_{12}}{\beta_1(1-\alpha_{11}) + 2 \pi \iu \nu}\\
		 \dfrac{\mu_2 \beta_1 \alpha_{12}}{\beta_1(1-\alpha_{11}) - 2 \pi \iu \nu} & \mu_2 + \lambda_0
	\end{bmatrix}
\end{equation*}
with 
\begin{align*}
	{f_\theta^N}_{11}(\nu) = &\left(\dfrac{\mu_1}{1-\alpha_{11}} + \dfrac{\mu_2\alpha_{12}}{1-\alpha_{11}} \right)\dfrac{\beta_1^2 \alpha_{11}(2-\alpha_{11})}{\beta_1^2 (1-\alpha_{11})^2 + 4 \pi^2\nu^2} + \dfrac{\mu_2 \beta_1^2 \alpha_{12}^2}{\beta_1^2 (1-\alpha_{11})^2 + 4 \pi^2\nu^2} \\
	&+ \dfrac{\mu_1}{1-\alpha_{11}} + \dfrac{\mu_2\alpha_{12}}{1-\alpha_{11}} + \lambda_0\,.
\end{align*}

Let us introduce the following constants:

\begin{equation*}
	\begin{cases}
		A = \mu_2 + \lambda_0\\
		B = \mu_2 \beta_1 \alpha_{12}\\
		C = \beta_1 (1-\alpha_{11})\\
		D = \frac{\mu_1 + \mu_2 \alpha_{12}}{1-\alpha_{11}} + \lambda_0\\
		E = (\frac{\mu_1 + \mu_2 \alpha_{12}}{1-\alpha_{11}} ) \beta_1^2 \alpha_{11} (2-\alpha_{11}) + \mu_2 \beta_1^2 \alpha_{12}^2\,,
	\end{cases}
\end{equation*}
which allow us to rewrite the spectral matrix as:

\begin{equation}\label{eq:bi_matrix_constants}
	\mathbf{f}_\theta^N(\nu) = 
	\begin{bmatrix}
		\dfrac{E}{C^2 + 4 \pi^2 \nu^2} + D & \dfrac{B}{C + 2 \pi \iu \nu}\\
		\dfrac{B}{C - 2 \pi \iu \nu} & A
	\end{bmatrix}\,.
\end{equation}

Let $\tau\in\RR\setminus \left\{\mu_2, (\mu_1 + \mu_2 \alpha_{12})/(1-\alpha_{11}) \right\}$ and: \[\kappa_\tau = \frac{(\mu_1 + \mu_2 \alpha_{12})\alpha_{11}(2-\alpha_{11})(\mu_2 - \tau) - \tau\mu_2^2 \alpha_{12}^2(1-\alpha_{11})}{(\mu_2 - \tau)(\mu_1 + \mu_2\alpha_{12} - \tau(1-\alpha_{11}))}\,.\]

Now, consider the parameter $\theta' = (\mu', \alpha', \beta', \lambda_0')$, defined as follows:

\begin{equation}\label{eq:bi_system_constants}
	\begin{cases}
		\mu_1' = \frac{\mu_1 - \tau(1-\alpha_{11})}{\sqrt{(1-\alpha_{11})^2 + \kappa_\tau}}\\
		\mu_2' = \mu_2 - \tau\\
		\alpha_{11}' = 1 - \frac{1-\alpha_{11}}{\sqrt{(1-\alpha_{11})^2 + \kappa_\tau}}\\
		\alpha_{12}' = \frac{\mu_2 \alpha_{12}}{(\mu_2 - \tau)\sqrt{(1-\alpha_{11})^2 + \kappa_\tau}}\\
		\beta_1' = \beta_1 \sqrt{(1-\alpha_{11})^2 + \kappa_\tau}\\
		\lambda_0' = \lambda_0 + \tau\,.
	\end{cases}
\end{equation}

The goal will be to show that there exists $\tau$ such that $\theta'$ is well defined, admissible for $\mathcal Q_\Lambda$, such that $\theta\neq\theta'$ and $\mathbf{f}_\theta^N = \mathbf{f}_{\theta'}^N$.

First, in order for $\theta'$ to be an admissible parameter, let us remark that $\rho(S) = \alpha_{11}$ and so we obtain the following constrains:

\begin{equation}\label{eq:system_tau_non_identifiability}
\begin{cases}
		\mu_1' > 0\\
		\mu_2' > 0\\
		0 < \alpha_{11}' < 1\\
		\alpha_{12}' > 0\\
		\beta_1' > 0\\
		\lambda_0' >0
\end{cases}\iff
\begin{cases}
		\mu_1 - \tau(1-\alpha_{11}) > 0\\
		\mu_2 - \tau > 0\\
		\frac{1-\alpha_{11}}{\sqrt{(1-\alpha_{11})^2 + \kappa_\tau}} < 1\\
		(1-\alpha_{11})^2 + \kappa_\tau > 0\\
		\lambda_0 + \tau > 0
\end{cases}\iff
\begin{cases}
		\tau < \frac{\mu_1}{1-\alpha_{11}}\\
		\tau < \mu_2\\
		\kappa_\tau > 0\\
		\tau > -\lambda_0\,.
\end{cases}
\end{equation}

Since $\tau < \mu_1/(1-\alpha_{11}) \implies \mu_1 + \mu_2 \alpha_{12} - \tau(1-\alpha_{11}) > 0$, the third inequality becomes

\[
	 \tau < \frac{(\mu_1 + \mu_2 \alpha_{12})\alpha_{11}(2-\alpha_{11})\mu_2}{(\mu_1 + \mu_2 \alpha_{12})\alpha_{11}(2-\alpha_{11}) + \mu_2^2(1-\alpha_{11}) \alpha_{12}^2}\,.
\]
Then for 
\[\tau \in \left(-\lambda_0, \min\left\{\frac{\mu_1}{1-\alpha_{11}}, \mu_2,  \frac{(\mu_1 + \mu_2 \alpha_{12})\alpha_{11}(2-\alpha_{11})\mu_2}{(\mu_1 + \mu_2 \alpha_{12})\alpha_{11}(2-\alpha_{11}) + \mu_2^2(1-\alpha_{11}) \alpha_{12}^2} \right\} \right)\setminus\{0\}\,,\]
the right-hand side of Equations~\eqref{eq:system_tau_non_identifiability} is verified and so $\theta'$ defined by Equations~\eqref{eq:bi_system_constants} is well defined, admissible for $\mathcal Q_\Lambda$ and $\theta' \neq \theta$.

Then, we can show that:

\begin{equation*}
	\begin{cases}
		\mu_2' + \lambda_0' = \mu_2 + \lambda_0 = A\\
		 \mu_2' \beta_1' \alpha_{12}' = \mu_2 \beta_1 \alpha_{12} = B\\
		\beta_1' (1-\alpha_{11}') = \beta_1 (1-\alpha_{11}) = C\\
		\frac{\mu_1' + \mu_2' \alpha_{12}'}{1-\alpha_{11}'} + \lambda_0' = \frac{\mu_1 + \mu_2 \alpha_{12}}{1-\alpha_{11}} + \lambda_0 = D\\
		\left(\frac{\mu_1' + \mu_2' \alpha_{12}'}{1-\alpha_{11}'} \right) {\beta_1'}^2 \alpha'_{11} (2-\alpha'_{11}) + \mu'_2 {\beta_1'}^2 {\alpha_{12}'}^2 = \left(\frac{\mu_1 + \mu_2 \alpha_{12}}{1-\alpha_{11}} \right) \beta_1^2 \alpha_{11} (2-\alpha_{11}) + \mu_2 \beta_1^2 \alpha_{12}^2 = E\,.
	\end{cases}
\end{equation*}

This assures that, for all $\nu \in \RR$:

\[
	\mathbf{f}_{\theta'}^N(\nu) = \begin{bmatrix}
		\dfrac{E}{C^2 + 4 \pi^2 \nu^2} + D & \dfrac{B}{C + 2 \pi \iu \nu}\\
		\dfrac{B}{C - 2 \pi \iu \nu} & A
	\end{bmatrix} 
	= \mathbf{f}_{\theta}^N(\nu) \,,
\]
Thus, the model is not identifiable.
\end{itemize}

\section{Proof of Proposition~\ref{PROPOSITION:BI_IDENTIFIABLE}, Situations~\ref{HYP:BI_IDENTIFIABLE_1} and \ref{HYP:BI_IDENTIFIABLE_1_BIS}}\label{appendix:bi_identifiable_1}

Without loss of generality, let 
\[\Lambda = \left\{ \begin{pmatrix} \alpha_{11} & 0 \\ \alpha_{21} & 0 \end{pmatrix}, 0 \le \alpha_{11} < 1, \alpha_{21} > 0 \right\}\,.\]
Then, for any admissible parameter $\theta = (\mu, \alpha, \beta, \lambda_0)$ and all $\nu\in\RR$:
\[\begin{dcases}
  {f_\theta^N}_{11}(\nu) = \frac{\mu_1}{1-\alpha_{11}}\frac{\beta_1^2 + 4 \pi^2 \nu^2}{\beta_1^2(1-\alpha_{11})^2 + 4\pi^2\nu^2} + \lambda_0\\
  {f_\theta^N}_{12}(\nu) = \frac{\mu_1}{1-\alpha_{11}}\frac{\beta_1^2 + 4 \pi^2 \nu^2}{\beta_1^2(1-\alpha_{11})^2 + 4\pi^2\nu^2} \frac{\alpha_{21}\beta_2}{\beta_2 - 2\pi \iu \nu}\\
  {f_\theta^N}_{22}(\nu) = \frac{\mu_1}{1-\alpha_{11}}\frac{\beta_1^2 + 4 \pi^2 \nu^2}{\beta_1^2(1-\alpha_{11})^2 + 4\pi^2\nu^2} \frac{\alpha_{21}^2\beta_2^2}{\beta_2^2 + 4\pi^2\nu^2} + \mu_2 + \frac{\mu_1 \alpha_{21}}{1-\alpha_{11}} + \lambda_0\,,
\end{dcases}\]
and ${f_\theta^N}_{21}(\nu) = \overline{{f_\theta^N}_{12}(\nu)}$.

These expressions can be reformulated as follows:

\[\begin{dcases}
  {f_\theta^N}_{11}(\nu) = m_1^H \frac{1}{\lvert 1 - \tilde h_{\theta \, 11}(\nu)\rvert^2} + \lambda_0\\
  {f_\theta^N}_{12}(\nu) = ({f_\theta^N}_{11}(\nu) - \lambda_0)\tilde h_{\theta \, 21}(-\nu)\\
  {f_\theta^N}_{22}(\nu) = \frac{\lvert {f_\theta^N}_{12}(\nu)\rvert^2}{{f_\theta^N}_{11}(\nu) - \lambda_0} + \lim_{\nu' \to +\infty}{f_\theta^N}_{22}(\nu')\,.
\end{dcases}\]
where $m_1^H = {\mu_1}/{(1 - \alpha_{11})}$, $m_2^H =  \mu_2 + \mu_1 \alpha_{21} / ( 1 - \alpha_{11} )$ and $\lim_{\nu' \to +\infty}{f_\theta^N}_{22}(\nu') = m_2^H + \lambda_0$.

Let $\theta' = (\mu', \alpha', \beta', \lambda_0')$ be an admissible parameter such that $\mathbf{f}_{\theta}^N = \mathbf{f}_{\theta'}^N$. For $\nu = 0$, it comes
${f_\theta^N}_{22}(0) = {f_{\theta'}^N}_{22}(0)$, which implies that
\[
	\frac{\lvert {f_\theta^N}_{12}(0)\rvert^2}{{f_\theta^N}_{11}(0) - \lambda_0} = \frac{\lvert {f_{\theta'}^N}_{12}(0)\rvert^2}{{f_{\theta'}^N}_{11}(0) - \lambda_0'}\,.
\]
Since $\lvert {f_\theta^N}_{12}(0)\rvert^2 = {\mu_1 \alpha_{21}}{(1-\alpha_{11})^{-3}}\neq 0$ (as $\alpha_{11} < 1$ and $\alpha_{21} > 0$),
${f_\theta^N}_{12}(0) = {f_{\theta'}^N}_{12}(0)$
and ${f_\theta^N}_{11}(0) = {f_{\theta'}^N}_{11}(0)$,
it results that: \[\lambda_0 = \lambda_0'\,.\]

Now, for all $\nu \in \RR$, since ${f_\theta^N}_{12}(\nu) = {f_{\theta'}^N}_{12}(\nu)$, it comes $({f_\theta^N}_{11}(\nu) - \lambda_0)\tilde h_{\theta \, 21}(-\nu) = ({f_{\theta'}^N}_{11}(\nu) - \lambda_0')\tilde h_{\theta', 21}(-\nu)$, then $\tilde h_{\theta \, 21}(\nu) = \tilde h_{\theta', 21}(\nu)$.
By Equation~\eqref{eq:fourier_exponential}, it results that: 
\[
  \frac{\alpha_{21} \beta_2}{\beta_2 + 2 \pi \iu \nu} = \frac{\alpha_{21}' \beta_2'}{\beta_2' + 2 \pi \iu \nu}\,.
\]

For $\nu = 0$, this simplifies to \[\alpha_{21} = \alpha_{21}'\,.\]
For $\nu = 1$, as $\alpha_{21} \neq 0$ it comes
${\beta_2}/(\beta_2 + 2 \pi \iu) = {\beta_2'}/(\beta_2' + 2 \pi \iu)$, then
\[
  \beta_2 = \beta_2' \,.
\]

Now, by considering the limits as $\nu \to \infty$, we obtain the two following equations:

\[\begin{dcases}
	\frac{\mu_1}{1-\alpha_{11}} + \lambda_0 = \frac{\mu_1'}{1-\alpha_{11}'} + \lambda_0\\
	\mu_2 + \frac{\mu_1 \alpha_{21}}{1-\alpha_{11}} + \lambda_0 = \mu_2' + \frac{\mu_1' \alpha_{21}}{1-\alpha_{11}'} + \lambda_0\,,
\end{dcases}
\]
which, using that $\mu_1 > 0$, can be simplified to:
\[
	\begin{dcases}
	\frac{\mu_1}{1-\alpha_{11}}= \frac{\mu_1'}{1-\alpha_{11}'}\\
	\mu_2 = \mu_2'\,. 
	\end{dcases}
\]

Now, as ${f_\theta^N}_{11}(0) = {\mu_1}/{(1-\alpha_{11})^3}$, we have the two following equations:

\[\begin{dcases}
	\frac{\mu_1}{1-\alpha_{11}}= \frac{\mu_1'}{1-\alpha_{11}'}\\
	\frac{\mu_1}{(1-\alpha_{11})^3} = \frac{\mu_1'}{(1-\alpha_{11}')^3} \,,
	\end{dcases}
\]
which imply
\[
  \begin{dcases}
	\mu_1 = \mu_1'\\
	\alpha_{11} = \alpha_{11}'\,.
	\end{dcases}
\]

At last, since ${f_\theta^N}_{11}(1) = {f_{\theta'}^N}_{11}(1)$ and $\mu_1 > 0$,
\[
  \frac{\beta_1^2 + 4 \pi^2}{\beta_1^2(1-\alpha_{11})^2 + 4\pi^2}
  =
  \frac{{\beta_1'}^2 + 4 \pi^2}{{\beta_1'}^2(1-\alpha_{11})^2 + 4\pi^2} \,,
\]
which implies
\[
  \alpha_{11}(2 - \alpha_{11}) \beta_1^2
  =
  \alpha_{11}(2 - \alpha_{11}) {\beta_1'}^2 \,.
\]

Either $\alpha_{11} > 0$, so the previous equation leads to
\[
  \beta_1 = \beta_1' \,,
\]
or $\alpha_{11} = 0$, so $\alpha_{11}' = \alpha_{11} = 0$ and $\beta_1 = \beta_1' = 1$ (since $\theta$ and $\theta'$ are admissible for the model) .

This proves that $\mathbf{f}_\theta^N = \mathbf{f}_{\theta'}^N \implies \theta = \theta'$, which achieves the proof.

  \section{Proof of Proposition~\ref{PROPOSITION:BI_IDENTIFIABLE}, Situations~\ref{HYP:BI_IDENTIFIABLE_2} and \ref{HYP:BI_IDENTIFIABLE_2_BIS}}\label{appendix:bi_identifiable_2}

Without loss of generality, let
\[\Lambda = \left\{ \begin{pmatrix} \alpha_{11} & 0 \\ \alpha_{21} & \alpha_{22} \end{pmatrix}, 0 < \alpha_{11} < 1, \alpha_{21} > 0 , 0 \le \alpha_{22} < 1 \right\}\,.\]
Then, for any admissible parameter $\theta = (\mu, \alpha, \beta, \lambda_0)$ and all $\nu\in\RR$:
\[\begin{dcases}
  {f_\theta^N}_{11}(\nu) = \dfrac{\mu_1}{1-\alpha_{11}} \dfrac{\beta_1^2 \alpha_{11}(2-\alpha_{11})}{\beta_1^2(1-\alpha_{11})^2 + 4 \pi^2\nu^2} + \dfrac{\mu_1}{1-\alpha_{11}} + \lambda_0\\
  {f_\theta^N}_{12}(\nu) = \left( {f_\theta^N}_{11}(\nu) - \lambda_0 \right) \dfrac{\alpha_{21} \beta_2}{\beta_2^2(1-\alpha_{22})^2 + 4\pi^2\nu^2}\left(\beta_2(1-\alpha_{22}) + 2 \pi \iu \nu\right)\\
  {f_\theta^N}_{22}(\nu) = \left( \dfrac{\mu_1 \alpha_{21}}{(1-\alpha_{22})(1-\alpha_{11})} + \dfrac{\mu_2}{1-\alpha_{22}} \right)\dfrac{\beta_2^2 + 4\pi^2\nu^2}{\beta_2^2(1-\alpha_{22})^2 + 4\pi^2\nu^2} \,+ \\
  \phantom{{f_\theta^N}_{22}(\nu) = } \dfrac{\mu_1 \alpha_{21}^2 \beta_2^2 }{1-\alpha_{11}} \dfrac{\beta_1^2 + 4\pi^2\nu^2}{(\beta_2^2(1-\alpha_{22})^2 + 4\pi^2\nu^2)(\beta_1^2(1-\alpha_{11})^2 + 4\pi^2\nu^2)} + \lambda_0\,,
\end{dcases}
\]
and ${f_\theta^N}_{21}(\nu) = \overline{{f_\theta^N}_{12}(\nu)}$.

Let $\theta' = (\mu', \alpha', \beta', \lambda_0')$ be an admissible parameter such that $\mathbf{f}_{\theta}^N = \mathbf{f}_{\theta'}^N$.
We start by showing that $\lambda_0 = \lambda_0'$.
From $\mathfrak{Re} \left( {f_\theta^N}_{12}(1) \right) = \mathfrak{Re} \left({f_{\theta'}^N}_{12}(1) \right)$ and $\mathfrak{Im} \left( {f_\theta^N}_{12}(1) \right) = \mathfrak{Im} \left( {f_{\theta'}^N}_{12}(1) \right)$,
the following system of equations is established:
\begin{align}\label{eq:sys_re_im}
  \begin{cases}
    \dfrac{({f_\theta^N}_{11}(1) - \lambda_0)\alpha_{21} \beta_2}{\beta_2^2(1-\alpha_{22})^2 + 4\pi^2} \beta_2(1-\alpha_{22}) = \dfrac{({f_{\theta'}^N}_{11}(1) - \lambda_0')\alpha_{21}' \beta_2'}{{\beta_2'}^2(1-\alpha_{22}')^2 + 4\pi^2} \beta_2'(1-\alpha_{22}') \\
    \dfrac{({f_\theta^N}_{11}(1) - \lambda_0)\alpha_{21} \beta_2}{\beta_2^2(1-\alpha_{22})^2 + 4\pi^2} = \dfrac{({f_{\theta'}^N}_{11}(1) - \lambda_0')\alpha_{21}' \beta_2'}{ {\beta_2'}^2(1-\alpha_{22}')^2 + 4\pi^2}\,.
  \end{cases}
\end{align}
Since
\[
  {f_\theta^N}_{11}(1) - \lambda_0 = \frac{\mu_1}{1-\alpha_{11}}\frac{\beta_1^2 + 4 \pi^2}{\beta_1^2(1-\alpha_{11})^2 + 4\pi^2} \neq 0 \,,
\]
and $\alpha_{21} \beta_2 > 0$,
Equations~\eqref{eq:sys_re_im} imply that
\begin{align}\label{eq:E_id_1}
  \beta_2(1-\alpha_{22}) = \beta_2'(1-\alpha_{22}') \,.
\end{align}
Now, from ${f_\theta^N}_{12}(0) = \left( {f_\theta^N}_{11}(0) - \lambda_0 \right) \alpha_{21} / (1-\alpha_{22})$
and ${f_\theta^N}_{11}(0) - \lambda_0 = \mu_1 / (1-\alpha_{11})^3 \neq 0$,
it comes $\alpha_{21} = {f_\theta^N}_{12}(0) (1-\alpha_{22}) / \left( {f_\theta^N}_{11}(0) - \lambda_0 \right)$ and the second equality of Equations~\eqref{eq:sys_re_im} implies that
\begin{align}\label{eq:E_id_2}
  \frac{{f_\theta^N}_{11}(1) - \lambda_0}{{f_\theta^N}_{11}(0) - \lambda_0} \frac{{f_\theta^N}_{12}(0)\beta_2(1-\alpha_{22})}{\beta_2^2(1-\alpha_{22})^2 + 4\pi^2} = \frac{{f_{\theta'}^N}_{11}(1) - \lambda_0'}{{f_{\theta'}^N}_{11}(0) - \lambda_0'} \frac{{f_{\theta'}^N}_{12}(0){\beta_2'}(1-\alpha_{22}')}{{\beta_2'}^2(1-\alpha_{22}')^2 + 4\pi^2}\,.
\end{align}
By Equation~\eqref{eq:E_id_1}
\[
  \frac{{f_\theta^N}_{12}(0)\beta_2(1-\alpha_{22})}{\beta_2^2(1-\alpha_{22})^2 + 4\pi^2} = \frac{{f_{\theta'}^N}_{12}(0){\beta_2'}(1-\alpha_{22}')}{{\beta_2'}^2(1-\alpha_{22}')^2 + 4\pi^2}\,,
\]
and since $\alpha_{21} > 0$, ${f_\theta^N}_{12}(0) \neq 0$ and Equation~\eqref{eq:E_id_2} leads to
\[
  \frac{{f_\theta^N}_{11}(1) - \lambda_0}{{f_\theta^N}_{11}(0) - \lambda_0} = \frac{{f_{\theta'}^N}_{11}(1) - \lambda_0'}{{f_{\theta'}^N}_{11}(0) - \lambda_0'}\,.
\]

Since ${f_\theta^N}_{11}$ is strictly decreasing, ${f_\theta^N}_{11}(1)-{f_\theta^N}_{11}(0) = {f_{\theta'}^N}_{11}(1)-{f_{\theta'}^N}_{11}(0) \neq 0$, and it comes $\lambda_0 = \lambda_0'$.

Now, the expression of ${f_\theta^N}_{11}$ is similar to that of the univariate spectral density $f_\theta^N$ in Proposition~\ref{PROPOSITION:FIXED_UNIVARIATE_IDENTIFIABILITY}.
Thus, following the proof in Appendix~\ref{appendix:identifiability_univariate}, from ${f_\theta^N}_{11} = {f_{\theta'}^N}_{11}$ and $\lambda_0 = \lambda_0'$
(since $\alpha_{11} > 0$ and $\alpha_{11}'> 0$)
it comes:
\[
\begin{dcases}
  \mu_1 = \mu_1' \nonumber\\
  \alpha_{11} = \alpha_{11}'\nonumber\\
  \beta_1 = \beta_1'\nonumber\,.
\end{dcases}
\]

Next, from ${f_\theta^N}_{12}(0) = {f_{\theta'}^N}_{12}(0)$, it comes:
\begin{equation*}
    \left( {f_\theta^N}_{11}(0) - \lambda_0 \right) \frac{\alpha_{21}}{1-\alpha_{22}}
    =
    \left( {f_{\theta'}^N}_{11}(0) - \lambda_0' \right) \frac{\alpha_{21}'}{1-\alpha_{22}'} \,,
\end{equation*}
which implies, with ${f_\theta^N}_{11}(0) - \lambda_0 = {f_{\theta'}^N}_{11}(0) - \lambda_0' \neq 0$:
\begin{equation}\label{eq:bi_last_system}
  \frac{\alpha_{21}}{1-\alpha_{22}} = \frac{\alpha_{21}'}{1-\alpha_{22}'}\,.
\end{equation}

Now, from $\lim_{\nu' \to \infty} {f_\theta^N}_{22}(\nu') = \lim_{\nu' \to \infty} {f_{\theta'}^N}_{22}(\nu')$,
\begin{align*}
	\frac{\mu_1 \alpha_{21}}{(1-\alpha_{11})(1-\alpha_{22})} + \frac{\mu_2}{1-\alpha_{22}} &= \frac{\mu_1 \alpha_{21}'}{(1-\alpha_{11})(1-\alpha_{22}')} + \frac{\mu_2'}{1-\alpha_{22}'} \,,
\end{align*}
and leveraging Equation~\eqref{eq:bi_last_system}, it comes:
\begin{align}\label{eq:bi_last_equation}
  \frac{\mu_2}{1-\alpha_{22}} &= \frac{\mu_2'}{1-\alpha_{22}'}\,.
\end{align}

Moreover,
\begin{align*}
	{f_\theta^N}_{22}(0)
	&= \left( \frac{\mu_1 \alpha_{21}}{(1-\alpha_{11})(1-\alpha_{22})} + \frac{\mu_2}{1-\alpha_{22}} + \frac{\mu_1\alpha_{21}^2}{(1-\alpha_{11})^3} \right) \frac{1}{(1-\alpha_{22})^2} + \lambda_0 \\
	&= \left(
	  \frac{\mu_1}{1-\alpha_{11}} \frac{\alpha_{21}}{1-\alpha_{22}} +
	\frac{\mu_2}{1-\alpha_{22}}
	\right) \frac{1}{(1-\alpha_{22})^2} +
	\frac{\mu_1}{(1-\alpha_{11})^3} \frac{\alpha_{21}^2}{(1-\alpha_{22})^2} + \lambda_0 \,.
\end{align*}

Thus, by Equations~\eqref{eq:bi_last_system} and \eqref{eq:bi_last_equation}, we obtain from ${f_\theta^N}_{22}(0) = {f_{\theta'}^N}_{22}(0)$:
\[
  \frac{1}{(1-\alpha_{22})^2} = \frac{1}{(1-\alpha_{22}')^2} \,,
\]
which implies
\[
  \alpha_{22} = \alpha_{22}' \,.
\]

To conclude, from Equations~\eqref{eq:E_id_1}, \eqref{eq:bi_last_equation} and \eqref{eq:bi_last_system},
$\beta_2 = \beta_2'$, $\mu_2 = \mu_2'$ and $\alpha_{21} = \alpha_{21}'$.
This proves that $\mathbf{f}_\theta^N = \mathbf{f}_{\theta'}^N \implies \theta = \theta'$, which achieves the proof.

\section{Proof of Proposition~\ref{PROPOSITION:BI_IDENTIFIABLE}, final argument}\label{appendix:all_bi_identifiable}
    Let 
    \[
        \begin{array}{l}
            \Lambda_1 = \left\{ \begin{pmatrix} \alpha_{11} & 0 \\ \alpha_{21} & 0 \end{pmatrix}, 0 \le \alpha_{11} < 1, \alpha_{21} > 0 \right\} \\
            \Lambda_2 = \left\{ \begin{pmatrix} 0 & \alpha_{12} \\ 0 & \alpha_{22} \end{pmatrix}, \alpha_{12} > 0, 0 \le \alpha_{22} < 1 \right\} \\
            \Lambda_3 = \left\{ \begin{pmatrix} \alpha_{11} & 0 \\ \alpha_{21} & \alpha_{22} \end{pmatrix}, 0 < \alpha_{11} < 1, \alpha_{21} > 0 , 0 \le \alpha_{22} < 1 \right\} \\
            \Lambda_4 = \left\{ \begin{pmatrix} \alpha_{11} & \alpha_{12} \\ 0 & \alpha_{22} \end{pmatrix}, 0 \le \alpha_{11} < 1, \alpha_{12} > 0, 0 < \alpha_{22} < 1 \right\},
        \end{array}
    \]
    along with \(\mathbf{f}_{\theta}^N \in \mathcal Q_{\cup_{j=1}^4 \Lambda_j}\) and \(\mathbf{f}_{\theta'}^N \in \mathcal Q_{\cup_{j=1}^4 \Lambda_j}\).
    Without loss of generality, we focus on two particular situations, the others being similar by symmetry of all indices.
    
    \subsection{Disjoint submodels}
        If \(\mathbf{f}_{\theta}^N \in \mathcal Q_{\Lambda_1}\) and \(\mathbf{f}_{\theta'}^N \in \mathcal Q_{\Lambda_2}\),
        or \(\mathbf{f}_{\theta}^N \in \mathcal Q_{\Lambda_1}\) and \(\mathbf{f}_{\theta'}^N \in \mathcal Q_{\Lambda_4}\),
        or \(\mathbf{f}_{\theta}^N \in \mathcal Q_{\Lambda_3}\) and \(\mathbf{f}_{\theta'}^N \in \mathcal Q_{\Lambda_4}\),
        proving identifiability boils down to equating rational functions with non-trivially different degrees, which can be shown to be impossible.
        As a consequence, those situations cannot happen.
        
    \subsection{Intersecting submodels}
        Consider the pair \(\Lambda_1\)-\(\Lambda_3\).
        Then, for \(\mathbf{f}_{\theta}^N \in \mathcal Q_{\Lambda_1} \cup \mathcal Q_{\Lambda_3}\), denoting $\theta = (\mu, \alpha, \beta, \lambda_0)$,
        for all $\nu\in\RR$:
        \[\begin{dcases}
          {f_\theta^N}_{11}(\nu) = \dfrac{\mu_1}{1-\alpha_{11}} \dfrac{\beta_1^2 \alpha_{11}(2-\alpha_{11})}{\beta_1^2(1-\alpha_{11})^2 + 4 \pi^2\nu^2} + \dfrac{\mu_1}{1-\alpha_{11}} + \lambda_0\\
          {f_\theta^N}_{12}(\nu) = \left( {f_\theta^N}_{11}(\nu) - \lambda_0 \right) \dfrac{\alpha_{21} \beta_2}{\beta_2^2(1-\alpha_{22})^2 + 4\pi^2\nu^2}\left(\beta_2(1-\alpha_{22}) + 2 \pi \iu \nu\right)\\
          {f_\theta^N}_{22}(\nu) = \left( \dfrac{\mu_1 \alpha_{21}}{(1-\alpha_{22})(1-\alpha_{11})} + \dfrac{\mu_2}{1-\alpha_{22}} \right)\dfrac{\beta_2^2 + 4\pi^2\nu^2}{\beta_2^2(1-\alpha_{22})^2 + 4\pi^2\nu^2} \,+ \\
          \phantom{{f_\theta^N}_{22}(\nu) = } \dfrac{\mu_1 \alpha_{21}^2 \beta_2^2 }{1-\alpha_{11}} \dfrac{\beta_1^2 + 4\pi^2\nu^2}{(\beta_2^2(1-\alpha_{22})^2 + 4\pi^2\nu^2)(\beta_1^2(1-\alpha_{11})^2 + 4\pi^2\nu^2)} + \lambda_0\,,
        \end{dcases}
        \]
        and ${f_\theta^N}_{21}(\nu) = \overline{{f_\theta^N}_{12}(\nu)}$.
        
        Let \(\mathbf{f}_{\theta}^N \in \mathcal Q_{\Lambda_1}\), \(\mathbf{f}_{\theta'}^N \in \mathcal Q_{\Lambda_3}\) and assume that $\mathbf{f}_{\theta}^N = \mathbf{f}_{\theta'}^N$.
        Then, noting $\theta = (\mu, \alpha, \beta, \lambda_0)$ and $\theta' = (\mu', \alpha', \beta', \lambda_0')$,
        from $\mathfrak{Re} \left( {f_\theta^N}_{12}(1) \right) = \mathfrak{Re} \left({f_{\theta'}^N}_{12}(1) \right)$ and $\mathfrak{Im} \left( {f_\theta^N}_{12}(1) \right) = \mathfrak{Im} \left( {f_{\theta'}^N}_{12}(1) \right)$,
        the following system of equations is established:
        \begin{align}\label{eq:sys_re_im2}
          \begin{cases}
            \dfrac{({f_\theta^N}_{11}(1) - \lambda_0)\alpha_{21} \beta_2}{\beta_2^2 + 4\pi^2} \beta_2 = \dfrac{({f_{\theta'}^N}_{11}(1) - \lambda_0')\alpha_{21}' \beta_2'}{{\beta_2'}^2(1-\alpha_{22}')^2 + 4\pi^2} \beta_2'(1-\alpha_{22}') \\
            \dfrac{({f_\theta^N}_{11}(1) - \lambda_0)\alpha_{21} \beta_2}{\beta_2^2 + 4\pi^2} = \dfrac{({f_{\theta'}^N}_{11}(1) - \lambda_0')\alpha_{21}' \beta_2'}{ {\beta_2'}^2(1-\alpha_{22}')^2 + 4\pi^2}\,.
          \end{cases}
        \end{align}
        Since
        \[
          {f_\theta^N}_{11}(1) - \lambda_0 = \frac{\mu_1}{1-\alpha_{11}}\frac{\beta_1^2 + 4 \pi^2}{\beta_1^2(1-\alpha_{11})^2 + 4\pi^2} \neq 0 \,,
        \]
        and $\alpha_{21} \beta_2 > 0$,
        Equations~\eqref{eq:sys_re_im2} imply that
        \begin{align}\label{eq:E_id_12}
          \beta_2 = \beta_2'(1-\alpha_{22}') \,.
        \end{align}
        Now, from ${f_{\theta'}^N}_{12}(0) = \left( {f_{\theta'}^N}_{11}(0) - \lambda_0' \right) \alpha_{21}' / (1-\alpha_{22}')$
        and ${f_{\theta'}^N}_{11}(0) - \lambda_0' = \mu_1' / (1-\alpha_{11}')^3 \neq 0$,
        it comes $\alpha_{21}' = {f_{\theta'}^N}_{12}(0) (1-\alpha_{22}') / \left( {f_{\theta'}^N}_{11}(0) - \lambda_0' \right)$ (similarly \(\alpha_{21} = {f_{\theta}^N}_{12}(0) / \left( {f_{\theta}^N}_{11}(0) - \lambda_0 \right)\)) and the second equality of Equations~\eqref{eq:sys_re_im2} implies that
        \begin{align}\label{eq:E_id_22}
          \frac{{f_\theta^N}_{11}(1) - \lambda_0}{{f_\theta^N}_{11}(0) - \lambda_0} \frac{{f_\theta^N}_{12}(0)\beta_2}{\beta_2^2 + 4\pi^2} = \frac{{f_{\theta'}^N}_{11}(1) - \lambda_0'}{{f_{\theta'}^N}_{11}(0) - \lambda_0'} \frac{{f_{\theta'}^N}_{12}(0){\beta_2'}(1-\alpha_{22}')}{{\beta_2'}^2(1-\alpha_{22}')^2 + 4\pi^2}\,.
        \end{align}
        By Equation~\eqref{eq:E_id_12}
        \[
          \frac{{f_\theta^N}_{12}(0)\beta_2}{\beta_2^2 + 4\pi^2} = \frac{{f_{\theta'}^N}_{12}(0){\beta_2'}(1-\alpha_{22}')}{{\beta_2'}^2(1-\alpha_{22}')^2 + 4\pi^2}\,,
        \]
        and since $\alpha_{21} > 0$, ${f_\theta^N}_{12}(0) \neq 0$ (similarly for \({f_{\theta'}^N}\)) and Equation~\eqref{eq:E_id_22} leads to
        \[
          \frac{{f_\theta^N}_{11}(1) - \lambda_0}{{f_\theta^N}_{11}(0) - \lambda_0} = \frac{{f_{\theta'}^N}_{11}(1) - \lambda_0'}{{f_{\theta'}^N}_{11}(0) - \lambda_0'}\,.
        \]
        
        Since ${f_\theta^N}_{11}$ is strictly decreasing, ${f_\theta^N}_{11}(1)-{f_\theta^N}_{11}(0) = {f_{\theta'}^N}_{11}(1)-{f_{\theta'}^N}_{11}(0) \neq 0$, and it comes $\lambda_0 = \lambda_0'$.
        
        Now, the expression of ${f_\theta^N}_{11}$ is similar to that of the univariate spectral density $f_\theta^N$ in Proposition~3, except that \(\alpha_{11}\) may be null.
        Either \(\alpha_{11} = 0\), then \({f_{\theta'}^N}_{11} = {f_\theta^N}_{11} = \mu_1 + \lambda_0\) is constant, which is impossible since \(\mu_1' {\beta_1'}^2 \alpha_{11}'(2-\alpha_{11}') \neq 0\).
        Or \(\alpha_{11} \neq 0\), then following the proof in Appendix~B, from ${f_\theta^N}_{11} = {f_{\theta'}^N}_{11}$ and $\lambda_0 = \lambda_0'$
        (since $\alpha_{11} > 0$ and $\alpha_{11}'> 0$)
        it comes:
        \[
        \begin{dcases}
          \mu_1 = \mu_1' \nonumber\\
          \alpha_{11} = \alpha_{11}'\nonumber\\
          \beta_1 = \beta_1'\nonumber\,.
        \end{dcases}
        \]
        
        Next, from ${f_\theta^N}_{12}(0) = {f_{\theta'}^N}_{12}(0)$, it comes:
        \begin{equation*}
            \left( {f_\theta^N}_{11}(0) - \lambda_0 \right) \alpha_{21}
            =
            \left( {f_{\theta'}^N}_{11}(0) - \lambda_0' \right) \frac{\alpha_{21}'}{1-\alpha_{22}'} \,,
        \end{equation*}
        which implies, with ${f_\theta^N}_{11}(0) - \lambda_0 = {f_{\theta'}^N}_{11}(0) - \lambda_0' \neq 0$:
        \begin{equation}\label{eq:bi_last_system2}
          \alpha_{21} = \frac{\alpha_{21}'}{1-\alpha_{22}'}\,.
        \end{equation}
        
        Now, from $\lim_{\nu' \to \infty} {f_\theta^N}_{22}(\nu') = \lim_{\nu' \to \infty} {f_{\theta'}^N}_{22}(\nu')$,
        \begin{align*}
        	\frac{\mu_1 \alpha_{21}}{1-\alpha_{11}} + \mu_2 &= \frac{\mu_1 \alpha_{21}'}{(1-\alpha_{11})(1-\alpha_{22}')} + \frac{\mu_2'}{1-\alpha_{22}'} \,,
        \end{align*}
        and leveraging Equation~\eqref{eq:bi_last_system2}, it comes:
        \begin{align}\label{eq:bi_last_equation2}
          \mu_2 &= \frac{\mu_2'}{1-\alpha_{22}'}\,.
        \end{align}
        
        Moreover,
        \begin{align*}
        	{f_{\theta'}^N}_{22}(0)
        	&= \left(
        	  \frac{\mu_1'}{1-\alpha_{11}'} \frac{\alpha_{21}'}{1-\alpha_{22}'} +
        	\frac{\mu_2'}{1-\alpha_{22}'}
        	\right) \frac{1}{(1-\alpha_{22}')^2} +
        	\frac{\mu_1'}{(1-\alpha_{11}')^3} \frac{{\alpha_{21}'}^2}{(1-\alpha_{22}')^2} + \lambda_0' \,.
        \end{align*}
        Thus, equation ${f_\theta^N}_{22}(0) = {f_{\theta'}^N}_{22}(0)$ leads to:
        \[
            \left(
        	  \frac{\mu_1}{1-\alpha_{11}} \alpha_{21} +
        	\mu_2
        	\right) +
        	\frac{\mu_1}{(1-\alpha_{11})^3} \alpha_{21}^2
        	= 
        	\left(
        	  \frac{\mu_1'}{1-\alpha_{11}'} \frac{\alpha_{21}'}{1-\alpha_{22}'} +
        	\frac{\mu_2'}{1-\alpha_{22}'}
        	\right) \frac{1}{(1-\alpha_{22}')^2} +
        	\frac{\mu_1'}{(1-\alpha_{11}')^3} \frac{{\alpha_{21}'}^2}{(1-\alpha_{22}')^2} \,,
        \]
        which in turn implies:
        \[
          1 = \frac{1}{(1-\alpha_{22}')^2} \,,
        \]
        meaning that \(\alpha_{22}' = 0\).
        Thus, it comes that \((\alpha_{11}, \alpha_{21}, \alpha_{22}) = (\alpha_{11}', \alpha_{21}', \alpha_{22}')\),
        \((\beta_1, \beta_2) = (\beta_1', \beta_2')\) and \((\mu_1, \mu_2) = (\mu_1', \mu_2')\),
        \ie \(\theta = \theta'\).

\section{Additional experiment}
\label{appendix:additional_experiment}

			    In this section, we present an additional numerical experiment in order to challenge the ability of our proposed procedure for detecting the support of interactions for any set of true parameters. It consists in generating a random set of true parameters for a bivariate noisy Hawkes process,
			    the interaction matrix having different sparsity structures,
			    from which we generate a single realisation with 15000 event times. We partition the corresponding simulation window $[0,T]$ in 10 equally-sized subwindows and estimate the parameters in each subsample assuming the full model. The true parameters are generated as follows:
			    \begin{itemize}
			        \item 
			        $\beta_1, \beta_2$ and $\lambda_0$ are simulated independently such that:
			        \begin{align*}
                        \beta_1 &\sim \text{Exp}(1/2) + 0.5\\
                        \beta_2 &\sim \text{Exp}(1/2) + 0.5\\
                        \lambda_0 &\sim \text{Exp}(1/2) + 0.1\,.
			        \end{align*}
			        The minimal values (\(0.5\) for \(\beta_1\) and \(\beta_2\), and \(0.1\) for \(\lambda_0\)) are chosen such that the effects from the past in the Hawkes process can be correctly detected and that there is a minimal effect of noise in the observations.
			        \item $\mu_1$ and \(\mu_2\) are simulated independently such that:
			        \begin{align*}
                        \mu_1 &\sim \text{Exp}(1/2)\\
                        \mu_2 &\sim \text{Exp}(1/2)\,,\\
			        \end{align*}
			        and we only keep realisations such that $0.5 < \mu_1/\mu_2 < 2.0$,
			        in such a way that the subprocesses have comparable numbers of child points.
			        \item the matrix $\alpha$ is simulated along a spike-and-slab procedure:
			        for every $1 \le i,j \le 2$,
			        \begin{equation*}
                        \alpha_{ij} = X_{ij} U_{ij}\,,\quad \text{where $X_{ij}\sim \text{Exp}(1/2)$ and $U_{ij}\sim \text{Binomial}(1, 2/3)$
                        independently from the rest.}\\
			        \end{equation*}
			        This corresponds to setting each interaction parameter $\alpha_{ij}$ equal to 0 with probability $1/3$, otherwise it is simulated according to an exponential distribution.
			        Additionally, we set to 0 any interaction such that $\alpha_{ij} < 0.1$, in order to only consider effects that are fairly detectable.
			        This allows us to explore a wide variety of interaction networks in our estimations.
			    \end{itemize}

			    We carried out this experiment 100 times and used the subsampling scheme to estimate the support for each generated dataset. For each realisation, we consider that $\hat \alpha_{ij}$ is a null estimation if there is at least $30\%$ of the subsampled estimations estimated equal to $0$ (similarly to what we propose in Remark~\ref{rem:subsampling} in the paper, and in the above Table~\ref{tab:partition_quantile}). 
			    
			   Figure~\ref{fig:accuracy_noise} represents the accuracy of support estimation if we consider the simulations with a noise $\lambda_0 < \lambda_{max}$ for different levels of maximal noise $\lambda_{max}$. For example, if we consider only the simulations with a noise level $\lambda_0 < 2.8$, we reach a level of accuracy of approximately $77\%$, and this is obtained with a single subsampled simulation. Among all simulations, the support of the true parameter $\alpha$ is estimated correctly $64\%$ of the time and we can see from Figure~\ref{fig:accuracy_noise} that the accuracy of support estimation is particularly good for lower levels of noise. This additional experiment allows us to demonstrate the performance of our method for a wider range of scenarios than those considered in Section \ref{sec:bivariate_numerical_results}.

		    \begin{figure}
			    \centering
	    \includegraphics[width=.8\linewidth]{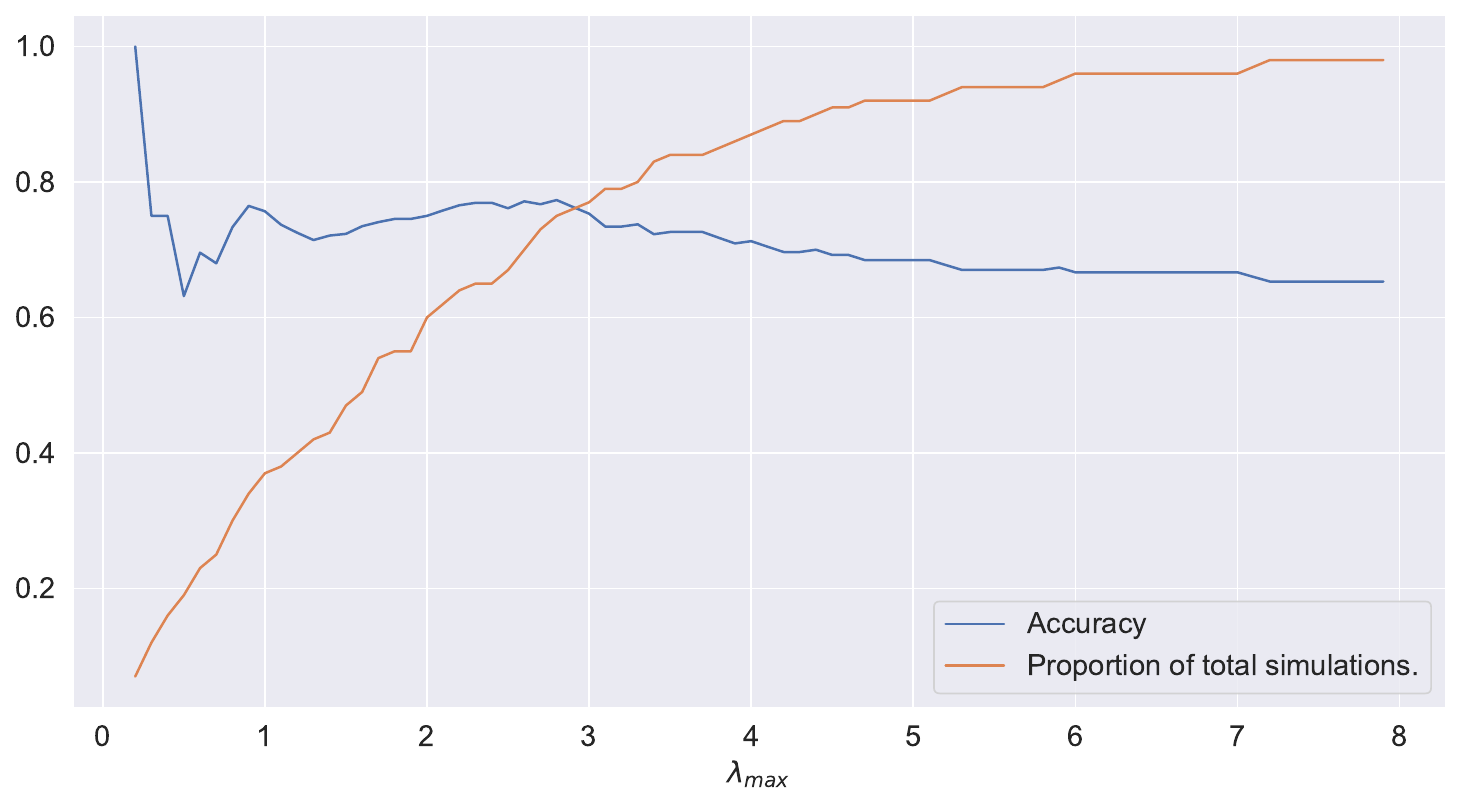}
			    \caption{In blue, accuracy of support estimation for simulations with a level of noise $\lambda_0$ less than $\lambda_{max}$. In orange, the corresponding proportion of simulations considered among the 100 experiments.}
			    \label{fig:accuracy_noise}
			    \end{figure}

\end{document}